\providecommand{\U}[1]{\protect\rule{.1in}{.1in}}
\newtheorem{theorem}{Theorem}
\newtheorem{corollary}[theorem]{Corollary}
\newtheorem{remark}[theorem]{Remark}
\newenvironment{proof}[1][Proof]{\noindent\textbf{#1.} }{\ \rule{0.5em}{0.5em}}
\let\pdfoutput=\undefined\fi
\begin{document}
\preprint{UATP/2002}
\title{A Novel Trick to Overcome the Phase Space Volume Change and the Use of
Hamiltonian Trajectories with an emphasis on the Free Expansion}
\author{P.D. Gujrati}
\email{pdg@uakron.edu}
\affiliation{Department of Physics, Department of Polymer Science, The University of Akron,
Akron, OH 44325}

\begin{abstract}
We extend and successfully apply a recently proposed microstate nonequilibrium
thermodynamics ($\mu$NEQT) to study expansion/contraction processes. Here, the
numbers of initial and final microstates $\left\{  \mathfrak{m}_{k}\right\}  $
are different so they cannot be connected by unique Hamiltonian trajectories.
This commonly happens when the phase space volume changes, and has not been
studied so far using Hamiltonian trajectories that can be inverted to yield an
identity mapping $\mathcal{T}:\mathfrak{m}_{k}(\mathbf{Z}_{\text{in}}%
^{E})\rightleftarrows\mathfrak{m}_{k}(\mathbf{Z}_{\text{fin}}^{E})$ as the
parameter $\mathbf{Z}^{E}$ in the Hamiltonian is changed. We propose a trick
to overcome this hurdle with a focus on free expansion ($P_{\text{vacuum}}=0$)
in an isolated system, where the concept of dissipated work is not clear. The
trick is shown to be thermodynamically consistent and can be extremely useful
in simulation. We justify that it is the thermodynamic average $\Delta
_{\text{i}}W\geq0$ of the internal microwork $\Delta_{\text{i}}W_{k}$ done by
$\mathfrak{m}_{k}$ that is dissipated; this microwork is different from the
exchange microwork $\Delta_{\text{e}}W$ with the vacuum, which vanishes. We
also establish that $\Delta_{\text{i}}W_{k}\geq0$ for free expansion, which is
remarkable, since its sign is not fixed in a general process. \newline

\textbf{Keywords}: Microstate irreversible thermodynamics, Dissipated work in
free expansion, Mising microstates, Internal variables, Modern fluctuation theorems.

\end{abstract}
\date{January 31, 2020}
\maketitle

\section{Introduction\label{Sec-Introduction}}

\subsection{Background \label{Sec-WhyThisStudy}}

Free (unrestricted) expansion is an undergraduate paradigm of irreversibility,
in which the exchange macrowork $\Delta_{\text{e}}W$ and macroheat
$\Delta_{\text{e}}Q$, see Fig. \ref{Fig_System}, are identically zero. It is
also accompanied by an increase in the volume $\left\vert \Gamma\right\vert $
of the phase space $\Gamma$ of the system $\Sigma$. One can study it as an
irreversible process $\mathcal{P}$ going on within an isolated system from an
initial (in) macrostate $\mathfrak{M}_{\text{in}}$ to a final (fin) macrostate
$\mathfrak{M}_{\text{fin}}$. This ensures that its energy remains constant
even if the system remains out of equilibrium (EQ) during the entire process
including the initial and final macrostates. The expansion is sudden at $t=0$,
but it takes a while ($t=\tau_{\text{eq}}>0$) for EQ\ to emerge. As is known,
a macrostate $\mathfrak{M}$ of $\Sigma$\ refers to a collection $\left\{
\mathfrak{m}_{k},E_{k},p_{k}\right\}  $ of its microstates $\mathfrak{m}_{k}$
of energies $E_{k}$\ that appear with probabilities $p_{k}$ in $\mathfrak{M}$;
the $p_{k}$'s give rise to stochasticity required for a proper thermodynamics.
We use \emph{macro}- and \emph{micro}- in this study to refer to quantities
pertaining to macrostates and microstates, respectively, with a macroquantity
referring to the \emph{thermodynamic} average of related microquantities. A
microquantity will always carry a subscript $k$ as a reminder that it is
associated with $\mathfrak{m}_{k}$. We will use $Z$ for a state variable (see
Sec. \ref{Sec-InternalVariables-Hamiltonian} for explanation), a
macrovariable, and $Z_{k}$ for its value, a microvariable, for $\mathfrak{m}%
_{k}$. \ 

The study of a particular form of the free expansion is well known at the
undergraduate level in the traditional macroscopic nonequilibrium (NEQ)
thermodynamics \cite{Gibbs,Fermi,Woods,Landau,Prigogine,Kestin} based on the
above exchange quantities; see also \cite{Eu,Jou,Ottinger,deGroot} for modern
treatment. We denote this traditional NEQ thermodynamics by \r{M}NEQT in the
following; here M stands for macroscopic and the small circle refers to the
use of the exchange quantities. The study works if and only if the initial and
final macrostates $\mathfrak{M}_{\text{in,eq}}$ and $\mathfrak{M}%
_{\text{fin,eq}}$, respectively, are in EQ so that the entropy change $\Delta
S=S_{\text{fin}}-S_{\text{in}}$ and, therefore, the net irreversible entropy
change $\Delta_{\text{i}}S=\Delta S$ over the process can be evaluated without
knowing the entire history. But the \r{M}NEQT does not provide any information
during the relaxation ($t<\tau_{\text{eq}}$) towards $\mathfrak{M}%
_{\text{fin,eq}}$ such as the irreversible entropy generation $d_{\text{i}%
}S(t)$ associated with any segment $\delta\mathcal{P}$ of the process between
intermediate NEQ macrostates $\mathfrak{M}(t)$. Thus, the use of the \r{M}NEQT
is limited in its scope.

A NEQ process $\mathcal{P}$ undergoes dissipation at all times $t<\tau
_{\text{eq}}$, and is usually described by the dissipated work $\Delta
_{\text{i}}W>0$, which in turn is directly related to $\Delta_{\text{i}}S$
over $\mathcal{P}$ under suitable conditions; see later. Here, free expansion
poses another hurdle as the common understanding is that any internal work
done by the "vacuum" (absence of matter and radiation) into which the gas
expands must be zero; see Fig. \ref{Fig_Expansion}. This makes it hard to
understand what work is being dissipated as the gas most certainly generates
irreversible entropy $\Delta_{\text{i}}S>0$. A central aspect of this
investigation is to obtain a better understanding of dissipated work
$d_{\text{i}}W$ and the source of $d_{\text{i}}S$ over $\delta\mathcal{P}$ in
an interacting and an isolated system; see Corollary
\ref{Corollary-IsolatedSystem}. This is achieved by focusing on
\emph{system-intrinsic} (SI) quantities $dZ,dZ_{k}$ (which we now allow to
also include $dW,dW_{k}$ and $dQ,dQ_{k}$, which should not be confused with
their exchange analogs $d_{\text{e}}W,d_{\text{e}}W_{k}$ and $d_{\text{e}%
}Q,d_{\text{e}}Q_{k}$; see below) that are \emph{uniquely} determined by the
system itself. They contain all the information including the one about
internal processes that we wish to understand. The exchange quantities
$d_{\text{e}}Z,d_{\text{e}}Z_{k}$ (which also include $d_{\text{e}%
}W,d_{\text{e}}W_{k}$ and $d_{\text{e}}Q,d_{\text{e}}Q_{k}$) are primarily
determined by the macrostate of the medium $\widetilde{\Sigma}$; we will refer
to them as \emph{medium-intrinsic} (MI) quantities here. They are easily
determined by focusing on the medium, which is always taken to be in EQ. Thus,
we can determine $d_{\text{i}}Z\doteq dZ-d_{\text{e}}Z,d_{\text{i}}Z_{k}\doteq
dZ_{k}-d_{\text{e}}Z_{k}$ that directly describe the irreversibility in the
system.%
\begin{figure}
[ptb]
\begin{center}
\includegraphics[
height=3.4714in,
width=3.3044in
]%
{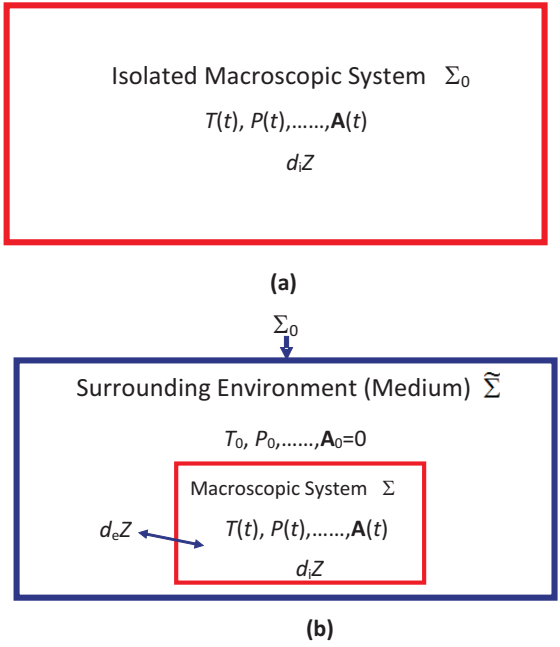}%
\caption{{}(a) An isolated nonequilibrium system $\Sigma_{0}$\ with internally
generated $d_{\text{i}}Z$ driving it towards equilibrium, during which its
SI-fields $T(t),P(t),\cdots,\mathbf{A}(t)$ associated with SI-variables
$S(t),V(t),\cdots,\mathbf{\xi}(t)$ continue to change towards their
equilibrium values; $d_{\text{i}}Z_{k}$ denote the microanalog of
$d_{\text{i}}Z$. The sign of $d_{\text{i}}Z$ is determined by the second law.
(b) A nonequilibrium systen $\Sigma$ in a surrounding medium $\widetilde
{\Sigma}$, both forming the isolated system $\Sigma_{0}$. The macrostates of
the medium and the system are characterized by their fields $T_{0}%
,P_{0},...,\mathbf{A}_{0}=0$ and $T(t),P(t),...,\mathbf{A}(t)$, respectively,
which are different when the two are out of equilibrium. Exchange quantities
($d_{\text{e}}Z$) carry a suffix "e" and irreversibly generated quantities
($d_{\text{i}}Z$) within the system by a suffix "i" by extending the Prigogine
notation. Their sum $d_{\text{e}}Z+d_{\text{i}}Z$ is denoted by $dZ$, which is
a system-intrinsic quantity (see text). In a nonequilibrium system, the
nonzero differences $F_{\text{t}}^{\text{h}}=T-T_{0}$ and $\mathbf{F}%
_{\text{t}}^{\text{w}}=(P-P_{0},\cdots,\mathbf{A})$ denote the set of
thermodynamic forces, where we have also included the affinity $\mathbf{A}%
$\ for internal variables $\boldsymbol{\xi}$; see text. A microstate
$\mathfrak{m}_{k}$\ of $\Sigma$\ is specified by appending a subscript $k$\ to
$\mathbf{F}_{\text{t}}^{\text{w}}$ so that $\mathbf{F}_{\text{t,}k}^{\text{w}%
}=(P_{k}-P_{0},\cdots,\mathbf{A}_{k})$. }%
\label{Fig_System}%
\end{center}
\end{figure}

We have recently developed a version of nonequilibrium thermodynamics (NEQT)
that is expressed in terms of only SI quantities so we have a direct access to
$d_{\text{i}}Z$ and $d_{\text{i}}Z_{k}$. It has appeared in a series of papers
\cite{Gujrati-I,Gujrati-II,Gujrati-III,Gujrati-Entropy1,Gujrati-GeneralizedWork}
covering separate aspects, and reviewed in
\cite{Gujrati-Entropy2,Guj-entropy-2018}. We have labeled it MNEQT to
distinguish it from the \r{M}NEQT. It is briefly introduced in Sec.
\ref{Sec-MNEQT}. The theory is applicable to systems that are either isolated
or in a medium within the same framework. The corresponding microstate version
of the MNEQT is called the $\mu$NEQT, with $\mu$- referring to the use of SI
microquantities. It is capable of studying expansion/contraction at the
microstate level in interacting and isolated system that has not been possible
so far as we will discuss shortly. Another reason to focus on the problem of
expansion/contraction in the $\mu$NEQT\ is due to its close connection with
Maxwell's demon and Landauer's eraser. Both versions of the theory also
involve \emph{internal variables}
\cite{Maugin,Kestin,Prigogine,deGroot,Coleman} that are required, see Sec.
\ref{Sec-InternalVariables}, to explain nonequilibrium internal processes.
Thus, they provide a very general framework of NEQT to understand a majority
of nonequilibrium processes as we will explain.

We know that the classical thermodynamics is based on the concept of work and
heat so we need to identify them in a NEQ process to make any progress. The
central concept in the MNEQT is that of the generalized SI macrowork, see Fig.
\ref{Fig_System}, $dW=P(t)dV(t)+\cdots+\mathbf{A(}t)\cdot d\mathbf{\xi(}t)$
and SI macroheat $dQ=T(t)dS(t)$, see Eq. (\ref{GeneralizedQuantities}), that
are different from the (exchange) MI macrowork $d_{\text{e}}W=P_{0}%
dV(t)+\cdots$ and MI macroheat $d_{\text{e}}Q=T_{0}d_{\text{e}}S$,
respectively, see Eq. (\ref{d-exch}), by irreversible contributions:
\begin{subequations}
\label{GeneralizedIrreversibleQuantities}%
\begin{align}
d_{\text{i}}W  &  =dW-d_{\text{e}}W\geq0,\label{GeneralizedWork}\\
d_{\text{i}}Q  &  =dQ-d_{\text{e}}Q\geq0. \label{GeneralizedHeat}%
\end{align}
The ability to directly deal with $d_{\text{i}}W$ and $d_{\text{i}}Q$ makes
the MNEQT not only perfectly suited to study isolated systems as we will do,
but also ensures that the generalized macrowork $dW$ is isentropic and that
the generalized macroheat $dQ$ satisfies the Clausius identity $dQ=TdS$, see
Eq. (\ref{GeneralizedHeat-General}), in all processes that we are interested
in here; $S$ is always the Gibbs statistical entropy \cite{Gibbs,Landau}%
\end{subequations}
\begin{equation}
S\doteq-%
{\textstyle\sum\nolimits_{k}}
p_{k}\ln p_{k}. \label{StatisticalEntropy}%
\end{equation}
\qquad

The $\mu$NEQT was first introduced a while back \cite{Gujrati-GeneralizedWork}
and applied to a few simple examples including a brief application to the
Brownian motion with a goal to compare its predictions with those from the
work fluctuation theorem (WFT) due to Jarzynski \cite{Jarzynski}; see Eq.
(\ref{JarzynskiRelation}) for its precise formulation. The importance of
\emph{microforce imbalance} $\mathbf{F}_{\text{t,}k}^{\text{w}}$, see Fig.
\ref{Fig_System} caption and later, between externally applied macroforce and
internally generated microforce was pointed out there for the first time. It
is \emph{ubiquitous} in nature
\cite{Gujrati-GeneralizedWork,Gujrati-LangevinEq} as it is always present in
all (EQ and NEQ) macrostates. The macroforce imbalance $\mathbf{F}_{\text{t}%
}=(F_{\text{t}}^{\text{h}},\mathbf{F}_{\text{t}}^{\text{w}})$ between the
fields of the system and the medium, see Fig. \ref{Fig_System} caption,
determines irreversible contribution $\left(  d_{\text{i}}Q,d_{\text{i}%
}W\right)  $ and is well defined even for an isolated system. It vanishes only
in EQ. This makes $\mathbf{F}_{\text{t}}$ and $\mathbf{F}_{\text{t,}%
k}^{\text{w}}$ central in the $\mu$NEQT, which has recently been applied
\cite{Gujrati-LangevinEq} to study the Brownian motion in full detail, where
the relative motion of the Brownian particle with respect to the medium
generates $\mathbf{F}_{\text{t,}k}^{\text{w}}$. Thus, the $\mu$NEQT is also
capable of tackling small systems like Brownian particles under NEQ\ conditions.

\subsection{Why A New Approach? \label{Sec-Prolog}}

Our goal here is to use the $\mu$NEQT and the MNEQT to study general
irreversible processes in interacting and isolated systems with emphasis on
those undergoing phase space volume change and the resulting irreversibility
at a deeper, microscopic level in terms of microstates $\left\{
\mathfrak{m}_{k}\right\}  $. The forthcoming demonstration of the success of
our approach for free expansion, which has not been studied so far, shows its
usefulness as a general theory for both interacting and isolated systems. As a
general setup, we consider an interacting NEQ system $\Sigma$ in a very large
medium $\widetilde{\Sigma}$, see Fig. \ref{Fig_System}(b). The two form an
isolated system $\Sigma_{0}=\Sigma\cup\widetilde{\Sigma}$. Quantities
pertaining to $\Sigma_{0}$ carry a suffix $0$, those pertaining to
$\widetilde{\Sigma}$ carry a tilde, and those pertaining to $\Sigma$ carry no
suffix. For example, the macroworks are $dW_{0},d\widetilde{W}$, and $dW$,
respectively. The medium, being in EQ at all times, has no irreversibility in
it so that $d_{\text{i}}\widetilde{Q}=d_{\text{i}}\widetilde{W}=0$. Because of
its large size, its temperature, pressure, etc. are the same as for
$\Sigma_{0}$ so they are denoted by $T_{0},P_{0}$, etc. as seen in Fig.
\ref{Fig_System}(b).

The \emph{thermodynamic macroforce} \cite{Prigogine,Maugin} $\mathbf{F}%
_{\text{t}}=(T-T_{0},P-P_{0},\cdots\mathbf{)}$ must be nonzero in a NEQ
macrostate\emph{ }and vanish only in an EQ macrostate, \textit{i.e.}, when
$\Sigma$ is in EQ by itself or with $\widetilde{\Sigma}$, if the latter is
present. But the microforce $\mathbf{F}_{\text{t,}k}^{\text{w}}=(P_{k}%
-P_{0},\cdots\mathbf{)}$ is \emph{ubiquitous} as noted above in \emph{all}
macrostates but independent of them, i.e., of $\left\{  p_{k}\right\}  $.
Unfortunately, as we will see, this is not always enforced in many current
microscopic approaches to NEQ thermodynamics.

Care must be exercised if the medium is not extremely large such as in Fig.
\ref{Fig_Expansion}.

Our methodology in the $\mu$NEQT will ensure that the microforces are
\emph{always} accounted for. Given $\mathbf{F}_{\text{t,}k}$, the choice of
$\left\{  p_{k}\right\}  $ determines whether $\mathbf{F}_{\text{t}}=0$ or not
so the methodology will describe thermodynamics correctly. The temporal
development of $\mathfrak{M}$ in any $\mathcal{P}$ can also be studied by
following the deterministic Hamiltonian evolution along \emph{Hamiltonian
trajectories} $\left\{  \gamma_{k}\right\}  $ of microstates $\left\{
\mathfrak{m}_{k}\right\}  $ described by its Hamiltonian $\mathcal{H}$. The
trajectories, therefore, describe deterministic evolution during which
$\left\{  p_{k}\right\}  $ does not change. As $dW$ is isentropic, the
evolution involves the performance of microworks $dW_{k}$ at fixed $p_{k}$;
see later. The stochasticity is due to microheat $dQ_{k}$\ that modifies
$p_{k}$. Thus, $dW_{k}$ and $dQ_{k}$ control different aspects of the
evolution in $\mathfrak{M}$ so their combined effect completes the stochastic
evolution in the $\mu$NEQT.

The trajectories have been recently popularized by modern fluctuation theorems
(MFTs) \cite{Seifert,Broeck}; see also \cite{Blau,Ritort0}. Among these is the
Jarzynski's WFT \cite{Jarzynski}, which is the most celebrated one for the
simple reason that the other MFTs are related to it; see for example, Ref.
\cite{Gujrati-JensenInequality}. Thus, we will comment mostly on the WFT,
commonly known as the JE, in the following, but the comments are equally valid
for other MFTs.

There are four important and independent aspects that require careful
consideration here.

(i) \textbf{Internal variables. }The importance of internal variables
\cite{Prigogine,Kestin,deGroot} and their affinities to describe NEQ
macrostates has been well documented and is an integral part of the MNEQT and
$\mu$NEQT used in this study; see also
\cite{Gujrati-GeneralizedWork,Gujrati-LangevinEq}. We will give a simple
argument for their relevance and the significance of affinities in Sec.
\ref{Sec-InternalVariables}.

(ii) \textbf{Nonequilibrium Entropy}. The MI $d_{\text{e}}Z$ alone provides no
insight into $d_{\text{i}}Z$\ during relaxation unless SI $dZ$ is also
identified. This creates a problem as $\mathfrak{M}(t)$'s denote NEQ
macrostates in general so the SI $dS$ is not known if $S$ is defined only for
EQ macrostates. Thus, we need to identify $S$ for NEQ macrostates. We have
shown that for a NEQ system that is in \emph{internal equilibrium}, the
statistical entropy given in Eq. (\ref{StatisticalEntropy}) is a state
function in an enlarged state space involving internal variables
\cite{Gujrati-I,Gujrati-II}; see Eq. (\ref{StatisticalEntropy-Explicit}). It
is then used in the MNEQT to determine the irreversible contributions
directly. We see from Eq. (\ref{diS}) that $\mathbf{F}_{\text{t}}$ is an
integral part of the MNEQT as promised. We then use the MNEQT to derive the
$\mu$NEQT.

(iii)\textbf{ Phase space volume change }$\Delta\left\vert \Gamma\right\vert
\neq0$. As the number of microstates depends on $\left\vert \Gamma\right\vert
$, there cannot be a one-to-one mapping between the sets of microstates in the
two phase spaces in a process of expansion/contraction. The same problem
arises if $d\left\vert \Gamma\right\vert /dt\neq0$ even if at the end
$\Delta\left\vert \Gamma\right\vert =0$ such as in a cyclic process.

(iv) \textbf{Dissipated work. }We need to provide a physical explanation of
the macrowork that is being dissipated in the free expansion (see Corollary
\ref{Corollary-IsolatedSystem}) and the corresponding microworks.

As interacting systems are also included in our analysis, we make a few
comments in passing about the MFTs, with special attention to the WFT, that
are derived for interacting systems and where trajectories are also exploited.
The formulation invariably uses exchange quantities $\Delta_{\text{e}}W$ and
$\Delta_{\text{e}}Q$\ directly but $\Delta_{\text{i}}W$ and $\Delta_{\text{i}%
}Q$\ are not part of the formulation. Our comments basically summarize the
results already available in the literature.

The MFTs are claimed to describe NEQ processes, because of which they have
attracted a lot of attention. However, despite being part of an highly active
field, we find that they do not provide a useful methodology for our NEQ
consideration here. There is no direct proof of for their NEQ nature that we
are aware. The only indirect proof for the WFT is through the application of
the Jensen's inequality to demonstrate its compliance with the second law
since the inequality leads to
\begin{equation}
\Delta_{\text{e}}W_{\text{J}}\leq-\Delta\overline{F},
\label{Jarzynski-Macrowork}%
\end{equation}
where $\Delta_{\text{e}}W_{\text{J}}$ ("J" for Jarzynski's formulation)\ is a
particular "average exchange" work (properly defined in Eq. (\ref{Del_e-W_J})
later) that is obtained by using the initial probability over the entire
process. It turns out to be a \emph{non-thermodynamic average}
\cite{Gujrati-JensenInequality}, and $\Delta\overline{F}$ is the difference of
the equilibrium (and, therefore, thermodynamic) Helmholtz free energies in the
process; see also comment (f) below in the subsection. The above inequality
looks very similar to the following \emph{thermodynamic} inequality involving
thermodynamic average (exchange macrowork) $R=-\Delta_{\text{e}}W$, where $R$
is the exchange work $\Delta_{\text{e}}\widetilde{W}$ done by $\widetilde
{\Sigma}$ on $\Sigma$,
\begin{equation}
R\geq\Delta\overline{F},\text{ } \label{Exchange-Macrowork}%
\end{equation}
a well-known consequence of the second law, but only if $T_{0}$ remains a
constant in the process \cite{Landau}. In the latter case, the dissipated work
defined as%
\begin{equation}
\Delta_{\text{diss}}W\doteq R-\Delta\overline{F}=T_{0}\Delta_{\text{i}}S\geq0.
\label{DissiaptedWork}%
\end{equation}
To provide an "indirect proof" that the JE is a nonequilibrium result,
Jarzynski sets without any proof that
\begin{equation}
\Delta_{\text{e}}W_{\text{J}}\overset{\text{conjecture}}{=}-R
\label{JarzynskiConjecture}%
\end{equation}
to turn Eq. (\ref{Jarzynski-Macrowork}) into Eq. (\ref{Exchange-Macrowork}).
However, as shown recently \cite{Gujrati-JensenInequality}, Jensen's
inequality applied to the MFTs does not prove compliance with the second law
inequality so $\Delta_{\text{e}}W_{\text{J}}$ in Eq.
(\ref{Jarzynski-Macrowork}) \emph{cannot} be equated with $\Delta_{\text{e}}W$
even when $T_{0}=$ $const$.

There are other concerns about the MFTs, which raise doubts about their
usefulness for our investigation. (a) They do not include any internal
variables, necessary for irreversibility; see Sec. \ref{Sec-BasicConcepts}.
(b) The external macroforce (such as the pressure $P_{0}$) is always assumed
to be equal to the macroforce (such as the pressure $P$) in the system; hence,
they implicitly assume that $d_{\text{e}}W=dW$, which results in $d_{\text{i}%
}W\equiv0$; see Eq. (\ref{diW}). This was first pointed out in Ref.
\cite{Gujrati-GeneralizedWork}. Thus, they do not include any thermodynamic
macroforce $\mathbf{F}_{\text{t}}^{\text{w}}$ necessary for $d_{\text{i}}W$
and for irreversibility \cite{Prigogine}. (c) From $d_{\text{i}}W\equiv0$
follows $d_{\text{i}}Q\equiv0$,\ see Eq. (\ref{diQ-diW}). If the temperature
of the system is always equal to $T_{0}$, \textit{i.e.}, $F_{\text{t}%
}^{\text{h}}=T-T_{0}=0$, then it follows from Eqs. (\ref{diQ}, \ref{diS}) that
$d_{\text{i}}S\equiv0$. Cohen and Mauzerall \cite{Cohen0,Cohen00} were the
first to raise concern that $T$ may not even exist for a NEQ process; see also
\cite{Jarzynski-Cohen} for counter-arguments, some of which we will discuss
later. The concern was justified as correct later by Muschik \cite{Muschik} so
$T=T_{0}$ will make the MFTs unsuitable for a NEQ process. (d) MFTs are based
on a \emph{fixed} set of classical microstates $\left\{  \mathfrak{m}%
_{k}\right\}  $ or trajectories $\left\{  \gamma_{k}\right\}  $ as the use of
Hamiltonian dynamics is consistently prevalent. Thus, their applicability is
limited to the situation $d\left\vert \Gamma\right\vert =\Delta\left\vert
\Gamma\right\vert =0$; see Sec. \ref{Sec-VoluemChange}. This was first pointed
out by Sung \cite{Sung}. Unfortunately, this limitation is not well recognized
in the field. (e) The WFT should also apply to an isolated system
\cite{JarzynskiNote}. Because $d_{\text{e}}W=0$ in this case, they do not. (f)
The free expansion in Fig. \ref{Fig_Expansion} refers to an isolated system so
the WFT should be applicable in this case \cite{KestinNote}, but does not as
first pointed out by Sung \cite{Sung}; see also \cite{Gross,Jarzynski-Gross}
for the ensuing debate. We will come back to this issue later when we discuss
free expansion. (g) In addition, the averaging in the WFT is not a
thermodynamic averaging over the process as first hinted by Cohen and
Mauzerall \cite{Cohen0}, and established rigorously recently by us
\cite{Gujrati-JensenInequality}, whereas we require a thermodynamic averaging
in our investigation.

Because of all these limitations, the MFTs are not of central interest to us
in this study except to draw attention to the differences with our approach.
Therefore, we will discuss and substantiate the above points again later in
Sec. \ref{Sec-WFT} within our theoretical framework; we focus on the WFT for simplicity.

There have been several numerical attempts to study restricted expansion in an
interacting system \cite{Lua0,Lua1,Baule,Crooks-Jarzynski,Davie,Jarzynski1}
but with a goal only to verify the WFT. Because of this, these numerical
studies are also not helpful to us for the reasons stated above.

In conclusion, it is not a surprise that we are left to exclusively use the
MNEQT and $\mu$NEQT in this study of interacting and isolated systems. We have
already applied the MNEQT to briefly study free expansion
\cite{Gujrati-Heat-Work0,Gujrati-Heat-Work}. Here, we wish to go beyond the
earlier study to demonstrate how the $\mu$NEQT can be used to study
expansion/contraction with special attention to free expansion by including
internal variables also. The $\mu$NEQT has also been recently applied
successfully to provide a \emph{thermodynamic alternative} to study Brownian
motion without using the mechanical approach involving the Langevin equation
\cite{Gujrati-LangevinEq}.\ The macroscopic friction emerges as a consequence
of the relative motion of the Brownian particle, an internal variable, with
respect to the medium. We do not need to postulate the Langevin noise term; it
emerges as a consequence of thermodynamic averaging mentioned above.

Our methodology and theory will be formulated for any arbitrary process in
both interacting and isolated systems. The theory is derived from the MNEQT so
it is always consistent with classical thermodynamics. The process will also
include expansion and contraction as special cases but the main focus will be
mostly on the spontaneous process of unrestricted, \textit{i.e., }free
expansion for the reason explained above. Whenever we study free expansion, we
will consider the gas as a closed system $\Sigma$, which is in a medium
$\widetilde{\Sigma}$ that happens to be the vacuum; see Fig.
\ref{Fig_Expansion}. Their combination forms the isolated system shown by
$\Sigma_{0}$ in Fig. \ref{Fig_System}. For the set up for free expansion, we
follow Kestin \cite{KestinNote} as we want to make the system ($\Sigma$) and
the vacuum ($\widetilde{\Sigma}$) independent. As $\widetilde{\Sigma}$ is
devoid of matter and radiation, $\Sigma_{0}$ is nothing but $\Sigma$. We can
replace the partition by a piston exerting an external pressure $P_{0}$ from
$\widetilde{\Sigma}$ for a general expansion/contraction process. For
$P_{0}<P$, the gas will expand; for $P_{0}>P$, the gas will contract. As the
piston is an insulator, $\widetilde{\Sigma}$ only acts as a working medium
$\widetilde{\Sigma}_{\text{w}}$. We need to bring in a thermal medium
$\widetilde{\Sigma}_{\text{h}}$ to bring about thermal equilibrium. Such an
expansion/contraction process is covered by the WFT \cite{Jarzynski}. As
$P_{0}\rightarrow0$, and $P>0$, we obtain the limiting case of free expansion.
Thus, free expansion is merely a limiting case of expansion/contraction in our
approach, and does not require a separate approach. In all these cases, we
require the gas particles initially to be always confined in the left
chamber.
\begin{figure}
[ptb]
\begin{center}
\includegraphics[
height=2.386in,
width=3.2353in
]%
{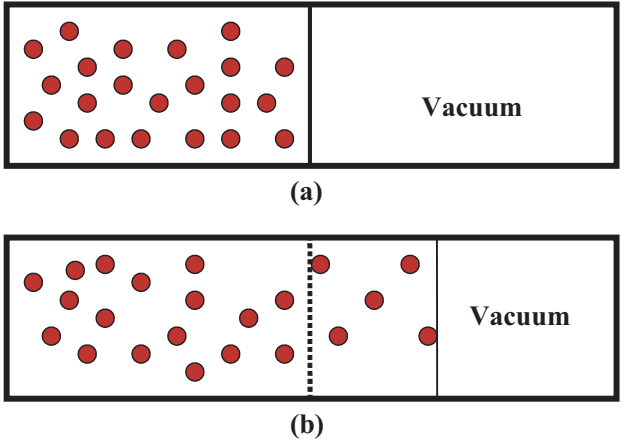}%
\caption{Free expansion of a gas. The gas is confined to the left chamber,
which is separated by a hard partition (shown by a solid black vertical line)
from the vacuum in the right chamber as shown in (a). At time $t=0$, the
partition is removed abruptly as shown by the broken line in its original
place in (b). The gas expands in the empty space, devoid of matter and
radiation, on the right but the expansion is gradual as shown by the solid
front, which separates it from the vacuum on its right. We can also think of
the hard partition in (a) as a piston, which maintains the volume of the gas
on its left. The piston can be moved slowly or rapidly to the right within the
right chamber with a pressure $P_{0}<P$ to change this volume. The free
expansion occurs when the piston moves extremely (infinitely) fast by letting
$P_{0}\rightarrow0$.}%
\label{Fig_Expansion}%
\end{center}
\end{figure}

\subsection{Layout}

We briefly review some useful new concepts in the next section. We begin the
discussion with the need for internal variables in a NEQ\ process. The nature
of the parameters in the Hamiltonian of a NEQ system is discussed after that,
which is then followed by the definition of Hamiltonian trajectories. We close
this section by introducing the central concept of internal equilibrium. We
follow this section by a brief introduction to the MNEQT in Sec.
\ref{Sec-MNEQT}. Here, we show the importance of internal variables for an
isolated system to determine the irreversible macrowork; see Theorem
\ref{Theorem-diW-diS-Isolated}. This partially answers one of the motivating
questions. This is then followed by an introduction to the $\mu$NEQT in Sec.
\ref{Sec-muNEQT}, where we introduce the concepts of various microworks and
microheats. We introduce the moment generating function in Sec. \ref{Sec-MGF0}%
, which gives all the moments including fluctuations of microworks from this
single function. We then turn to the free expansion of a classical gas and
study it by restricting to use only two internal variables in the MNEQT in
Sec. \ref{Sec-FreeExpansion-MNEQT}. We study the free expansion in the $\mu
$NEQT in the next section. We first study it in a quantum case in Sec.
\ref{Sec-QuantumExpansion} and then in the classical case in Sec.
\ref{Sec-ClassicalExpansion}, where we introduce the important trick that
allows us to consider free expansion for any arbitrary expansion. The moment
generating function is used to directly demonstrate that the trick does not
affect thermodynamics. The trick can be extended to a cyclic process during
which the phase space volume changes nonmonotonically or to restricted
expansion and contraction. A brief discussion of our results is presented in
the last section. \ 

\section{Basic Concepts\ \label{Sec-BasicConcepts}}

\subsection{Need for an Internal Variable \label{Sec-InternalVariables}}

Consider a simple example of a NEQ system of $N$ particles, each of which can
be in two levels, forming an isolated system $\Sigma$ of volume $V$. Let
$\rho_{l}$ and $e_{l}(V),l=1,2$ denote the probabilities and energies of the
two levels of a particle in a NEQ macrostate so that $\rho_{1},\rho_{2}$ keep
changing. We have $e=\rho_{1}e_{1}+\rho_{2}e_{2}$ for the average energy per
particle, which is a constant, and $d\rho_{1}+d\rho_{2}=0$ as a consequence of
$\rho_{1}+\rho_{2}=1$. Using $de=0$, we get
\[
d\rho_{1}+d\rho_{2}e_{2}/e_{1}=0,
\]
which, for $e_{1}\neq e_{2}$, is inconsistent with the second equation (unless
$d\rho_{1}=0=d\rho_{2}$, which corresponds to EQ). Thus, $e_{l}(V)$ cannot be
treated as constant in determining $de$. In other words, there must be an
extra dependence in $e_{l}$ so that%
\[
e_{1}d\rho_{1}+d\rho_{2}e_{2}+\rho_{1}de_{1}+\rho_{2}de_{2}=0,
\]
and the inconsistency is removed. This extra dependence must be due to
\emph{independent} internal variables that are not controlled from the outside
(isolated system) so they continue to relax in $\Sigma$ as it approaches EQ.
Let us imagine that there is a single internal variable $\xi$ so that we can
express $e_{l}$ as $e_{l}(V,\xi)$ in which $\xi$ continues to change as the
system comes to equilibrium. The above equation then relates $d\rho_{1}$ and
$d\xi$; they both vanish simultaneously as EQ is reached. We also see that
without any $\xi$, the isolated system cannot equilibrate.

The above discussion is easily extended to a $\Sigma$ with many energy levels
of a particle with the same conclusion that at least a single internal
variable is required to express $e_{l}=e_{l}(V,\xi)$ for each level $l$. We
can also visualize the above system in terms of microstates. A microstate
$\mathfrak{m}_{k}$ refers to a particular distribution of the $N$ particles in
any of different levels with energy $E_{k}=%
{\textstyle\sum\nolimits_{l}}
N_{l}e_{l}$, where $N_{l}$ is the number of particles in the $l$th level, and
is obviously a function of $N,V,\xi$ so we will express it as $E_{k}(N,V,\xi
)$. This makes the average energy of the system also a function of $N,V,\xi$,
which we express as $E(N,V,\xi)$.

An EQ system is uniform. Thus, the presence of $\xi$ suggests some sort of
nonuniformity in the system. To appreciate its physics, we consider a slightly
different situation below as a possible example of nonuniformity.

We consider as a simple NEQ example a composite isolated system $\Sigma$
consisting of two identical\ subsystems $\Sigma_{1}$ and $\Sigma_{2}$ of
identical volumes and numbers of particles but at different temperatures
$T_{1}$ and $T_{2}$ at any time $t\leq\tau_{\text{eq}}$ before EQ is reached
at $t=\tau_{\text{eq}}$ so the subsystems have different time-dependent
energies $E_{1}$ and $E_{2}$, respectively. We assume a diathermal wall
separating $\Sigma_{1}$ and $\Sigma_{2}$. Treating each subsystem in EQ at
each $t$, we write their entropies as $S_{1\text{eq}}(E_{1},V/2,N/2)$ and
$S_{2\text{eq}}(E_{2},V/2,N/2)$. The entropy $S$ of $\Sigma$ is a function of
$E_{1},E_{2},V$, and $N$. Obviously, $\Sigma$ is in a NEQ macrostate at each
$t<\tau_{\text{eq}}$. From $E_{1}$ and $E_{2}$, we form two independent
combinations $E=E_{1}+E_{2}=$constant and $\xi=E_{1}-E_{2}$ so that we can
express the entropy as $S(E,V,N,\xi)$. Here, $\xi$ plays the role of an
internal variable, which continues to relax\ towards zero as $\Sigma$
approaches EQ. For given $E$ and $\xi$,\ $S(E,V,N,\xi)$ has the \ maximum
possible values since both $S_{1\text{eq}}$ and $S_{2\text{eq}}$ have their
maximum value. As we will see below, this is the idea behind the concept of
internal equilibrium in which $S(E,V,N,\xi)$ is a state function of state
variables and continues to increase as $\xi$ decreases and vanishes in EQ.

We assume $\Sigma$ to be in IEQ at each $t$ in this simple example. From
$1/T=\partial S/\partial E$ and $A/T=\partial S/\partial\xi$, $A$ being the
\emph{activity} associated with $\xi$, we find that
\[
T=2T_{1}T_{2}/(T_{1}+T_{2}),A=(T_{2}-T_{1})/(T_{1}+T_{2}).
\]
As EQ is attained, $T\rightarrow T_{\text{eq}}$, the EQ temperature of both
subsystems and $A\rightarrow A_{0}=0$ as expected. We see that in this simple
example $Ad\xi/T$ is the contribution due to irreversiblity in $dS$, which
also shows that $(-Ad\xi)$ is the contribution due to irreversiblity in $dE$.

In general, the activity $\mathbf{A}$ controls the behavior of
$\boldsymbol{\xi}$ in a NEQ macrostate and vanishes when EQ is reached.\ Here,
we will take a more general view of $\mathbf{A}$, and extend its definition to
$\mathbf{X}$ also \cite{Gujrati-I}. Comparing $A$ with $F_{\text{t}}%
^{\text{h}}$, we clearly see that $F_{\text{t}}^{\text{h}}$ also plays the
role of an activity. The same reasoning also shows that $\mathbf{F}_{\text{t}%
}^{\text{w}}$ plays the role of an activity.

The example can be easily extended to the case of expansion and contraction by
replacing $E,E_{1}$, and $E_{2}$ by $N,N_{\text{L}}$, and $N_{\text{R}}$, see
Fig. \ref{Fig_Expansion}, to describe the diffusion of particles. The role of
$\beta$ and $E$, etc. are played by $\beta\mu$ and $N$, etc.

\subsection{Hamiltonian with Internal
Variables\label{Sec-InternalVariables-Hamiltonian}}

It is clear that in order to capture a NEQ process, internal variables are
\emph{necessary}. Another way to appreciate this fact is to realize that for
an isolated system, all the observables in $\mathbf{X=(}E,V,N,\cdots
\mathbf{)}$ are fixed so if the entropy is a function of $\mathbf{X}$ only, it
cannot change
\cite{Gujrati-I,Gujrati-II,Guj-entropy-2018,Gujrati-Entropy1,Gujrati-Entropy2}%
. Thus, we need additional independent variables to ensure the law of increase
of entropy for a NEQ system. An EQ macrostate is represented by a point in the
state space $\mathfrak{S}_{0}$ spanned by $\mathbf{X}$, but a NEQ macrostate
by a point in an \emph{enlarged} state space $\mathfrak{S}$ spanned by
$\mathbf{Z\doteq}(\mathbf{X,}\boldsymbol{\xi})$, where $\boldsymbol{\xi}$ is
the set formed by internal variables. Internal variables cannot be controlled
from the outside of the system; they are only controlled by the processes
within the system. On the other hand, the observables in $\mathbf{X}$\ are
controlled from the outside. We will call $\mathbf{X}$ a set of observables
and $\mathbf{Z}$ a set of state variables. In EQ, internal variables are no
longer independent of the observables. Consequently, their affinities (see
later) vanish in EQ. It is common to define the internal variables so their EQ
values vanish.

As we will be dealing with the Hamiltonian of the system, it is useful to
introduce the sets $\mathbf{X}^{E}\doteq\mathbf{X\backslash}E=(V,N,\cdots)$,
and $\mathbf{Z}^{E}\doteq\mathbf{Z\backslash}E=(V,N,\cdots,\boldsymbol{\xi
})=(\mathbf{X}^{E}\mathbf{,}\boldsymbol{\xi})$. Then, $E$ and $E_{k}$\ become
a function of $\mathbf{Z}^{E}$ as we saw in Sec. \ref{Sec-InternalVariables}.
Here, $\mathbf{Z}^{E}$ appears as a \emph{parameter} in the Hamiltonian, which
we will write as $\mathcal{H}(\left.  \mathbf{z}\right\vert \mathbf{Z}^{E})$,
where $\mathbf{z}$ is a point (collection of coordinates and momenta of the
particles) in the phase space $\Gamma(\mathbf{Z}^{E})$ specified by
$\mathbf{Z}^{E}$. As an example, $N,V$, and $\xi$ are the parameters in the
previous section. When the system moves about in the phase space
$\Gamma(\mathbf{Z}^{E})$, $\mathbf{z}$ changes but $\mathbf{Z}^{E}$ as a
parameter remains fixed in a state subspace $\mathfrak{S}^{E}\doteq
\mathfrak{S}\mathbf{\backslash}E$.

\subsection{Hamiltonian Trajectories}

Traditional formulation of statistical thermodynamics
\cite{Landau,Gibbs,Gujrati-Symmetry} takes a mechanical approach in which
$\mathfrak{m}_{k}$ follows its classical or quantum mechanical evolution
dictated by its SI Hamiltonian $\mathcal{H}(\left.  \mathbf{z}\right\vert
\mathbf{Z}^{E})$. The quantum microstates are specified by a set of good
quantum numbers, which we have denoted by $k$ above as a single quantum number
for simplicity; we take $k\in\mathbb{N},\mathbb{N}$ denoting the set of
natural numbers. We will see below that $k$\ does not change as $\mathbf{Z}%
^{E}$ changes. In the classical case, we can use a small cell $\delta
\mathbf{z}_{k}$ of size $(2\pi\hbar)^{3N}$\ around $\mathbf{z}_{k}=\mathbf{z}$
as the microstate $\mathfrak{m}_{k}$. In the rest of the work, we will keep
$N$ fixed to fix the size of the system. Therefore, from now on, $\mathbf{X}$
and $\mathbf{Z}$\ will not contain it. The Hamiltonian gives rise to a purely
mechanical evolution of individual $\mathfrak{m}_{k}$'s, which we will call
the \emph{Hamiltonian evolution}, and suffices to provide their mechanical
description. The change in $\mathcal{H}(\left.  \mathbf{z}\right\vert
\mathbf{Z}^{E})$ in a process is%
\begin{subequations}
\begin{equation}
d\mathcal{H}=\frac{\partial\mathcal{H}}{\partial\mathbf{z}}\cdot
d\mathbf{z}+\frac{\partial\mathcal{H}}{\partial\mathbf{Z}^{E}}\cdot
d\mathbf{Z}^{E}. \label{HamiltonianChange}%
\end{equation}
The first term on the right vanishes identically due to Hamilton's equations
of motion for any $\mathfrak{m}_{k}$. Thus, for fixed $\mathbf{Z}^{E}$, the
energy $E_{k}=\mathcal{H}_{k}\doteq\mathcal{H}(\left.  \mathbf{z}%
_{k}\right\vert \mathbf{Z}^{E})$ remains constant as $\mathfrak{m}_{k}$ moves
about in $\Gamma(\mathbf{Z}^{E})$. Only the variation $d\mathbf{Z}^{E}$ in
$\mathfrak{S}$ generates any change in $E_{k}$. Consequently, we do not worry
about how $\mathbf{z}_{k}$\ changes in $\mathcal{H}(\left.  \mathbf{z}%
\right\vert \mathbf{Z}^{E})$ in the phase space, and focus, instead, on the
state space $\mathfrak{S}$, in which\ can write%
\begin{equation}
dE_{k}=\frac{\partial E_{k}}{\partial\mathbf{Z}^{E}}\cdot d\mathbf{Z}%
^{E}=-dW_{k}, \label{HamiltonianChange-StateSpace}%
\end{equation}
where $dW_{k}$ denotes the\emph{ generalized microwork} produced by the
\emph{generalized microforce} $\mathbf{F}_{k}^{E}$:
\begin{equation}
dW_{k}=\mathbf{F}_{k}^{E}\cdot d\mathbf{Z}^{E},\ \mathbf{F}_{k}^{E}%
\doteq-\partial E_{k}/\partial\mathbf{Z}^{E}. \label{Generalized force-work}%
\end{equation}
We can now identify $\mathbf{Z}^{E}$ as the \emph{work parameter}, whose
variation $d\mathbf{Z}(t)\doteq(dE(t),d\mathbf{Z}^{E}(t))$ in $\mathfrak{S}$
defines not only the microworks $\left\{  dW_{k}\right\}  $, but also\ a
thermodynamic process $\mathcal{P}$. The trajectory $\gamma_{k}$ in
$\mathfrak{S}$ followed by $\mathfrak{m}_{k}$ as a function of time will be
called the \emph{Hamiltonian trajectory} during which $\mathbf{Z}^{E}$ varies
from its initial (in) value $\mathbf{Z}_{\text{in}}^{E}$\ to its final (fin)
value $\mathbf{Z}_{\text{fin}}^{E}$ during $\mathcal{P}$. The variation
produces the generalized microwork $dW_{k}$; $p_{k}$ plays no role so $dW_{k}$
is purely mechanical, which simplifies its determination in our theory. The
microwork $dW_{k}$ also does not change the index $k$ of $\mathfrak{m}_{k}$ as
said above.

Being purely mechanical in nature, a trajectory is completely
\emph{deterministic} and cannot describe the evolution of a macrostate
$\mathfrak{M}$ during $\mathcal{P}$ unless supplemented by thermodynamic
stochasticity, which requires $p_{k}(\mathfrak{M})$ as discussed above
\cite{Landau}, and is related to $dQ_{k}$ as shown later; see Eq.
(\ref{microheat}). Thermodynamics emerges when quantities pertaining to the
trajectories are averaged over the trajectory ensemble $\left\{  \gamma
_{k}\right\}  $ with appropriate probabilities that will usually change during
the process. In this sense, our approach is different from approaches using
stochastic trajectories \cite{Seifert,Broeck}, where $dE_{k}$ is identified
with the exchange microwork $dR_{k}\doteq-d_{\text{e}}W_{k}$; see Remark
\ref{Remark-IncorrectIdentification-Energy}.

The development of the $\mu$NEQT requires pursuing individual trajectory
$\gamma_{k}$ of $\mathfrak{m}_{k}(\mathbf{Z}^{E})$.\ In the following,
$\mathfrak{m}_{k}$ will usually stand for classical microstates unless
specified otherwise, and follows its deterministic trajectory $\gamma_{k}$ as
$\mathfrak{m}_{k\text{,in}}\doteq\mathfrak{m}_{k}(\mathbf{Z}_{\text{in}}^{E})$
evolves into $\mathfrak{m}_{k\text{,fin}}\doteq\mathfrak{m}_{k}(\mathbf{Z}%
_{\text{fin}}^{E})$. By reversing the change in $\mathbf{Z}^{E}$ from
$\mathbf{Z}_{\text{fin}}^{E}$ to $\mathbf{Z}_{\text{in}}^{E}$, $\mathfrak{m}%
_{k}(\mathbf{Z}_{\text{fin}}^{E})$ comes back to $\mathfrak{m}_{k}%
(\mathbf{Z}_{\text{in}}^{E})$. Thus, $\gamma_{k}$ defines an $\emph{identity}$
map $\mathcal{T}:$
\end{subequations}
\begin{equation}
\mathcal{T}:\mathfrak{m}_{k}(\mathbf{Z}_{\text{in}}^{E})\rightleftarrows
\mathfrak{m}_{k}(\mathbf{Z}_{\text{fin}}^{E}) \label{Temporal mapping}%
\end{equation}
without altering the index $k$ so it is a one-to-one ($1$-to-$1$) or identity
mapping of microstates. The two arrows mean that the mapping can be
\emph{inverted} without altering the index $k$ so it is a one-to-one
($1$-to-$1$) or identity mapping of microstates.

\subsection{Phase Space Volume Change \label{Sec-VoluemChange}}

However, Hamiltonian trajectories for classical microstates are not suitable
for processes that involve expansion and contraction in the volume $V$ and/or
other parameters in $\mathbf{Z}^{E}$ of the system with a corresponding change
in the phase space volume $\left\vert \Gamma\right\vert $. In the following,
we will think of $V$ as the varying work parameter for simplicity. Then,
during expansion, the initial volume $\left\vert \Gamma\right\vert
_{\text{in}}\doteq\left\vert \Gamma(\mathbf{Z}_{\text{in}}^{E})\right\vert $
is smaller than the final volume $\left\vert \Gamma^{\prime}\right\vert
_{\text{fin}}\doteq\left\vert \Gamma(\mathbf{Z}_{\text{fin}}^{E})\right\vert $
as shown in Fig. \ref{Fig_PhasePoint-Evolution}. This means that there are
microstates such as $\delta\mathbf{\zeta}^{\prime}$ ($\neq\delta
\mathbf{z}^{\prime}$) in $\Gamma_{\text{fin}}^{\prime}\doteq\Gamma
(\mathbf{Z}_{\text{fin}}^{E})$\ that cannot be reached from any of the
microstate $\delta\mathbf{z}$ in $\Gamma_{\text{in}}\doteq\Gamma
(\mathbf{Z}_{\text{in}}^{E})$ along Hamiltonian trajectories; the latter take
$\delta\mathbf{z}$ into $\delta\mathbf{z}^{\prime}$\ inside the broken
horizontal ellipse. The converse is true for contraction.

Note also that we are not interested in the cardinality of the initial\ and
final sets of microstates $\left\{  \mathfrak{m}_{\text{in}}\right\}  $ and
$\left\{  \mathfrak{m}_{\text{fin}}\right\}  $,respectively. We are interested
in how they map under Hamiltonian evolution; see Sec. \ref{Sec-Conclusion} for
more clarification.

It is clear that in the classical case, we require a new approach to overcome
the loss of the $1$-to-$1$ mapping if we confine ourselves to only Hamiltonian
trajectories. To the best of our knowledge, the problem of how to overcome
this hurdle of phase space volume change using Hamiltonian trajectories has
not been solved.%

\begin{figure}
[ptb]
\begin{center}
\includegraphics[
height=1.9813in,
width=3.3615in
]%
{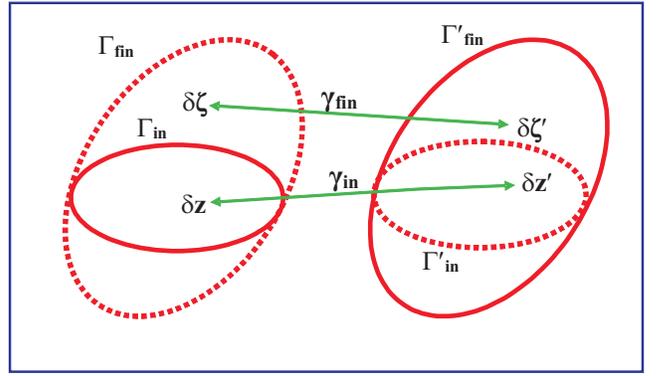}%
\caption{The evolution of a microstate $\delta\mathbf{z}\in\mathbf{\Gamma
}_{\text{in}}$ into $\delta\mathbf{z}^{\prime}\in\mathbf{\Gamma}_{\text{fin}%
}^{\prime}$\ following the variation in $\mathbf{Z}^{E}$ (green arrows). The
initial and final phase spaces are $\mathbf{\Gamma}_{\text{in}}$ and
$\mathbf{\Gamma}_{\text{fin}}^{\prime}$ shown by the interiors of the red
solid ellipses. By changing $\mathbf{Z}^{E}$ in the reverse order maps
$\delta\mathbf{z}^{\prime}$ into $\delta\mathbf{z}$ as implied by the reverse
green arrows. The microstates $\delta\mathbf{\zeta}^{\prime}$ and
$\delta\mathbf{\zeta}$, and other quantities are explained in the text.}%
\label{Fig_PhasePoint-Evolution}%
\end{center}
\end{figure}

We overcome the hurdle by introducing a novel but simple trick. The clue for
the new approach comes from considering trajectories in quantum mechanics. As
quantum microstates form a denumerable set with index $k\in\mathbb{N}$ as
$\mathbf{Z}^{E}$\ changes, there is a $1$-to-$1$ mapping \ as in Eq.
(\ref{Temporal mapping}) between $\mathfrak{m}_{k\text{,in}}$ and
$\mathfrak{m}_{k\text{,fin}}$\ during expansion and contraction, which helps
us remedy the lack of one-to-one correspondence due to volume change in the
classical case. The trick is to enlarge the smaller phase space to become
equal to the larger phase space by adding missing microstates that appear with
nonzero but vanishingly small probabilities. As the work along deterministic
trajectories in Eq. (\ref{HamiltonianChange-StateSpace}) is oblivious to their
probabilities (even though they continue to change in thermodynamics), we can
add trajectories initiating at the missing microstates to obtain a
\emph{enlarged }$1$-to-$1$\emph{ trajectory ensemble} $\left\{  \gamma
\right\}  $. At the end of the computation of ensemble averages, a formal
limit of vanishing probabilities of missing initial microstates is taken.

\subsection{Internal Equilibrium (IEQ)\label{Sec-IEQ}}

The central concept of the NEQT exploited here is that of the \emph{internal
equilibrium }(IEQ) according to which the entropy $S$ of a NEQ macrostate is a
\emph{state function} of the state variables in the enlarged state space
$\mathfrak{S}$ \cite{Gujrati-I,Gujrati-II,Gujrati-III} relative to the EQ
state space $\mathfrak{S}_{0}$ due to independent internal variables
\cite{Coleman,deGroot,Prigogine,Maugin,Gujrati-I,Gujrati-II} that are required
to describe a NEQ macrostate as explained above. In EQ, the internal variables
are no longer independent of the observables forming the space $\mathfrak{S}%
_{0}$. As a consequence, their affinities vanish in EQ. In general, the
temperature $T$ of the system in IEQ is identified in the standard manner by
the relation
\begin{equation}
1/T=\partial S/\partial E \label{IEQ-Temp}%
\end{equation}
using the fact that $S$ is state function in $\mathfrak{S}$.

An important property of IEQ macrostates is the following: It is possible in
an IEQ macrostate to have different degrees of freedom or different parts of a
system to have different temperatures than $T$. For example, in a glass, it is
well known that the vibrational degrees of freedom have a different
temperature than the configurational degrees of freedom
\cite{Debenedetti,Gujrati-Hierarchy}. In the viscous drag problem, the
CM-motion of the Brownian particle can have a different temperature than $T$
of the rest of the particles in the fluid \cite{Gujrati-LangevinEq}. This
observation is easily verified in MNEQT based on the concept of IEQ as done
elsewhere \cite[see Sec. 8.1 and Eq. (58)]{Gujrati-Hierarchy}. By taking a
larger and larger set of internal variables, we can treat almost all NEQ
macrostates as if in IEQ. Thus, the MNEQT is an extremely useful
thermodynamics for NEQ systems.

\section{The MNEQT\label{Sec-MNEQT}}

An EQ macrostate is described by $\mathbf{X=(}E,V,\cdots\mathbf{)}$, and its
entropy is a state function $S(\mathbf{X})$. Away from EQ, $S(\mathbf{Z})$
becomes a state function for NEQ macrostates in IEQ. In the following, we will
focus on $V$ and $\xi$ as members of $\mathbf{Z}^{E}\doteq\mathbf{Z\backslash
}E$ for simplicity but the discussion is general and applies to any
$\mathbf{Z}^{E}$. Indeed, we use two internal variables when we study free
expansion. The microstate $\mathfrak{m}$ follows its evolution dictated by its
(SI) Hamiltonian $\mathcal{H}(\left.  \mathbf{z}\right\vert V,\boldsymbol{\xi
})$; the interaction with $\widetilde{\Sigma}$ is usually treated as a very
weak stochastic perturbation, which immediately suggests adopting a SI description.

\subsection{Macrowork and Macroheat}

There are two kinds of macrowork,the SI and MI macroworks $dW$ and
$d_{\text{e}}W$, respectively, in NEQT; see Eq. (\ref{GeneralizedWork}). The
irreversible macrowork $d_{\text{i}}W\geq0$ vanishes only for a reversible
process. We similarly have two kinds of macroheat, the SI macroheat $dQ$ and
the MI exchange macroheat $d_{\text{e}}Q$; see Eq. (\ref{GeneralizedHeat}).
The irreversible heat $d_{\text{i}}Q\geq0$ vanishes only for a reversible
process. The first law can be equivalently expressed in terms of SI and MI
quantities, respectively:
\begin{subequations}
\label{FirstLaw}%
\begin{align}
dE  &  =dQ-dW,\label{FirstLaw-SI}\\
dE  &  =d_{\text{e}}Q-d_{\text{e}}W; \label{FirstLaw-MI}%
\end{align}
the first equation follows from Eq. (\ref{StateChange-Averages}) and the
discussion following it. From the above two equations follows the important
identity
\end{subequations}
\begin{equation}
d_{\text{i}}Q\equiv d_{\text{i}}W, \label{diQ-diW}%
\end{equation}
which establishes that internal processes ensure that irreversible macroheat
and macrowork within $\Sigma$\ are identically equal in magnitude. This
equality is very general and will be used extensively in this study. It is a
consequence of a very general result of NEQT that no irreversible process can
generate any internal change in the (average) energy of the system,
\textit{i.e.,}
\begin{equation}
d_{\text{i}}E\equiv0, \label{diE}%
\end{equation}
so that $d_{\text{e}}E=dE$. From now on, we will refer to $dW$ and $dQ$
($d_{\text{e}}W$ and $d_{\text{e}}Q$) simply as macrowork and macroheat
(exchange macrowork and exchange macroheat), respectively, so no confusion can arise.

The discussion above is valid for any arbitrary process, but, from now on, we
restrict to the case when $S=S(\mathbf{Z})$ is a state function in
$\mathfrak{S}$ for simplicity, \textit{i.e.}, for the macrostate to be an IEQ
macrostate \cite{Gujrati-I,Gujrati-II} to define $T$. The discussion to an
arbitrary process, which can be done, will be avoided here.

For a process requiring pressure-volume macrowork only, we have $dW=PdV$ done
\emph{by} $\Sigma$ (SI)\ or\emph{ }$d\widetilde{W}=-P_{0}dV$ done \emph{by}
$\widetilde{\Sigma}$ (MI) in terms of the instantaneous pressure $P=-\partial
E/\partial V$\ of $\Sigma$ or $P_{0}~$of $\widetilde{\Sigma}$, and their
volume change $dV$ or $-dV$, respectively. The exchange macrowork is
$d_{\text{e}}W=-d\widetilde{W}=P_{0}dV$. For irreversibility, $P\neq P_{0}$,
with $P-P_{0}$ playing the role of an activity \cite{Gujrati-I}. The
irreversible or dissipated work is $d_{\text{i}}W=(P-P_{0})dV$
\cite{Kestin,Woods,Prigogine}.

Comparing Eq. (\ref{FirstLaw-SI}) for an IEQ macrostate with
\begin{equation}
dS=(\partial S/\partial\mathbf{X})\cdot d\mathbf{X+}(\partial S/\partial
\boldsymbol{\xi})\cdot d\boldsymbol{\xi}, \label{EntropyDiff-Fundamental}%
\end{equation}
allows us to identify the macroheat and macrowork as
\begin{subequations}
\label{GeneralizedQuantities}%
\begin{align}
dW  &  =T(\partial S/\partial\mathbf{Z}^{E})\cdot d\mathbf{Z}^{E}%
=PdV+\cdots+\mathbf{A}\cdot d\boldsymbol{\xi}, \label{GeneralizedWork-General}%
\\
dQ  &  =TdS\boldsymbol{;} \label{GeneralizedHeat-General}%
\end{align}
where $T\doteq1/(\partial S/\partial E)=1/\beta,$ $\beta\mathbf{A}%
\doteq(\partial S/\partial\boldsymbol{\xi})$ identifies the affinity
$\mathbf{A}$, and $\cdots$ refers to other elements in $\mathbf{X}$. Recalling
that for $\widetilde{\Sigma}$, $T=T_{0},P=P_{0},\cdots,\mathbf{A}_{0}=0$, we
have in general,
\end{subequations}
\begin{subequations}
\label{d-exch}%
\begin{align}
d_{\text{e}}W  &  =-d\widetilde{W}=P_{0}dV+\cdots,\label{deW}\\
d_{\text{e}}Q  &  =-d\widetilde{Q}=T_{0}d_{\text{e}}S. \label{deQ}%
\end{align}
We can identify the irreversible macrowork due to $\mathbf{F}_{\text{t}%
}^{\text{w}}$:
\end{subequations}
\begin{subequations}
\begin{equation}
d_{\text{i}}W=(P-P_{0})dV+\cdots+\mathbf{A}\cdot d\boldsymbol{\xi=}%
\mathbf{F}_{\text{t}}^{\text{w}}\cdot d\mathbf{Z}^{E}\geq0, \label{diW}%
\end{equation}
where
\begin{equation}
\mathbf{F}_{\text{t}}^{\text{w}}=\left\{  P-P_{0},\cdots,\mathbf{A}%
=\mathbf{A-A}_{0}\right\}  \label{Thermodynamic-Force/E}%
\end{equation}
is the \emph{thermodynamic force}, which also include the affinity
$\mathbf{A}$, driving the system towards EQ when $d_{\text{i}}W\rightarrow0$
as $\mathbf{F}_{\text{t}}^{\text{w}}\rightarrow0$. For irreversibility,
$d_{\text{i}}W>0$, which requires $\mathbf{F}_{\text{t}}^{\text{w}}$ to be
non-zero as asserted earlier in Sec. \ref{Sec-WhyThisStudy}. Each component
$(P-P_{0})dV,\cdots,A_{1}d\xi_{1},A_{2}d\xi_{2},\cdots$ of $\mathbf{F}%
_{\text{t}}^{\text{w}}\cdot d\mathbf{Z}^{E}$\ must be positive separately for irreversibility.

Using $d_{\text{e}}Q=T_{0}d_{\text{e}}S$ in Eq. (\ref{GeneralizedHeat-General}%
), we find
\end{subequations}
\begin{subequations}
\begin{equation}
d_{\text{i}}Q=\left\{
\begin{array}
[c]{c}%
(T-T_{0})d_{\text{e}}S+Td_{\text{i}}S\\
(T-T_{0})dS+T_{0}d_{\text{i}}S
\end{array}
\right.  \geq0. \label{diQ}%
\end{equation}
Using $d_{\text{i}}Q=d_{\text{i}}W$ and Eq. (\ref{diW}), we also obtain
\begin{equation}
d_{\text{i}}S=\left\{
\begin{array}
[c]{c}%
\left\{  (T_{0}-T)d_{\text{e}}S+\mathbf{F}_{\text{t}}^{\text{w}}\cdot
d\mathbf{Z}^{E}\right\}  /T\\
\left\{  (T_{0}-T)dS+\mathbf{F}_{\text{t}}^{\text{w}}\cdot d\mathbf{Z}%
^{E}\right\}  /T_{0}%
\end{array}
\right.  \geq0. \label{diS}%
\end{equation}
We see that $F_{\text{t}}^{\text{h}}=T_{0}-T$ can also be thought of as a
"thermodynamic force" due to thermal imbalance driving the system towards EQ
via heat transfer; it ensures that $F_{\text{t}}^{\text{h}}d_{\text{e}}S\geq
0$, in accordance with the second law. Thus, both contributions to
$d_{\text{i}}S$ are always nonnegative as expected. In the absence of any heat
exchange ($d_{\text{e}}S=0$) or for an isothermal system ($T=T_{0}$),$\ $we
have
\end{subequations}
\begin{equation}
d_{\text{i}}Q=Td_{\text{i}}S=d_{\text{i}}W,
\label{ClosedIsothermalSystem-InternalHeat}%
\end{equation}
where $d_{\text{i}}W$ is given by Eq. (\ref{diW}).

We can use
\begin{equation}
\mathbf{F}_{\text{t}}\doteq(F_{\text{t}}^{\text{h}},\mathbf{F}_{\text{t}%
}^{\text{w}}) \label{Generalized-Thermodynamic-Force}%
\end{equation}
as the \emph{generalized thermodynamic force}, which includes the thermal
imbalance $F_{\text{t}}^{\text{h}}$ and the work imbalance $\mathbf{F}%
_{\text{t}}^{\text{w}}$. It should be obvious that $F_{\text{t}}^{\text{h}}$
is meaningless for an isolated or an isothermal system, while $\mathbf{F}%
_{\text{t}}^{\text{w}}$ is meaningful for all NEQ systems, interacting or not.

\subsection{Internal Variables and the Isolated System}

The above formulation of MNEQT is perfectly suited for considering an isolated
system $\Sigma$ ($d_{\text{e}}W=d_{\text{e}}Q\equiv0$) so that Eqs.
(\ref{diQ-diW}-\ref{diE}) becomes the most important thermodynamic equality.
For an isolated system, $d\mathbf{X}=0$ so that $d_{\text{i}}W=\mathbf{A}\cdot
d\boldsymbol{\xi}$ as seen from Eq. (\ref{diW}).

\begin{theorem}
\label{Theorem-diW-diS-Isolated}The irreversible entropy generated within an
isolated system is still related to the dissipated work performed by the
internal variables.
\end{theorem}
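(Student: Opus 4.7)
The plan is to specialize the general MNEQT identities already established in Eqs.~(\ref{GeneralizedQuantities})--(\ref{diS}) to the case where $\Sigma$ is isolated. By definition, isolation means $d_{\text{e}}W=d_{\text{e}}Q\equiv 0$, and the observables $\mathbf{X}$ are fixed from outside, so $d\mathbf{X}=0$; in particular $dV=0$, while the internal variables $\boldsymbol{\xi}$ continue to relax toward EQ. Combined with the universal identity $d_{\text{i}}E\equiv 0$ of Eq.~(\ref{diE}), this also forces $dE=0$, consistent with $\Sigma$ being isolated.

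First I would substitute $d\mathbf{X}=0$ into the expression for the dissipated macrowork given by Eq.~(\ref{diW}). Every $(P-P_{0})dV$-type term vanishes, leaving
\begin{equation*}
d_{\text{i}}W=\mathbf{A}\cdot d\boldsymbol{\xi}\geq 0,
\end{equation*}
so that the only surviving contribution to the dissipated work is the one performed by the internal variables against their affinities $\mathbf{A}$.

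Next I would invoke the identity $d_{\text{i}}Q\equiv d_{\text{i}}W$ from Eq.~(\ref{diQ-diW}). Because $d_{\text{e}}Q=0$, Eq.~(\ref{FirstLaw-SI}) together with $d_{\text{i}}E\equiv 0$ gives $dQ=d_{\text{i}}Q$; similarly $d_{\text{e}}S=0$, so $dS=d_{\text{i}}S$. Applying the IEQ Clausius identity $dQ=T\,dS$ from Eq.~(\ref{GeneralizedHeat-General}), with $T$ defined intrinsically through Eq.~(\ref{IEQ-Temp}), I would conclude
\begin{equation*}
T\,d_{\text{i}}S=d_{\text{i}}Q=d_{\text{i}}W=\mathbf{A}\cdot d\boldsymbol{\xi},
\end{equation*}
i.e., $d_{\text{i}}S=\mathbf{A}\cdot d\boldsymbol{\xi}/T\geq 0$. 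This explicitly ties the irreversible entropy generated in $\Sigma$ to the dissipated macrowork performed by the internal variables, even in the complete absence of any exchange with a medium.

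The main obstacle is interpretational rather than computational: one has to justify that the SI temperature $T$ entering $dQ=T\,dS$ is well defined for an isolated NEQ system despite the thermal imbalance $F_{\text{t}}^{\text{h}}=T-T_{0}$ being meaningless here. This is exactly what the IEQ hypothesis of Sec.~\ref{Sec-IEQ} secures, by making $S(\mathbf{Z})$ a state function in the enlarged space $\mathfrak{S}$ and fixing $T$ via Eq.~(\ref{IEQ-Temp}). As a consistency check I would verify that the two alternative forms of $d_{\text{i}}S$ in Eq.~(\ref{diS}) both collapse to $\mathbf{A}\cdot d\boldsymbol{\xi}/T$ once all $F_{\text{t}}^{\text{h}}$-dependent exchange terms are dropped, confirming that the derivation is internally consistent with the general MNEQT framework.
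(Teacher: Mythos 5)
Your proposal is correct and follows essentially the same route as the paper: the paper's proof likewise sets $d\mathbf{X}=0$ in Eq.~(\ref{diW}) to reduce $d_{\text{i}}W$ to $\mathbf{A}\cdot d\boldsymbol{\xi}$ and then uses $dE=0$ with the first law and $dQ=TdS$ to obtain $Td_{\text{i}}S=d_{\text{i}}W=\mathbf{A}\cdot d\boldsymbol{\xi}\geq 0$. You simply spell out the intermediate identities ($d_{\text{i}}Q\equiv d_{\text{i}}W$, $dS=d_{\text{i}}S$, and the role of IEQ in defining $T$) that the paper leaves implicit.
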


\begin{proof}
As $E$ remains fixed for an isolated system ($dQ=Td_{\text{i}}S$), we have
from Eq. (\ref{FirstLaw-SI})%
\begin{equation}
Td_{\text{i}}S=d_{\text{i}}W=\mathbf{A}\cdot d\boldsymbol{\xi}\geq0
\label{EntropyDiff-Isolated}%
\end{equation}
in accordance with the second law.
\end{proof}

Note that the above equation, though it is identical to Eq.
(\ref{ClosedIsothermalSystem-InternalHeat}) in form, is very different in that
$d_{\text{i}}W$ here is simply $\mathbf{A}\cdot d\boldsymbol{\xi}$ and not the
full expression in Eq. (\ref{diW}). Same conclusion is also obtained when we
apply Eq. (\ref{diS}) to an isolated system.

The above theorem thus clarifies the unsettling fact about the significance of
dissipated macrowork that motivated this study; see also Eq.
(\ref{Isolated-Dissipated-Macrowork}). The dissipated macrowork $d_{\text{i}%
}W$ in an isolated system is performed by the internal variable
$\boldsymbol{\xi}$, and can be identified with $d_{\text{i}}S$ as noted in
Sec. \ref{Sec-WhyThisStudy}.

\begin{corollary}
\label{Corollary-IsolatedSystem}Neither the entropy can increase nor will
there be any dissipated work unless some internal variables are present in an
isolated system. If no internal variables are used to describe an isolated
system, then thermodynamics requires it to be in EQ.
\end{corollary}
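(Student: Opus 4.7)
The plan is to combine the isolation constraint with Theorem \ref{Theorem-diW-diS-Isolated} and the IEQ characterization of entropy as a state function on $\mathfrak{S}$. First I would fix notation: by definition, an isolated system has no exchange with any medium, so $d_{\text{e}}W=d_{\text{e}}Q=0$ and all observables in $\mathbf{X}=(E,V,\cdots)$ are held fixed, i.e., $d\mathbf{X}=0$. If no internal variables are introduced, then the enlarged state space collapses to the EQ state space, $\mathfrak{S}=\mathfrak{S}_{0}$, and the full state-space increment satisfies $d\mathbf{Z}=d\mathbf{X}=0$ throughout any admissible process.

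Next, I would invoke Eq.~(\ref{EntropyDiff-Isolated}) of Theorem \ref{Theorem-diW-diS-Isolated}, which gives $Td_{\text{i}}S=d_{\text{i}}W=\mathbf{A}\cdot d\boldsymbol{\xi}\geq0$ for an isolated system. With $\boldsymbol{\xi}$ absent, the right-hand side is identically zero, so $d_{\text{i}}W\equiv0$ and $d_{\text{i}}S\equiv0$. Since $d_{\text{e}}S=0$ as well (no exchange), the total change $dS=d_{\text{e}}S+d_{\text{i}}S$ vanishes, which proves the first assertion: neither entropy growth nor dissipated work is possible in the absence of internal variables.

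For the second assertion I would argue that $dS=0$ for every admissible virtual process is incompatible with $\Sigma$ being out of EQ. Since $S(\mathbf{X})$ is a state function on $\mathfrak{S}_{0}$ and $\mathbf{X}$ is frozen, $S$ is pinned to a single value; yet by the second law an NEQ isolated macrostate must generate strictly positive entropy as it relaxes toward EQ. The only way to reconcile $d_{\text{i}}S\equiv 0$ with the second law is to assert that no relaxation is occurring, i.e., $\Sigma$ already sits at the EQ macrostate of $\mathfrak{S}_{0}$. Equivalently, without a $\boldsymbol{\xi}$ there is no thermodynamic force $\mathbf{F}_{\text{t}}^{\text{w}}$ available to drive the process, so EQ is the only thermodynamically consistent state.

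The main obstacle I anticipate is the last step, where one must rule out the alternative that $\Sigma$ is NEQ but its irreversibility is simply invisible in the chosen description. This is handled by hypothesis: the corollary concerns descriptions that use no internal variables at all, so any hidden slow mode is, by stipulation, not part of the model, and thermodynamics within the chosen state space must therefore be trivial, i.e., EQ. All other steps are routine substitutions into Eqs.~(\ref{diW}) and (\ref{EntropyDiff-Isolated}).
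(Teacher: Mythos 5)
Your proof is correct and follows exactly the route the paper intends: the paper's own proof is the single line ``follows trivially from Eq.~(\ref{EntropyDiff-Isolated})'', and your argument simply makes that explicit by noting that the absence of $\boldsymbol{\xi}$ forces $Td_{\text{i}}S=d_{\text{i}}W=\mathbf{A}\cdot d\boldsymbol{\xi}\equiv0$, and then appealing to the second law to conclude the system must already be in EQ. No gap; your version just spells out the details the paper leaves implicit.
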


\begin{proof}
The proof follows trivially from Eq. (\ref{EntropyDiff-Isolated}).
\end{proof}

\subsection{Cumulative Macroquantities in a Thermodynamic Process}

Let us consider a thermodynamic process $\mathcal{P}$ between two macrostates
$\mathfrak{M}_{\text{in}}\doteq\mathfrak{M}(T_{\text{in}},\mathbf{Z}%
_{\text{in}}^{E})$ and $\mathfrak{M}_{\text{fin}}\doteq\mathfrak{M}%
(T_{\text{fin}},\mathbf{Z}_{\text{fin}}^{E})$ at temperatures $T_{\text{in}}$
and $T_{\text{fin}}$, respectively. The system may be isolated and in IEQ so
its temperature is well defined. It may be very different from the temperature
$T_{0}$\ of the medium, if $\Sigma$ is not isolated. In each case, the
cumulative macroquantities of $\Sigma$\ are obtained by simple integration
along the process:%
\begin{subequations}
\begin{align}
\Delta_{\alpha}E  &  =%
{\textstyle\int\nolimits_{\mathcal{P}}}
d_{\alpha}E,\Delta_{\alpha}S=%
{\textstyle\int\nolimits_{\mathcal{P}}}
d_{\alpha}S,\label{CumulativeMacroquantity-E_S}\\
\Delta_{\alpha}W  &  =%
{\textstyle\int\nolimits_{\mathcal{P}}}
d_{\alpha}W,\Delta_{\alpha}Q=%
{\textstyle\int\nolimits_{\mathcal{P}}}
d_{\alpha}Q. \label{CumulativeMacroquantity-W_Q}%
\end{align}
Similar definitions also apply to $\widetilde{\Sigma}$ and $\Sigma_{0}$. Above
we have used the compact notation $d_{\alpha}=d,d_{\text{e}}$, and
$d_{\text{i}}$ to indicate various infinitesimal forms, which we can treat as
linear operators. They will be useful in the rest of the work.

We now consider an interacting system and determine $\Delta W$ between two EQ
macrostates $\mathfrak{M}_{\text{in,eq}}\doteq\mathfrak{M}(T_{0}%
,\mathbf{Z}_{\text{in}}^{E})$ and $\mathfrak{M}_{\text{fin,eq}}\doteq
\mathfrak{M}(T_{0},\mathbf{Z}_{\text{fin}}^{E})$\ at the same temperature. We
denote the corresponding process by $\mathcal{P}_{0}$, which\ may possibly be
irreversible. We recall that the Helmholtz free energy
\end{subequations}
\begin{subequations}
\begin{equation}
\overline{F}=E-T_{0}S \label{FreeEnergy}%
\end{equation}
(conventionally written as $F$ but we will use that for a SI free energy here)
is also the Helmholtz free energy of a NEQ\ system \cite{Gujrati-I} in a
canonical ensemble in a medium at temperature $T_{0}$; the temperature $T$ of
the system does not appear in $\overline{F}$ explicitly. It is this free
energy that follows the second law \cite{Gujrati-I} and not $F$, which is the
SI free energy
\begin{equation}
F=E-TS+\mathbf{A}\cdot\boldsymbol{\xi.} \label{SI-FreeEnergy}%
\end{equation}
It depends only on the system and is very different from $\overline{F}$. It
will be useful later. In terms of the difference $\Delta\overline{F}$ for the
two macrostates, we have $\Delta W=\Delta Q-\Delta E=\Delta Q-\Delta
\overline{F}-\Delta(T_{0}S)$. Thus,%

\end{subequations}
\begin{subequations}
\label{Del-W-alpha}%
\begin{align}
\Delta W  &  =%
{\textstyle\int\nolimits_{\mathcal{P}}}
[(T-T_{0})dS-SdT_{0}]-\Delta\overline{F},\label{Delta_W}\\
\Delta_{\text{e}}W  &  =-%
{\textstyle\int\nolimits_{\mathcal{P}}}
[T_{0}d_{\text{i}}S+SdT_{0}]-\Delta\overline{F},\label{Delta_W_e}\\
\Delta_{\text{i}}W  &  =%
{\textstyle\int\nolimits_{\mathcal{P}}}
[TdS-T_{0}d_{\text{e}}S] \label{Delta_W_i}%
\end{align}
The corresponding infinitesimal form is%
\end{subequations}
\begin{subequations}
\begin{align}
dW  &  =(T-T_{0})dS-SdT_{0}-d\overline{F},\label{d_W}\\
d_{\text{e}}W  &  =-[T_{0}d_{\text{i}}S+SdT_{0}]-d\overline{F},\label{d_W_e}\\
d_{\text{i}}W  &  =TdS-T_{0}d_{\text{e}}S. \label{d_W_i}%
\end{align}
If the medium is maintained at a fixed temperature during $\mathcal{P}$, we
must remove the $dT_{0}$ term above.

We see that if and only if $T=T_{0}=const$ over the entire process
$\mathcal{P}$ so that it is \emph{isothermal}, we have%
\end{subequations}
\begin{subequations}
\begin{equation}
\Delta W_{\text{isoth}}=-\Delta\overline{F},\Delta_{\text{e}}W_{\text{isoth}%
}=-T_{0}\Delta_{\text{i}}S-\Delta\overline{F}, \label{Iso-Delta_W0}%
\end{equation}
in terms of the Helmholtz free energy difference so that
\begin{equation}
\Delta_{\text{i}}W_{\text{isoth}}=T_{0}\Delta_{\text{i}}S.
\label{Iso-Delta_W_i}%
\end{equation}
This is a very strong requirement when $\widetilde{\Sigma}$ remains in
continuous contact with $\Sigma$, since it requires complete thermal
equilibrium at all times. In this case, $d_{\text{i}}Q=T_{0}d_{\text{i}}S$ and
$\Delta_{\text{i}}Q=T_{0}\Delta_{\text{i}}S$; see also Eq.
(\ref{ClosedIsothermalSystem-InternalHeat}).

We now consider an isolated system for which $\Delta E=0$ so that%
\end{subequations}
\begin{equation}
\Delta_{\text{i}}W=%
{\textstyle\int\nolimits_{\mathcal{P}}}
Td_{\text{i}}S=T_{0}\Delta_{\text{i}}S+%
{\textstyle\int\nolimits_{\mathcal{P}}}
[(T-T_{0})d_{\text{i}}S\geq0, \label{Isolated-Dissipated-Macrowork}%
\end{equation}
which is in accordance with Theorem \ref{Theorem-diW-diS-Isolated} and finally
explains the physical meaning of the irreversible macrowork, which was one of
the questions that had prompted this investigation. It should be clear that
the derivation is not restricted to a process between two EQ\ macrostates.

\subsection{A Simple Study Case \label{Sec-Example}}

As remarked earlier, we use a single internal variable $\xi$ in addition to
$V$ for $\mathbf{Z}^{E}$ for simplicity so that we have
\begin{subequations}
\begin{equation}
dW=PdV+Ad\xi,d_{\text{e}}W=P_{0}dV. \label{dW}%
\end{equation}
The dissipated work is%
\begin{equation}
d_{\text{i}}W=(P-P_{0})dV+Ad\xi\geq0; \label{diW-special}%
\end{equation}
in the absence of $\xi$. We also have%

\end{subequations}
\begin{equation}
Td_{\text{i}}S=(T-T_{0})d_{\text{e}}S+(P-P_{0})dV+Ad\xi\geq0,
\label{EntropyDiff-Isolated-SingleIV}%
\end{equation}
where we have used $d_{\text{i}}Q=d_{\text{i}}W$. If no exchange macrowork is
done, $dV=0$ and $Td_{\text{i}}S=(T-T_{0})d_{\text{e}}S+Ad\xi\geq0$. In the
absence of any exchange macroheat, we have $Td_{\text{i}}S=(P-P_{0}%
)dV+Ad\xi\geq0$. In an isolated system, we have
\begin{equation}
Td_{\text{i}}S=Ad\xi\geq0, \label{Isolated-IrreversibleEntropy}%
\end{equation}
which is a special case of the general result in Eq.
(\ref{EntropyDiff-Isolated}).

\section{The $\mu$NEQT\label{Sec-muNEQT}}

We will closely follow Refs. \cite{Gujrati-GeneralizedWork,Gujrati-LangevinEq}
to provide a brief pedagogical review of the $\mu$NEQT for the sake of
continuity and demonstrate its successful application to the free expansion,
which has not been attacked by any other approach so far. The main idea is to
cast any macroquantity in the MNEQT as a thermodynamic average, see Eq.
(\ref{Average-O}), over microstates. Then, we can identify corresponding
microquantities. Some care must be exercised to ensure their uniqueness as we
will see. These microstate microquantities can be used to identify the
contribution along a trajectory by simple integration.

\subsection{General Setup}

The theory was first presented in a very condensed form in Ref.
\cite{Gujrati-GeneralizedWork}. It was successfully applied
\cite{Gujrati-LangevinEq} to provide an alternative, but a much simpler,
approach (using deterministic microforces $\mathbf{F}_{k}^{E}$) to study
Brownian motion without the use of Langevin's stochastic noise term so it does
not require the use of the stochastic theory. The microforce responsible for
the Brownian motion is associated with the relative motion of the center of
mass of the Brownian particle with respect to the medium. A good description
of the salient features of the $\mu$NEQT is available there.

It follows from $\mathcal{H}(\left.  \mathbf{z}\right\vert \mathbf{Z}^{E})$
that $E_{k}$ also depend on $\mathbf{Z}^{E}$. A macroquantity $O$ (except the
temperature) in the MNEQT appear as a thermodynamic average $\left\langle
O\right\rangle $ over $\left\{  \mathfrak{m}_{k}\right\}  $:
\begin{subequations}
\begin{equation}
O=\left\langle O\right\rangle \doteq%
{\textstyle\sum\nolimits_{k}}
p_{k}O_{k}, \label{Average-O}%
\end{equation}
where $O_{k}$ is the value of $O$ associated with $\mathfrak{m}_{k}$. Thus,
\begin{equation}
E=%
{\textstyle\sum\nolimits_{k}}
p_{k}E_{k},P=%
{\textstyle\sum\nolimits_{k}}
p_{k}P_{k},S=%
{\textstyle\sum\nolimits_{k}}
p_{k}S_{k}, \label{Average-StateVariables}%
\end{equation}
etc. Here, $E_{k},P_{k}\doteq-\partial E_{k}/\partial V,S_{k}\doteq-\ln p_{k}%
$, etc. are the values that are only determined by $\mathfrak{m}_{k}$.
However, $p_{k}$ or $S_{k}$, although associated with\ $\mathfrak{m}_{k}$ is
not determined by it alone because of the constraint $%
{\textstyle\sum\nolimits_{k}}
p_{k}=1$, which makes it\ depend on the macrostate also.

From $E$, we have the change in the macroenergy
\end{subequations}
\begin{subequations}
\begin{equation}
dE=%
{\textstyle\sum\nolimits_{k}}
E_{k}dp_{k}+%
{\textstyle\sum\nolimits_{k}}
p_{k}dE_{k} \label{StateChange-Averages}%
\end{equation}
between two neighboring macrostates. We will use the compact notation%
\end{subequations}
\begin{align}
dE_{\text{h}}  &  \doteq%
{\textstyle\sum\nolimits_{k}}
E_{k}dp_{k}=\left\langle Ed\eta\right\rangle ,\nonumber\\
dE_{\text{w}}  &  \doteq%
{\textstyle\sum\nolimits_{k}}
p_{k}dE_{k}=\left\langle dE\right\rangle , \label{d_alpha-Oh}%
\end{align}
where we have introduced $\eta_{k}\doteq\ln p_{k}$ so that $d\eta_{k}%
=dp_{k}/p_{k}$.\ As $\left\{  E_{k}\right\}  $ does not change but $\left\{
p_{k}\right\}  $ changes in $dE_{\text{h}}$, it must depend on $dS$; see Eq.
(\ref{GeneralizedHeat-General}). As $\left\{  p_{k}\right\}  $ is not changed
in $dE_{\text{w}}$, it is evaluated at \emph{fixed entropy }$S$\emph{~}%
\cite{Gujrati-Heat-Work0,Gujrati-Entropy2}. Comparing $dE$ with the first law
in Eq. (\ref{FirstLaw-SI}), we identify%
\begin{subequations}
\begin{align}
dQ  &  =dE_{\text{h}}\doteq\left\langle Ed\eta\right\rangle
,\label{Macro-Heats}\\
dW  &  =-dE_{\text{w}}\doteq-\left\langle dE\right\rangle \label{MacroWorks}%
\end{align}
Thus, the generalized macroheat $dQ$ is a contribution proportional to $dS$
and the generalized macrowork $dW$ is an isentropic contribution as noted in
Sec. \ref{Sec-WhyThisStudy}. The identification of macroheat and macroworks
above also explains the choice of h for heat and w for work as suffix above
and superfix in $F_{\text{t}}^{\text{h}}$ and $\mathbf{F}_{\text{t}}%
^{\text{w}}$.

We can now identify the (generalized) microheat and microwork along
$\gamma_{k}$; they are given by%
\end{subequations}
\begin{equation}
dQ_{k}^{\prime}\doteq E_{k}d\eta_{k},dW_{k}=-dE_{k,} \label{Micro-Heat-Work}%
\end{equation}
respectively. The reason for the prime in $d_{\alpha}Q_{k}^{\prime}$\ will
become clear below. The second equation above is simply the previously derived
mechanical identity in Eq. (\ref{HamiltonianChange-StateSpace}). We summarize
this important result, which is not properly appreciated in the field
\cite{Gujrati-GeneralizedWork}, in the form of a

\begin{theorem}
\label{Theorem-dWk-dEk}The mechanical microwork $dW_{k}$ done by the system in
the $k$th microstate is the negative of the change $dE_{k}=(\partial
E_{k}/\partial\mathbf{Z}^{E})\cdot d\mathbf{Z}^{E}$:%
\begin{equation}
dW_{k}=-dE_{k} \label{microwork-microenergy}%
\end{equation}

\end{theorem}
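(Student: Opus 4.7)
The plan is to establish $dW_k = -dE_k$ by combining a purely mechanical derivation on a single Hamiltonian trajectory with a thermodynamic consistency check at the macroscopic level. The mechanical side delivers the equality pointwise along $\gamma_k$ with no reference to probabilities, while the macroscopic side confirms that this microstate assignment is the unique one compatible with the MNEQT identification $dW = -dE_{\text{w}}$ already obtained in Eq.~(\ref{MacroWorks}).

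First I would invoke Hamilton's equations for $\mathcal{H}(\left.\mathbf{z}\right|\mathbf{Z}^E)$ exactly as in Eq.~(\ref{HamiltonianChange}): the term $(\partial\mathcal{H}/\partial\mathbf{z})\cdot d\mathbf{z}$ vanishes identically along the flow, so for a fixed microstate index $k$, which by the identity mapping $\mathcal{T}$ of Eq.~(\ref{Temporal mapping}) is preserved as $\mathbf{Z}^E$ varies, we have $dE_k = (\partial E_k/\partial\mathbf{Z}^E)\cdot d\mathbf{Z}^E$. Using the generalized microforce $\mathbf{F}_k^E \doteq -\partial E_k/\partial\mathbf{Z}^E$ from Eq.~(\ref{Generalized force-work}), the microwork evaluates to $dW_k = \mathbf{F}_k^E\cdot d\mathbf{Z}^E = -dE_k$. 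This establishes the theorem at the level of a single trajectory with purely mechanical input.

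I would then cross-check consistency with the macrowork identification. Averaging the proposed relation at fixed $\{p_k\}$ gives $\sum_k p_k\, dW_k = -\sum_k p_k\, dE_k = -dE_{\text{w}}$, which coincides with $dW$ as identified in Eq.~(\ref{MacroWorks}). Because microworks are mechanical and oblivious to $\{p_k\}$, matching the macro identity across arbitrary admissible distributions forces the equality microstate by microstate and thereby resolves the uniqueness concern flagged at the opening of Sec.~\ref{Sec-muNEQT}.

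The main obstacle is ensuring that the index $k$ remains meaningful under variation of $\mathbf{Z}^E$, so that $dE_k$ is a genuine differential along one trajectory rather than a comparison between microstates drawn from different phase spaces $\Gamma(\mathbf{Z}_{\text{in}}^E)$ and $\Gamma(\mathbf{Z}_{\text{fin}}^E)$. In the quantum case this is automatic; in the classical case it hinges on the identity mapping $\mathcal{T}$ along Hamiltonian trajectories and breaks down whenever $d|\Gamma|\neq 0$. Since the theorem is an infinitesimal statement along a single trajectory for which $\mathcal{T}$ is assumed to hold, this obstacle is circumvented here, but it is precisely the difficulty that the \emph{trick} introduced later in the paper is designed to overcome when the result is applied globally to expansion or contraction processes.
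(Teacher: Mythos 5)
Your proof is correct and is essentially the paper's own argument: the paper simply refers back to the derivation of Eq.~(\ref{HamiltonianChange-StateSpace}), i.e.\ the vanishing of $(\partial\mathcal{H}/\partial\mathbf{z})\cdot d\mathbf{z}$ by Hamilton's equations followed by the definition $\mathbf{F}_{k}^{E}\doteq-\partial E_{k}/\partial\mathbf{Z}^{E}$, which is exactly your first paragraph. The additional macro-level consistency check and the remarks on the validity of the index $k$ under phase space volume change are sensible supplements but not part of the paper's (very short) proof.
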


\begin{proof}
See the derivation of Eq. (\ref{HamiltonianChange-StateSpace}).
\end{proof}

As shown in Eq. (\ref{HamiltonianChange-StateSpace}), the temporal evolution
of $\mathfrak{m}_{k}$ is due to $\mathbf{F}_{k}^{E}$, which changes its
microenergy $E_{k}$ but does not change $\mathfrak{m}_{k}$. The average over
all $\left\{  \mathfrak{m}_{k}\right\}  $ of $dW_{k}$ due to $\mathbf{F}%
_{k}^{E}$, see Eq. (\ref{Generalized force-work}), gives the generalized
macrowork $dW$, see Eq. (\ref{MacroWorks}), due to the macroforce
\[
\mathbf{F}^{E}\doteq\left\langle \mathbf{F}^{E}\right\rangle =T(\partial
S/\partial\mathbf{Z}^{E}),
\]
see Eq. (\ref{GeneralizedWork-General}), so that
\[
dW=\mathbf{F}^{E}\cdot d\mathbf{Z}^{E}%
\]
as expected. We can summarize the above result as the following theorem
because it plays a central role in the $\mu$NEQT.

In general, macroworks $d_{\alpha}W$ are thermodynamic averages of microworks
$d_{\alpha}W_{k}$:%
\begin{equation}
d_{\alpha}W=\left\langle d_{\alpha}W\right\rangle \doteq%
{\textstyle\sum\nolimits_{k}}
p_{k}d_{\alpha}W_{k}. \label{d_alpha-W-macro}%
\end{equation}
As it is easy to determine the mechanical microworks $d_{\alpha}W_{k}$, we can
extend the \emph{identity} $dE_{k}=-dW_{k}$ to introduce microenergies%
\begin{equation}
d_{\text{e}}E_{k}=-d_{\text{e}}W_{k},d_{\text{i}}E_{k}=-d_{\text{i}}W_{k};
\label{d_alpha-E-micro}%
\end{equation}
they define what is meant by $d_{\text{e}}E_{k}$ and $d_{\text{e}}E_{k}$.

Shifting $E_{k\text{ }}$by a constant does not affect $dW$ and $dQ$, showing
their unique nature. While $dW_{k}$ is also not affected by the shift, it does
affect $dQ_{k}^{\prime}$. Therefore, instead of using $dQ=dE_{\text{h}}$ to
identify $dQ_{k}$, we instead use the identity $dQ=TdS$ in Eq.
(\ref{GeneralizedHeat-General}) to identify $dQ_{k}$ as there is no ambiguity
in the definition of the statistical entropy $S$ in Eq.
(\ref{StatisticalEntropy}). We thus find that%
\begin{equation}
dQ_{k}\doteq T\left.  dS\right\vert _{k}\doteq-T(\eta_{k}+1)d\eta_{k},
\label{microheat}%
\end{equation}
where $d\left.  S\right\vert _{k}=-(\eta_{k}+1)d\eta_{k}$, not to be confused
with $dS_{k}=-d\eta_{k}$, is the microquantity corresponding to the
macroquantity $dS$%
\[
dS=-%
{\textstyle\sum\nolimits_{k}}
(\eta_{k}+1)dp_{k}\doteq%
{\textstyle\sum\nolimits_{k}}
p_{k}\left.  dS\right\vert _{k}.
\]
Similarly, we use Eqs. (\ref{deQ}) and (\ref{diQ}) to identify $d_{\text{e}%
}Q_{k}$ and $d_{\text{i}}Q_{k}$, respectively. We are not going to be directly
involved with microheats in our investigation here so we will not spend time
with them further; we will treat them in a separate publication.

\subsection{The Simple Case}

However, we need various microworks in this investigation as our focus is to
understand the dissipated work. Therefore, we give the results for them. For
the simple case $\mathbf{Z}^{E}=(V,\xi)$, compare with Sec. \ref{Sec-Example},
we have
\begin{subequations}
\begin{equation}
dW_{k}=P_{k}dV+A_{k}d\xi. \label{GeneralizedWork-Reduced}%
\end{equation}
From $d_{\text{e}}W=P_{0}dV$, we obtain $d_{\text{e}}W_{k}=P_{0}dV$, which
defines $d_{\text{e}}E_{k}=-P_{0}dV$. We also find
\begin{equation}
d_{\text{i}}W_{k}=(P_{k}-P_{0})dV+A_{k}d\xi\label{IrreversibleWork}%
\end{equation}
which identifies $d_{\text{i}}E_{k}=-(P_{k}-P_{0})dV-A_{k}d\xi$ and explains
how it becomes nonzero due to internal processes. For an isolated NEQ\ system,
we must $d_{\text{i}}W_{k}=A_{k}d\xi=dW_{k}$. For free expansion, we must set
$P_{0}=0$ so%
\begin{equation}
d_{\text{i}}W_{k}^{\text{free}}=P_{k}dV+A_{k}d\xi=-d_{\text{i}}E_{k}.
\label{FreeExpansion-Macrowork}%
\end{equation}

\subsection{Hamiltonian Trajectory and Microworks}

There have been several attempts to formulate microstate trajectory
thermodynamics \cite{Ritort,Esposito} based on utilizing the work fluctuation
theorem so they are not directly applicable to an isolated system. Our own
attempt that includes an isolated system was briefly outline in Ref.
\cite{Gujrati-GeneralizedWork} and elaborated recently in Ref.
\cite{Gujrati-JensenInequality}. Here, we briefly summarize it for continuity.
We will assume the medium $\widetilde{\Sigma}$ to consist of two
noninteracting media $\widetilde{\Sigma}_{\text{h}}$ that controls macroheat
exchange and $\widetilde{\Sigma}_{\text{w}}$ that controls macrowork exchange.
We are interested in a NEQ\ work process $\mathcal{P}_{0}$ as the system
evolves from one EQ macrostate to another by changing $\mathbf{X}^{E}$ from
$\mathbf{X}_{\text{in}}^{E}$ to $\mathbf{X}_{\text{fin}}^{E}$ by manipulating
the medium $\widetilde{\Sigma}_{\text{w}}$. It is usually the case that when
$\mathbf{X}^{E}=\mathbf{X}_{\text{fin}}^{E}$, the system is not yet in EQ so
the internal variables have not come to their EQ values. We denote this part
of $\mathcal{P}_{0}$ by $\mathcal{P}$. It takes a while during $\mathcal{\bar
{P}}$ ($\mathcal{P}_{0}=$ $\mathcal{P\cup\bar{P}}$) for the system to reach
the final EQ macrostate. We may allow the temperature $T_{0}$ of
$\widetilde{\Sigma}_{\text{h}}$ to change during $\mathcal{P}_{0}$, or
disconnect $\widetilde{\Sigma}_{\text{h}}$ from $\Sigma$\ during $\mathcal{P}%
$. In both cases, the temperature $T$ of the system in the final macrostate
$\mathfrak{M}_{\text{fin}}\doteq\mathfrak{M}(T_{\text{fin}},\mathbf{Z}%
_{\text{fin}}^{E})$\ may be different from that of the initial macrostate
$\mathfrak{M}_{\text{in}}\doteq\mathfrak{M}(T_{\text{in}},\mathbf{Z}%
_{\text{in}}^{E})$. While the microstate maintains its identity ($k$ does not
change) as shown in Eq. (\ref{Temporal mapping}), the microenergy $E_{k}$
changes during the entire evolution over $\mathcal{P}_{0}$ in accordance with
Eq. (\ref{HamiltonianChange-StateSpace}). Let us focus on $dW_{k}=-dE_{k}$
during $t$\ and $t+dt$ along $\gamma_{k}$. Its integral along $\gamma_{k}%
$\ determines the accumulated microwork $\Delta_{\alpha}W_{k}$
\end{subequations}
\begin{equation}
\Delta W_{k}\mathbf{\doteq}%
{\textstyle\int\nolimits_{\gamma_{k}}}
dW_{k}=-%
{\textstyle\int\nolimits_{\gamma_{k}}}
dE_{k}. \label{micro-works}%
\end{equation}
The integral is not affected by how $p_{k}$ changes during $\mathcal{P}_{0}$
so it is the same for all processes between $\mathbf{Z}_{\text{in}}^{E}$ and
$\mathbf{Z}_{\text{fin}}^{E}$. Thus, we can evaluate $\left\{  \Delta
W_{k}\right\}  $ for a single process such as an EQ process $\mathcal{P}%
_{0\text{,eq}}$ but can us it for every other possible process $\mathcal{P}%
_{0}(\mathbf{Z}_{\text{in}}^{E}\rightarrow\mathbf{Z}_{\text{fin}}^{E})$. On
the other hand, the accumulated macrowork $\Delta W$ over $\mathcal{P}_{0}$,
see Eq. (\ref{CumulativeMacroquantity-W_Q}), is affected by $p_{k}$, see Eq.
(\ref{d_alpha-W-macro}), so it is different for different processes
$\mathcal{P}_{0}(\mathbf{Z}_{\text{in}}^{E}\rightarrow\mathbf{Z}_{\text{fin}%
}^{E})$.

\begin{theorem}
\label{Theorem-Net-MicroWork} Let $\Delta\mathbf{Z}^{E}$ denote the change in
$\mathbf{Z}^{E}$ in the process $\mathcal{P}_{0}(\mathbf{Z}_{\text{in}}%
^{E}\rightarrow\mathbf{Z}_{\text{fin}}^{E})$. The cumulative microwork $\Delta
W_{k}=-\Delta E_{k}=E_{k}(\mathbf{Z}_{\text{in}}^{E})-E_{k}(\mathbf{Z}%
_{\text{in}}^{E}+\Delta\mathbf{Z}^{E})$ is the same for all processes,
including the reversible one, that undergo the same net change $\Delta
\mathbf{Z}^{E}$: $\mathbf{Z}_{\text{in}}^{E}\rightarrow\mathbf{Z}_{\text{fin}%
}^{E}=\mathbf{Z}_{\text{in}}^{E}+\Delta\mathbf{Z}^{E}$. However, the
cumulative macrowork $\Delta W$ depends on the process.
\end{theorem}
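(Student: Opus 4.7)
The plan is to exploit two ingredients already laid out in the excerpt: the identity mapping $\mathcal{T}$ of microstates along a Hamiltonian trajectory $\gamma_{k}$ in Eq.~(\ref{Temporal mapping}), and Theorem~\ref{Theorem-dWk-dEk} which identifies the microwork with $dW_{k}=-dE_{k}$. Because the index $k$ is preserved along $\gamma_{k}$ as $\mathbf{Z}^{E}$ varies, the microenergy $E_{k}(\mathbf{Z}^{E})$ is an unambiguous single-valued function of the parameter $\mathbf{Z}^{E}$ for each fixed $k$, so its differential $dE_{k}=(\partial E_{k}/\partial\mathbf{Z}^{E})\cdot d\mathbf{Z}^{E}$ is exact in the subspace $\mathfrak{S}^{E}$.

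First I would integrate $dW_{k}=-dE_{k}$ along any trajectory $\gamma_{k}$ joining $\mathbf{Z}_{\text{in}}^{E}$ to $\mathbf{Z}_{\text{fin}}^{E}=\mathbf{Z}_{\text{in}}^{E}+\Delta\mathbf{Z}^{E}$. Exactness collapses the line integral to the endpoint difference $\Delta W_{k}=E_{k}(\mathbf{Z}_{\text{in}}^{E})-E_{k}(\mathbf{Z}_{\text{in}}^{E}+\Delta\mathbf{Z}^{E})$, independent of how $\mathbf{Z}^{E}(t)$ interpolates between them and, crucially, independent of the probabilities $\{p_{k}(t)\}$, which never enter the purely mechanical expression for $dW_{k}$. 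This covers the reversible protocol as well as every irreversible protocol sharing the same $\Delta\mathbf{Z}^{E}$.

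For the path dependence of the cumulative macrowork, I would use Eq.~(\ref{d_alpha-W-macro}) to write $\Delta W=\int_{\mathcal{P}_{0}}\sum_{k}p_{k}(t)\,dW_{k}(t)$. Although the integrands $dW_{k}$ are themselves trajectory-wise endpoint-determined, the weights $p_{k}(t)$ depend on the full history: two protocols with identical $\mathbf{Z}_{\text{in}}^{E}$ and $\mathbf{Z}_{\text{fin}}^{E}$ can traverse $\mathfrak{S}^{E}$ through different temperatures and heat-exchange histories, yielding different $p_{k}(t)$ curves and thus different weighted averages. A compact way to exhibit this is to combine the first law (\ref{FirstLaw-SI}) with the result of step one to obtain $\Delta W=\Delta Q-\Delta E$; here $\Delta E=\sum_{k}\bigl[p_{k,\text{fin}}E_{k}(\mathbf{Z}_{\text{fin}}^{E})-p_{k,\text{in}}E_{k}(\mathbf{Z}_{\text{in}}^{E})\bigr]$ depends explicitly on the process-dependent $\{p_{k,\text{fin}}\}$, and $\Delta Q=\int TdS$ patently depends on the path in state space, so $\Delta W$ inherits both dependencies. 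A clean contrast is the isentropic protocol, for which $dp_{k}\equiv 0$ gives $\Delta W=\sum_{k}p_{k,\text{in}}\Delta W_{k}$, versus a general protocol where the $p_{k}(t)$ are nontrivial.

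The main obstacle is the classical case flagged in Sec.~\ref{Sec-VoluemChange}: when $\Delta|\Gamma|\neq 0$, the identity map underlying the first step is not literally available because some $\delta\mathbf{z}^{\prime}$ in the final phase space have no Hamiltonian preimage in the initial one (and vice versa for contraction). Here I would invoke the trick described in the introduction, enlarging the smaller phase space by adjoining the missing microstates with vanishing probabilities so that an enlarged $1$-to-$1$ trajectory ensemble $\{\gamma_{k}\}$ is restored. On this enlarged ensemble the exactness argument applies verbatim, and since $\Delta W_{k}$ is purely mechanical and oblivious to probabilities, the formal vanishing-probability limit taken after averaging leaves the endpoint formula for $\Delta W_{k}$ intact while retaining the process dependence of $\Delta W$ through the physical $p_{k}(t)$.
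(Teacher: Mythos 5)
Your proof is correct and follows essentially the same route as the paper's: the integral of $dW_{k}=-dE_{k}$ collapses to the endpoint difference because $E_{k}$ is a single-valued microproperty of $\mathbf{Z}^{E}$ alone (independent of $\{p_{k}\}$), while $\Delta W$ is process-dependent because the weights $\{p_{k}\}$ in Eq.~(\ref{d_alpha-W-macro}) depend on the history. Your added remarks (the first-law decomposition of $\Delta W$ and the appeal to the missing-microstate trick when $\Delta\left\vert \Gamma\right\vert \neq0$) are consistent elaborations of the same argument rather than a different one.
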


\begin{proof}
As $E_{k}$ is specific to the microstate $\mathfrak{m}_{k}$, the integral in
Eq. (\ref{micro-works}) is the required difference:%
\begin{equation}
\Delta W_{k}=-\Delta E_{k}=E_{k}(\mathbf{Z}_{\text{in}}^{E})-E_{k}%
(\mathbf{Z}_{\text{in}}^{E}+\Delta\mathbf{Z}^{E}).
\label{Accumulated-Microwork}%
\end{equation}
Let us consider various processes that occur when changing $\mathbf{Z}%
_{\text{in}}^{E}$ by $\Delta\mathbf{Z}^{E}$ to $\mathbf{Z}_{\text{in}}%
^{E}+\Delta\mathbf{Z}^{E}$, regardless of the process. As $E_{k}$\ is a
microproperty, the net difference $\Delta E_{k}$ is the same for all these
processes. As $\left\{  p_{k}\right\}  $ is different for different processes,
the macrowork $\Delta W$, see Eq. (\ref{CumulativeMacroquantity-W_Q}), is
usually different for different processes as expected.
\end{proof}

The theorem has far-reaching consequences. According to this, we can evaluate
$\left\{  \Delta W_{k}\right\}  $ for a single process such as an EQ process
$\mathcal{P}_{0\text{,eq}}$ but can use it for every other possible process
$\mathcal{P}_{0}(\mathbf{Z}_{\text{in}}^{E}\rightarrow\mathbf{Z}_{\text{fin}%
}^{E})$. On the other hand, the accumulated macrowork $\Delta W$ over
$\mathcal{P}_{0}$ is affected by $p_{k}$, see Eq. (\ref{d_alpha-W-macro}), so
it is different for different processes $\mathcal{P}_{0}(\mathbf{Z}%
_{\text{in}}^{E}\rightarrow\mathbf{Z}_{\text{fin}}^{E})$.

The identification in Eq. (\ref{microwork-microenergy}) or in
(\ref{Accumulated-Microwork}) is the most important feature of the $\mu$NEQT
that distinguishes it from current stochastic thermodynamic approaches, which
invariably identifies $d_{\text{e}}W_{k}$ with $-dE_{k}$ or $\Delta_{\text{e}%
}W_{k}$ with $-\Delta E_{k}\ $\cite{Gujrati-GeneralizedWork}; see also Remark
\ref{Remark-IncorrectIdentification-Energy}.

As the set $\left\{  \Delta W_{k}\right\}  $ is the same in all possible
processes with $\mathbf{Z}_{\text{in}}^{E}\rightarrow\mathbf{Z}_{\text{in}%
}^{E}+\Delta\mathbf{Z}^{E}$, we can introduce a \emph{random variable}
$\mathsf{W}$ such that it takes the value (outcome) $W_{k}=-E_{k}$, or a
random variable $d\mathsf{W}$ that takes the value $dW_{k}=-dE_{k}$ in
$\mathfrak{m}_{k}$ with probability $p_{k}$.

\begin{corollary}
\label{Corollary-SpontaneousEnergyDecrease}For a spontaneous process in an
isolated system,%
\begin{equation}
\Delta_{\text{i}}W_{k}\geq0,\Delta_{\text{i}}W\geq0.
\label{Spontaneous-InternalWork}%
\end{equation}

\end{corollary}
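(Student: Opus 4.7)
The plan is to leverage the fact that the isolated-system condition collapses the macro/micro splittings used earlier, reducing $d_{\text{i}}W$ to $dW$ and $d_{\text{i}}W_{k}$ to $dW_{k}$, and then to bring in the second law (for the macro statement) and the monotonicity of $E_{k}(\mathbf{Z}^{E})$ along the spontaneous direction (for the micro statement). I would organize the proof in two independent pieces, handling the macro inequality first because it follows mechanically from results already in the excerpt, and treating the microstate inequality second because it requires a Hamiltonian/mechanical argument.

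For the macro part, I would first note that for an isolated system $d_{\text{e}}W\equiv 0$, so $d_{\text{i}}W=dW$ and hence $\Delta_{\text{i}}W=\Delta W$. Next I would invoke Theorem~\ref{Theorem-diW-diS-Isolated}, which gives $Td_{\text{i}}S=d_{\text{i}}W=\mathbf{A}\cdot d\boldsymbol{\xi}\geq 0$ from the second law applied in the enlarged state space $\mathfrak{S}$; integrating along the spontaneous process $\mathcal{P}$ yields $\Delta_{\text{i}}W=\int_{\mathcal{P}}Td_{\text{i}}S\geq 0$ since $T>0$ and $d_{\text{i}}S\geq 0$ at every instant during the spontaneous relaxation. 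This is essentially Eq.~(\ref{Isolated-Dissipated-Macrowork}) restricted to the isolated case, and needs no new ingredients.

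For the micro part, I would start from Theorem~\ref{Theorem-dWk-dEk}, which gives $dW_{k}=-dE_{k}$ purely from the Hamiltonian structure, together with the isolated-system identification $d_{\text{e}}W_{k}=0$, so $d_{\text{i}}W_{k}=dW_{k}=-dE_{k}$. Then by Theorem~\ref{Theorem-Net-MicroWork} the cumulative value $\Delta_{\text{i}}W_{k}=E_{k}(\mathbf{Z}_{\text{in}}^{E})-E_{k}(\mathbf{Z}_{\text{fin}}^{E})$ depends only on the endpoints $\mathbf{Z}_{\text{in}}^{E},\mathbf{Z}_{\text{fin}}^{E}$ in $\mathfrak{S}^{E}$, not on the trajectory in probability space. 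The remaining task is to argue that for a \emph{spontaneous} isolated process the change $\Delta\mathbf{Z}^{E}$ is necessarily in the direction in which $E_{k}(\mathbf{Z}^{E})$ is non-increasing for every $k$: the expanding phase-space volume (or, more generally, the relaxation of the internal variable toward its equilibrium value driven by the affinity) lowers the one-particle confinement energies and hence each many-body eigenenergy, so $E_{k}(\mathbf{Z}_{\text{fin}}^{E})\leq E_{k}(\mathbf{Z}_{\text{in}}^{E})$. This gives $\Delta_{\text{i}}W_{k}\geq 0$ microstate by microstate.

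The main obstacle is precisely this last step: unlike the macro inequality, which is a direct consequence of the second law, the micro inequality is not forced by thermodynamics at all — as the paper itself emphasizes, $\Delta_{\text{i}}W_{k}$ has no fixed sign in a general process. The argument therefore has to be anchored in the specific Hamiltonian structure of the spontaneous direction (e.g.\ $dV>0$ forcing $\partial E_{k}/\partial V\leq 0$ through the standard confinement-energy scaling, and analogously $A_{k}d\xi\geq 0$ once $\xi$ is aligned with the equilibrating direction). I would therefore state the result as contingent on this monotonicity property of $E_{k}(\mathbf{Z}^{E})$ along the spontaneous direction, verify it explicitly in the free-expansion setting of Sec.~\ref{Sec-ClassicalExpansion}, and note that in a non-spontaneous (externally driven) protocol the same mechanical identity $\Delta W_{k}=-\Delta E_{k}$ holds but the sign is no longer fixed, which is exactly the remark made in the abstract.
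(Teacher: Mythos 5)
Your proposal is correct and its core coincides with the paper's: for an isolated system $\Delta_{\text{i}}W_{k}=\Delta W_{k}=-\Delta E_{k}$ (via Theorems \ref{Theorem-dWk-dEk} and \ref{Theorem-Net-MicroWork} and Eq.\ (\ref{Accumulated-Microwork})), and everything hinges on showing $E_{k}(\mathbf{Z}_{\text{fin}}^{E})\leq E_{k}(\mathbf{Z}_{\text{in}}^{E})$ along the spontaneous direction. You differ from the paper in two places. First, for the macro inequality the paper simply averages the micro inequality ($\Delta_{\text{i}}W_{k}\geq0$ for all $k$ with nonnegative $p_{\gamma_{k}}$ immediately gives $\Delta_{\text{i}}W\geq0$), whereas you derive it independently from the second law via Theorem \ref{Theorem-diW-diS-Isolated} and Eq.\ (\ref{Isolated-Dissipated-Macrowork}); your route is more robust in that it does not inherit the fragility of the microstate step, but it proves less, since the whole point of the corollary is that the sign is fixed microstate by microstate, not merely on average. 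Second, for the crucial energy-decrease step the paper appeals in one line to the ``principle of (potential) energy minimization'': since $\mathbf{F}_{k}^{E}=-\partial E_{k}/\partial\mathbf{Z}^{E}$ in Eq.\ (\ref{Generalized force-work}) derives from $E_{k}$ as a potential, a spontaneous process is asserted to decrease $E_{k}$. You instead ground the same inequality in the explicit Hamiltonian structure (confinement energies decreasing with $dV>0$, $A_{k}d\xi\geq0$ in the equilibrating direction) and you state the result as contingent on that monotonicity. Your framing is the more honest one: the paper's per-microstate invocation of energy minimization is itself the weakest link of its proof (it is a statement about the direction of $\Delta\mathbf{Z}^{E}$ selected by the spontaneous relaxation, not a consequence of mechanics alone), and your explicit verification in the free-expansion setting, together with the remark that the sign is not fixed for externally driven protocols, is exactly the content the paper relies on but does not spell out.
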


\begin{proof}
We see from the definition of generalized forces in Eq.
(\ref{Generalized force-work}) that $E_{k}$ acts like the potential energy of
a mechanical system. According to the principle of (potential) energy
minimization, these forces spontaneously cause the isolated mechanical system
to decrease $E_{k}$. Therefore, for a spontaneous process, $E_{k}%
(\mathbf{Z}_{\text{in}}^{E})\geq E_{k}(\mathbf{Z}_{\text{in}}^{E}%
+\Delta\mathbf{Z}^{E})$ in Eq. (\ref{Accumulated-Microwork}). As $\Delta
W_{k}=\Delta_{\text{i}}W_{k}$ for an isolated system, we have $\Delta
_{\text{i}}W_{k}\geq0$; hence $\Delta_{\text{i}}W\geq0$. This proves the corollary.
\end{proof}

\begin{remark}
\label{Remark-IncorrectIdentification-Energy}As $\Delta_{\text{e}}W_{k}$ is
associated with the MI macrowork $\Delta_{\text{e}}W$, it cannot be related to
the difference $dE_{k}$ of microstate energies that are SI quantities. This
clearly shows that the $\mu$NEQT is very different from current stochastic
thermodynamic approaches as noted above.
\end{remark}

The most important aspect of the Hamiltonian trajectory is the identity nature
of $\mathcal{T}$, which ensures that every initial microstate $\mathfrak{m}%
_{k},k\in\mathbb{N}$, is mapped onto itself at the end of $\mathcal{P}$,
although its energy will have changed. Thus, each $\gamma_{k}$ is unique and a
sum over its ensemble $\left\{  \gamma_{k}\right\}  $ is the same as the sum
over $k\in\mathbb{N}$. This is a major simplification of our approach, and
plays a major role in the rest of the study.

\section{Moment Generating Function \label{Sec-MGF0}}

\subsection{Trajectory Probability}

Let us consider an arbitrary process $\mathcal{P}$\ between $\mathfrak{M}%
_{\text{in}}\ $and $\mathfrak{M}_{\text{fin}}$. We consider two terminal
microstates $\mathfrak{m}_{k\text{,in}}\ $and $\mathfrak{m}_{k\text{,fin}}$
along $\mathcal{P}$\ and introduce the following trajectory probability
\cite{Gujrati-JensenInequality} between them for any system, interacting or
isolated, \ \
\begin{equation}
p_{\gamma_{k}}\equiv\frac{%
{\textstyle\int\nolimits_{\gamma_{k}}}
p_{k}dW_{k}}{%
{\textstyle\int\nolimits_{\gamma_{k}}}
dW_{k}}=%
{\textstyle\int\nolimits_{\gamma_{k}}}
p_{k}dw_{k}; \label{trajectory prob}%
\end{equation}
here, $dw_{k}\doteq dW_{k}/\Delta W_{k}$ and is independent of $p_{k}$. We can
generalize the above definition to introduce $p_{\gamma_{k},\alpha}$ by
replacing $dW_{k}$ by $d_{\alpha}W_{k}$; see Ref.
\cite{Gujrati-JensenInequality}.\ We see that
\begin{equation}%
{\textstyle\sum\nolimits_{k}}
p_{\gamma_{k}}\Delta W_{k}=%
{\textstyle\int\nolimits_{\gamma_{k}}}
dW=\Delta W, \label{Thermodynamic-probability}%
\end{equation}
the macrowork as introduced in Eq. (\ref{CumulativeMacroquantity-W_Q}). It is
clear that $%
{\textstyle\sum\nolimits_{k}}
p_{\gamma_{k}}=1$ as expected, and that $p_{\gamma_{k}}$ is nothing but the
(thermodynamic) probability of having a particular value $\Delta
W_{k}=E_{k,\text{in}}-E_{k,\text{fin}}$, determined only by $\Delta
\mathbf{Z}^{E}$ as seen from Theorem \ref{Theorem-Net-MicroWork}; here
$E_{k,\text{in}}=E_{k}(\mathbf{Z}_{\text{in}}^{E})$ and $E_{k,\text{fin}%
}=E_{k}(\mathbf{Z}_{\text{in}}^{E}+\Delta\mathbf{Z}^{E})$. It is clear that
$p_{\gamma_{k}}$ is the joint probability of $\mathfrak{m}_{k\text{,in}}\ $and
$\mathfrak{m}_{k\text{,fin}}$, which can be expressed in terms of the
conditional probability $p\left(  \left.  \mathfrak{m}_{k\text{,fin}%
}\right\vert \mathfrak{m}_{k\text{,in}}\right)  $:%
\begin{subequations}
\begin{equation}
p_{\gamma_{k}}=p(\mathfrak{m}_{k\text{,in}})p\left(  \left.  \mathfrak{m}%
_{k\text{,fin}}\right\vert \mathfrak{m}_{k\text{,in}}\right)  ,
\label{trajectory prob-cond}%
\end{equation}
where
\begin{equation}
p_{k\text{,in}}\doteq p(\mathfrak{m}_{k\text{,in}}) \label{Initial p_k}%
\end{equation}
is the probability of $\mathfrak{m}_{k\text{,in}}$. If $\mathfrak{M}%
_{\text{in}}$ is an EQ one at $T_{\text{in}}=T_{0}$, then the EQ probability
$p_{k\text{,in}}^{(0)}$ in the canonical ensemble is
\begin{equation}
p_{k\text{,in}}^{(0)}\doteq e^{\beta_{0}(\overline{F}_{\text{in}%
}-E_{k,\text{in}})}, \label{Initial p_k-eq}%
\end{equation}
where $\overline{F}_{\text{in}}$ is the initial EQ free energy and $\beta
_{0}=1/T_{0}$; for an isolated system, $T_{0}$ denotes its EQ temperature.

For a system in IEQ, with $\mathbf{Z}^{E}=(V,\boldsymbol{\xi})$, the
microstate probability looks very similar to the above Boltzmann probability%
\end{subequations}
\begin{equation}
p_{k}=\exp\left\{  \beta\lbrack\Phi-(E_{k}+P_{k}V+\mathbf{A}_{k}%
\cdot\boldsymbol{\xi})]\right\}  , \label{IEQ-p_k}%
\end{equation}
as given in Ref. \cite[Eq. (20) with $\mathbf{F}_{k\text{,BP}}\cdot\mathbf{R}$
replaced by $\mathbf{A}_{k}\cdot\boldsymbol{\xi}$]{Gujrati-LangevinEq}; here
$\Phi$ is the SI thermodynamic potential obtained by ensuring $%
{\textstyle\sum\nolimits_{k}}
p_{k}=1$.

Alternatively, we first determine $S$ using Eq. (\ref{StatisticalEntropy}).
This gives%
\begin{equation}
S=\beta\left(  E+PV+\mathbf{A}\cdot\boldsymbol{\xi-}\Phi\right)  .
\label{StatisticalEntropy-Explicit}%
\end{equation}
Using $p_{k}$ in a process involving IEQ macrostates, we determine
$p_{\gamma_{k}}$ above, which will be used in the rest of the paper. To obtain
the NEQ version of the canonical ensemble, the contribution from $V$ is
conventionally not included. In EQ, we must also remove the contribution from
$\boldsymbol{\xi}$ as it is not an independent variable anymore. This then
gives Eq. (\ref{Initial p_k-eq}).

We now determine $\Phi$ from Eq. (\ref{StatisticalEntropy-Explicit}) to
obtain
\[
\Phi=E-TS+PV+\mathbf{A}\cdot\boldsymbol{\xi.}%
\]
This is the SI potential in terms of system's macroquantities, and should not
be confused with the corresponding\ conventional thermodynamic potential
$\overline{\Phi}=E-T_{0}S+P_{0}V$ in terms of the fields of the medium. In a
NEQ canonical ensemble ($V$ fixed), $\Phi$ reduces to the SI free energy $F$
in Eq. (\ref{SI-FreeEnergy}), whereas the Helmholtz free energy is
$\overline{F}=E-T_{0}S$. In EQ, $\overline{\Phi}=$ $\Phi$ and $\overline{F}%
=F$; otherwise, they are different macroquantities.

\subsection{Moment Generating Function}

We introduce the moment generating function (MGF) for the random variable
$\Delta\mathsf{W}$, with outcomes $\Delta W_{k}$ over $\mathfrak{m}_{k}$,
along $\mathcal{P}$:
\begin{equation}
\mathcal{W}(\left.  \dot{\beta}\right\vert \Delta\mathsf{W})=\left\langle
e^{\dot{\beta}\Delta W}\right\rangle \doteq%
{\textstyle\sum\nolimits_{k}}
p_{\gamma_{k}}e^{\dot{\beta}\Delta W_{k}}, \label{Moment-GFn}%
\end{equation}
where $\dot{\beta}$ is some independent parameter that $p_{\gamma_{k}}$ and
$\Delta W_{k}$ do not depend upon, and the sum is over the ensemble of all
trajectories $\gamma_{k}$ originating at $\mathfrak{m}_{k\text{,in}}$. The
definition is valid for any system, interacting or isolated, and for any
terminal macrostates $\mathfrak{M}_{\text{in}}\ $and $\mathfrak{M}%
_{\text{fin}}$, neither of which has to be an EQ macrostate. Implicit in the
definition is the Hamiltonian characteristic of the process in Eq.
(\ref{Temporal mapping}) and the \emph{thermodynamic nature} of the trajectory
probability $p_{\gamma_{k}}$ in Eq. (\ref{Thermodynamic-probability}). The
latter makes $\mathcal{W}$ a thermodynamic function. The independent parameter
$\dot{\beta}$ should not be confused with any inverse temperature of $\Sigma$.

Various moments of $\Delta\mathsf{W}$\ are obtained by differentiating
$\mathcal{W}$ with respect to the parameter $\dot{\beta}$ and then setting
$\dot{\beta}=0$. For the first two moments, we have%
\begin{subequations}
\begin{align}
\left.  d\mathcal{W}/d\dot{\beta}\right\vert _{\dot{\beta}=0} &  =%
{\textstyle\sum\nolimits_{k}}
p_{\gamma_{k}}\Delta W_{k}=\Delta W,\label{MGF-FirstMoment}\\
\left.  d^{2}\mathcal{W}/d\dot{\beta}^{2}\right\vert _{\dot{\beta}=0} &  =%
{\textstyle\sum\nolimits_{k}}
p_{\gamma_{k}}\left(  \Delta W_{k}\right)  ^{2}=\Delta W^{2}%
>0;\label{MGF-Sec ondMoment}%
\end{align}
$\Delta W^{2}$ introduced above should not be confused with $(\Delta W)^{2}$.
Recalling Eq. (\ref{Thermodynamic-probability}), we see that $\Delta W$ is the
cumulative macrowork along the arbitrary process $\mathcal{P}$ for which there
is no sign restriction. Indeed, all moments derived from $\mathcal{W}$ are
thermodynamic in nature so they remain unchanged under any transformation such
as in Eq. (\ref{Energy-Infinite}) later that leaves $\mathcal{W}$ invariant.

The MGF, apart from yielding all the moments, is also quite useful in
establishing\ the following theorem for $\mathcal{P}_{0}$\ between two EQ
macrostates $\mathfrak{M}_{\text{in,eq}}$ and $\mathfrak{M}_{\text{fin,eq}}$.
\end{subequations}
\begin{theorem}
\label{Theorem-p_k-in-independence}For $\dot{\beta}=\beta_{0}=1/T_{\text{in}}%
$, the initial value of $\beta$ in $\mathfrak{M}_{\text{in,eq}}$, the MGF
becomes independence of the initial values of the probabilities $\left\{
p_{k\text{,in}}^{(0)}\right\}  $ .
\end{theorem}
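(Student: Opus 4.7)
The plan is to evaluate the MGF in Eq.~(\ref{Moment-GFn}) at $\dot{\beta}=\beta_{0}$ directly and show that the initial Boltzmann weights are exactly cancelled by the exponential of the microwork, leaving only objects that are determined by the terminal parameter sets $\mathbf{Z}_{\text{in}}^{E}$ and $\mathbf{Z}_{\text{fin}}^{E}$ (and by $\beta_{0}$), not by the individual probabilities $\{p_{k,\text{in}}^{(0)}\}$.

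First I would unpack the trajectory probability using Eq.~(\ref{trajectory prob-cond}), $p_{\gamma_{k}}=p_{k,\text{in}}\,p(\mathfrak{m}_{k,\text{fin}}|\mathfrak{m}_{k,\text{in}})$, and then invoke the identity-map property of Hamiltonian trajectories established in Eq.~(\ref{Temporal mapping}): because $\mathcal{T}$ takes $\mathfrak{m}_{k,\text{in}}$ deterministically to $\mathfrak{m}_{k,\text{fin}}$ without changing the label $k$, the conditional probability $p(\mathfrak{m}_{k,\text{fin}}|\mathfrak{m}_{k,\text{in}})$ equals $1$, so $p_{\gamma_{k}}=p_{k,\text{in}}^{(0)}$. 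Next I would substitute the equilibrium canonical form Eq.~(\ref{Initial p_k-eq}), $p_{k,\text{in}}^{(0)}=e^{\beta_{0}(\overline{F}_{\text{in}}-E_{k,\text{in}})}$, together with Theorem~\ref{Theorem-Net-MicroWork}, which gives $\Delta W_{k}=E_{k,\text{in}}-E_{k,\text{fin}}$.

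With these two substitutions, the MGF at $\dot{\beta}=\beta_{0}$ becomes
\begin{equation}
\mathcal{W}(\beta_{0})=\sum_{k}e^{\beta_{0}(\overline{F}_{\text{in}}-E_{k,\text{in}})}\,e^{\beta_{0}(E_{k,\text{in}}-E_{k,\text{fin}})}=e^{\beta_{0}\overline{F}_{\text{in}}}\sum_{k}e^{-\beta_{0}E_{k,\text{fin}}}.
\end{equation}
The microstate-resolved initial weights $\{p_{k,\text{in}}^{(0)}\}$ have dropped out: the $E_{k,\text{in}}$ in the Boltzmann numerator is annihilated exactly by the $E_{k,\text{in}}$ appearing in $\Delta W_{k}$. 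The surviving sum is nothing but the partition function of the final-parameter Hamiltonian at $\beta_{0}$, which defines an equilibrium free energy $\overline{F}_{\text{fin}}$ via $e^{-\beta_{0}\overline{F}_{\text{fin}}}=\sum_{k}e^{-\beta_{0}E_{k,\text{fin}}}$, so one obtains the compact form $\mathcal{W}(\beta_{0})=e^{-\beta_{0}\Delta\overline{F}}$, with $\Delta\overline{F}\doteq\overline{F}_{\text{fin}}-\overline{F}_{\text{in}}$. This is manifestly a thermodynamic function of the two terminal parameter sets and of $\beta_{0}$ only, establishing the asserted independence.

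The only genuinely delicate step is the identification $p_{\gamma_{k}}=p_{k,\text{in}}^{(0)}$. The trajectory probability in Eq.~(\ref{trajectory prob}) is defined as a work-weighted integral of $p_{k}$ along $\gamma_{k}$, and one must argue that for a deterministic Hamiltonian trajectory (where $k$ is preserved and stochasticity enters only through the $p_{k}$'s, which do not change along $\gamma_{k}$ since $dW_{k}$ is isentropic, cf.\ Eq.~(\ref{MacroWorks})), the weighted average collapses to the initial value $p_{k,\text{in}}^{(0)}$. Once this is granted, the remainder is algebra. I expect this to be the main obstacle mostly because the definition of $p_{\gamma_{k}}$ is slightly unusual; the cancellation of Boltzmann factors against the microwork exponent is then entirely transparent.
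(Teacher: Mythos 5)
Your central computation---the exact cancellation of $e^{-\beta_{0}E_{k,\text{in}}}$ from the canonical weight against the $e^{\beta_{0}E_{k,\text{in}}}$ coming from $e^{\beta_{0}\Delta W_{k}}$ with $\Delta W_{k}=E_{k,\text{in}}-E_{k,\text{fin}}$---is exactly the mechanism the paper uses, and it alone already proves the theorem. The problem is the step you yourself flag as delicate: setting $p\left(\left.\mathfrak{m}_{k\text{,fin}}\right\vert\mathfrak{m}_{k\text{,in}}\right)=1$ so that $p_{\gamma_{k}}=p_{k\text{,in}}^{(0)}$. That step is both unnecessary for the theorem and wrong in the paper's framework. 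The paper's proof keeps the conditional probability in place and arrives at Eq.~(\ref{Moment-GFn-1}), $\overline{\mathcal{W}}=e^{\beta_{0}\overline{F}_{\text{in}}}\sum_{k}p_{\text{eq}}\left(\left.\mathfrak{m}_{k\text{,fin}}\right\vert\mathfrak{m}_{k\text{,in}}\right)e^{-\beta_{0}E_{k,\text{fin}}}$; independence of $\left\{p_{k\text{,in}}^{(0)}\right\}$ is manifest because the only remnant of the initial distribution is the $k$-independent constant $\overline{F}_{\text{in}}$, while the conditional probabilities and the final microenergies survive untouched.

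Why your step fails: determinism of the Hamiltonian trajectory in $\mathfrak{S}$ fixes the map $\mathfrak{m}_{k\text{,in}}\rightleftarrows\mathfrak{m}_{k\text{,fin}}$ and hence $\Delta W_{k}$, but it does not fix $p_{k}$ along $\gamma_{k}$. The paper is explicit that the microstate probabilities ``will usually change during the process'' because of the microheat $dQ_{k}$; only the microwork is isentropic. So $p_{\gamma_{k}}=\int_{\gamma_{k}}p_{k}\,dw_{k}$ does not collapse to $p_{k\text{,in}}^{(0)}$, and $p\left(\left.\mathfrak{m}_{k\text{,fin}}\right\vert\mathfrak{m}_{k\text{,in}}\right)\neq1$ in general. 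Worse, that substitution is precisely what the paper calls ``a poor approximation'' in Eq.~(\ref{Prob-Approx}): it converts the thermodynamic $\overline{\mathcal{W}}$ into the non-thermodynamic function $\widehat{\mathcal{W}}=e^{-\beta_{0}\Delta\overline{F}}$ of Eqs.~(\ref{MGF}) and (\ref{GujratiRelation0}), i.e., the Jarzynski-type identity the paper is at pains to distinguish from its own result. Your final formula $\mathcal{W}(\beta_{0})=e^{-\beta_{0}\Delta\overline{F}}$ is therefore $\widehat{\mathcal{W}}$, not $\overline{\mathcal{W}}$. The fix is to delete the assumption entirely: substitute $p_{k\text{,in}}^{(0)}=e^{\beta_{0}(\overline{F}_{\text{in}}-E_{k,\text{in}})}$ into $p_{\gamma_{k}}=p_{k\text{,in}}^{(0)}\,p\left(\left.\mathfrak{m}_{k\text{,fin}}\right\vert\mathfrak{m}_{k\text{,in}}\right)$, perform your cancellation, and stop there.
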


\begin{proof}
Recalling Eq. (\ref{trajectory prob-cond}) and setting $p(\mathfrak{m}%
_{k\text{,in}})=p_{k\text{,in}}^{(0)}$, we see that the MGF reduces to a
function $\overline{\mathcal{W}}$ given by%
\begin{equation}
\overline{\mathcal{W}}(\left.  \beta_{0}\right\vert \Delta\mathsf{W}%
)=e^{\beta_{0}F_{\text{in}}}%
{\textstyle\sum\limits_{k}}
p_{\text{eq}}\left(  \left.  \mathfrak{m}_{k\text{,fin}}\right\vert
\mathfrak{m}_{k\text{,in}}\right)  e^{-\beta_{0}E_{k,\text{fin}}},
\label{Moment-GFn-1}%
\end{equation}
which clearly shows its independence from the initial probabilities $\left\{
p_{k\text{,in}}^{(0)}\right\}  $.
\end{proof}

The theorem proves extremely useful in the trick in Sec.
\ref{Sec-ClassicalExpansion} that is needed to overcome phase space volume
change. The bar on $\overline{\mathcal{W}}$ is a reminder that we are
considering a NEQ\ process $\mathcal{P}_{0}$\ between two EQ macrostates. We
also note that since $\beta_{0}$ is no longer an independent parameter,
$\overline{\mathcal{W}}$ is no longer a MGF.

\subsection{Some Special Cases}

The sum in $\overline{\mathcal{W}}$ turns into an EQ partition function if we
set $p\left(  \left.  \mathfrak{m}_{k\text{,fin}}\right\vert \mathfrak{m}%
_{k\text{,in}}\right)  =1,$ $\forall k$, a poor approximation, in which case
\begin{equation}
p_{k\text{,fin}}=p_{k\text{,in}}^{(0)},~p_{\gamma_{k}}(\mathcal{P}%
_{0})=p_{k\text{,in}}^{(0)},\forall k. \label{Prob-Approx}%
\end{equation}
This leads to a new function $\widehat{\mathcal{W}}$ (the caret on top is a
reminder of the choice in Eq. (\ref{Prob-Approx})),\ defined by%
\begin{equation}
\widehat{\mathcal{W}}(\left.  \beta_{0}\right\vert \Delta\mathsf{W}%
)=\widehat{\left\langle e^{\beta_{0}\Delta\mathsf{W}}\right\rangle }\doteq%
{\textstyle\sum\nolimits_{k}}
p_{k\text{,in}}^{(0)}e^{\beta_{0}\Delta W_{k}}, \label{MGF}%
\end{equation}
where the "average" denoted by $\widehat{\left\langle {}\right\rangle }$ is
with respect to $\left\{  p_{k\text{,in}}^{(0)}\right\}  $ as trajectory
probabilities. The new function simplifies to
\begin{equation}
\widehat{\mathcal{W}}(\left.  \beta_{0}\right\vert \left\{  \Delta
\mathsf{W}\right\}  )=e^{-\beta_{0}\Delta\overline{F}},
\label{GujratiRelation0}%
\end{equation}
where $\Delta\overline{F}\doteq\overline{F}_{\text{fin}}-\overline
{F}_{\text{in}}$. Comparing the "average" macrowork
\begin{equation}
\widehat{\left\langle \Delta\mathsf{W}\right\rangle }\doteq%
{\textstyle\sum\nolimits_{k}}
p_{k\text{,in}}^{(0)}\Delta W_{k}. \label{Reduced-Macrowork}%
\end{equation}
with Eq. (\ref{MGF-FirstMoment}), we conclude that%
\[
\Delta W\neq\widehat{\left\langle \Delta\mathsf{W}\right\rangle }.
\]
In general, the non-thermodynamic probability choice in $\widehat{\mathcal{W}%
}$ will not result in a thermodynamic macrowork $\Delta W$. It is clear that
the assumption $p_{\gamma_{k}}=p_{k\text{,in}}^{(0)}$ does not result in
thermodynamic average; see \cite{Gujrati-JensenInequality} for more details.
As $\widehat{\mathcal{W}}$ is a non-thermodynamic function, we will refer to
Eq. (\ref{GujratiRelation0}) as a \emph{mathematical identity} to
differentiate it from a thermodynamic identity. Despite this, it is an
interesting function in that it can be evaluated in a closed form as seen
above. The assumption of a constant $p_{\gamma_{k}},\forall k$ in Eq.
(\ref{Prob-Approx}) and the choice $\dot{\beta}=\beta_{0}$ has been
popularized by Jarzynski through his WFT to be discussed below.

The issue of \emph{non-thermodynamic }averaging was first raised by Cohen and
Mauzerall \cite{Cohen0,Cohen00} in the context of MFTs. But its most
significant consequence is about justifying the second law as discussed
earlier with respect to Eqs. (\ref{Jarzynski-Macrowork}%
-\ref{JarzynskiConjecture}). The issue has been discussed and settled only
recently in Ref. \cite{Gujrati-JensenInequality}. We will establish below that
the non-thermodynamic\emph{ identity} in Eq. (\ref{GujratiRelation0}) is
satisfied even for free expansion as expected as the $\mu$NEQT is perfectly
capable of describing isolated systems. For free expansion, $\Delta W_{k}$
does not identically vanish for all $k$ as we clearly see from Eq.
(\ref{FreeExpansion-Macrowork}). In this regard, our approach is very
different from the one used in the WFT, to which we now turn.

\subsection{The WFT\label{Sec-WFT}}

The non-thermodynamic function $\widehat{\mathcal{W}}$ is closely related to
the well-known Jarzynski's WFT \cite{Jarzynski} given below in Eq.
(\ref{JarzynskiRelation}). We first introduce the function used by Jarzynski
\
\begin{subequations}
\begin{equation}
\mathcal{W}_{\text{J}}(\left.  \beta_{0}\right\vert \Delta_{\text{e}%
}\mathsf{W})\doteq%
{\textstyle\sum\nolimits_{k}}
p_{k\text{,in}}^{(0)}e^{\beta_{0}\Delta_{\text{e}}W_{k}},
\label{JarzynskiRelation0}%
\end{equation}
which is obtained by replacing $\Delta\mathsf{W}$ by $\Delta_{\text{e}%
}\mathsf{W}$ in $\widehat{\mathcal{W}}(\left.  \beta_{0}\right\vert \left\{
\Delta\mathsf{W}\right\}  )$. Here, $\Delta_{\text{e}}\mathsf{W}$ is assumed
to be a random variable with outcomes $\Delta_{\text{e}}W_{k}$ over
$\mathfrak{m}_{k}$ with some probability, and the suffix "J" is a reminder for
Jarzynski's non-thermodynamic average in which the probability is replaced by
$p_{k\text{,in}}^{(0)}$. Jarzynski further assume, without proof, that
\end{subequations}
\begin{equation}
\Delta_{\text{e}}W_{k}\overset{\text{assump}}{=}\Delta W_{k}=-\Delta
E_{k},\forall k, \label{Jarzynski-WrongEquivalence}%
\end{equation}
as has been discussed recently for its validity \cite{Gujrati-GeneralizedWork}%
. This work-energy assumption is in addition to the non-thermodynamic average
conjecture in defining $\widehat{\mathcal{W}}$ in Eq. (\ref{MGF}), and
violates Theorem \ref{Theorem-Net-MicroWork}. With the two assumptions,
$\mathcal{W}_{\text{J}}$ becomes%
\begin{equation}
\mathcal{W}_{\text{J}}(\left.  \beta_{0}\right\vert \Delta_{\text{e}%
}\mathsf{W})=e^{-\beta_{0}\Delta\overline{F}}, \label{JarzynskiRelation}%
\end{equation}
which is the well-known WFT. Let us consider the following average
\begin{equation}
\Delta_{\text{e}}W_{\text{J}}\doteq\widehat{\left\langle \Delta_{\text{e}%
}\mathsf{W}\right\rangle }=%
{\textstyle\sum\nolimits_{k}}
p_{k\text{,in}}^{(0)}\Delta_{\text{e}}W_{k}. \label{Del_e-W_J}%
\end{equation}
From $\Delta_{\text{e}}W$ in Eq. (\ref{d_alpha-W-macro}), we see that in
general
\[
\Delta_{\text{e}}W_{\text{J}}\neq\Delta_{\text{e}}W=-R.
\]
This should be contrasted with the fundamental assumption of the WFT in Eq.
(\ref{JarzynskiConjecture}), which has already been questioned earlier
\cite{Cohen0,Gujrati-JensenInequality}. Thus, the conjecture in Eq.
(\ref{JarzynskiConjecture}) cannot be justified.\ Indeed, by considering an EQ
process in an interacting system, for which $\Delta_{\text{e}}W=\Delta
W=-\Delta\overline{F}$, it has been shown \ \cite[see Eq.(20) there]%
{Gujrati-JensenInequality} that%
\[
\Delta_{\text{e}}W_{\text{J}}<\Delta_{\text{e}}W,
\]
violating Eq. (\ref{JarzynskiConjecture}). For an isolated system undergoing
free expansion for which $\Delta_{\text{e}}W_{k}\equiv0$, the "mathematical
identity" in Eq. (\ref{JarzynskiRelation}) obviously fails \cite{Sung,Gross}.
Jarzynski \cite{Jarzynski-Gross} argues that such a system does not start in
EQ so the WFT should not apply there; however, see \cite{KestinNote} for counter-argument.

The "mathematical identity" in Eq. (\ref{JarzynskiRelation}) is evidently
different from the previous mathematical identity in Eq
(\ref{GujratiRelation0}) unless the above work-energy assumption in Eq.
(\ref{Jarzynski-WrongEquivalence}) is taken to be \emph{valid}; then the two
are the same. However, the WFT is considered a mathematical identity satisfied
for a class of NEQ processes by most workers in the field who have not
appreciated the implications of Eq (\ref{GujratiRelation0}). Let us determine
these implications within the $\mu$NEQT. It follows from Eq.
(\ref{Jarzynski-WrongEquivalence}) that\ $\Delta_{\text{i}}W_{k}=0,\forall k$;
see Theorem \ref{Theorem-Net-MicroWork}. Consequently, we must have
$\Delta_{\text{i}}W=0$ as noted in Sec. \ref{Sec-Prolog} so no irreversibility
is captured by the WFT as was first concluded a while back in Ref.
\cite{Gujrati-GeneralizedWork}. Using thermodynamic probabilities
$p_{\gamma_{k}}$ with the assumption in Eq. (\ref{Jarzynski-WrongEquivalence}%
), we find $\Delta W=\Delta_{\text{e}}W$ as a thermodynamic consequence. This
says nothing about $\widehat{\left\langle \Delta\mathsf{W}\right\rangle }$ or
$\Delta_{\text{e}}W_{\text{J}}$ that are not thermodynamic quantities.
However, it is commonly believed that $\Delta_{\text{e}}W_{\text{J}}%
=\Delta_{\text{e}}W$, which cannot be justified
\cite{Gujrati-JensenInequality}.

\section{The Free Expansion in The MNEQT\label{Sec-FreeExpansion-MNEQT}}

Our derivation of the identity in Eq. (\ref{GujratiRelation0}) is exact so it
should be valid for all processes including free expansion $\mathcal{P}_{0}$
as we will now show. In free expansion, there is no exchange of any kind so
$d=d_{i}$. This simplifies our notation as we do not need to use $d_{\text{i}%
}$ when referring to $\Sigma$. The gas $\Sigma$ expands freely in a vacuum
($\widetilde{\Sigma}$) from $V_{\text{in}}$, the volume of the left chamber,
to $V_{0}=V_{\text{fin}}$, the volume of $\Sigma_{0}$; the volume of the right
chamber is $V_{\text{fin}}-V_{\text{in}}$. The vacuum exerts no pressure
($\widetilde{P}=P_{\text{vacuum}}=0$). The left (L) and right (R) chambers are
initially separated by an impenetrable partition, shown by the solid partition
in Fig. \ref{Fig_Expansion}(a), to ensure that they are thermodynamically
independent regions, with all the $N$ particles of $\Sigma$ in the left
chamber, which are initially in an EQ macrostate $\mathfrak{M}_{\text{in,eq}}$
with entropy $S_{\text{in}}$. For ideal gas, we have%
\[
S_{\text{in}}=N\ln(eV_{\text{in}}/N);
\]
here, we are not including a temperature-dependent function \cite{Landau},
which does not play any role as we will be considering an isothermal free
expansion. The initial pressure and temperature of the gas prior to expansion
at time $t=0$ are $P_{\text{in}}$ and $T_{\text{in}}=T_{0}$, respectively,
that are related to $E_{0}=E_{\text{in}}$ and $V_{\text{in}}$ by its EQ
equation of state. As $\Sigma_{0}$ is isolated, the expansion occurs at
\emph{constant} energy $E_{0}$, which is also the energy of $\Sigma$.

It should be stated, which is also evident from Fig. \ref{Fig_Expansion}(b),
that while the removal of the partition is instantaneous, the actual process
of gas expanding in the right chamber is continuous and gradually fills it.
This is obviously a very complex internal process in a highly inhomogeneous
macrostate. As thus, it will require many internal variables to describe
different number of particles, different energies, different pressures,
different flow pattern which may be even chaotic, etc. in each of the
chambers. For example, we can divide the volume $V_{\text{fin}}$ into many
layers of volume parallel to the partition, each layer in equilibrium with
itself but need not be with others; see the example in Sec.
\ref{Sec-InternalVariables}. Here, we will simplify and take a single internal
variable $\xi$
\ \ \ \ \ \ \ \ \ \ \ \ \ \ \ \ \ \ \ \ \ \ \ \ \ \ \ \ \ \ \ \ \ \ \ \ \ \ \ \ \ \ \ \ \ \ \ \ \ \ \ \ \ \ \ \ \ \ \ \ \ \ \ \ \ \ \ \ \ \ \ \ \ \ \ \ \ \ \ \ \ \ \ \ \ \ \ \ \ \ \ \ \ \ \ \ \ \ \ \ \ \ \ \ \ \ \ \ \ \ \ \ \ \ \ \ \ \ \ \ \ \ \ \ \ \ \ \ \ \ \ \ \ \ \ \ \ \ \ \ \ \ \ \ \ \ \ \ \ \ \ \ \ \ \ \ \ \ \ \ \ \ \ \ \ \ \ \ \ \ \ \ \ \ \ \ \ \ \ \ \ \ \ \ \ \ \ \ \ \ \ \ \ \ \ \ \ \ \ \ \ \ \ \ \ \ \ \ \ \ \ \ \ \ \ \ \ \ \ \ \ \ \ \ \ \ \ \ \ \ \ \ \ \ \ \ \ \ \ \ \ \ \ \ \ \ \ \ \ \ \ \ \ \ \ \ \ \ \ \ \ \ \ \ \ \ \ \ \ \ \ \ \ \ \ \ \ \ \ \ \ \ \ \ \ \ \ \ \ \ \ \ \ \ \ \ \ \ \ \ \ \ \ \ \ \ \ \ \ \ \ \ \ \ \ \ \ \ \ \ \ \ \ \ \ \ \ \ \ \ \ \ \ \ \ \ \ \ \ \ \ \ \ \ \ \ \ \ \ \ \ \ \ \ \ \ \ \ \ \ \ \ \ \ \ \ \ \ \ \ \ \ \ \ \ \ \ \ \ \ \ \ \ \ \ \ \ \ \ \ \ \ \ \ \ \ \ \ \ \ \ \ \ \ \ \ \ \ \ \ \ \ \ \ \ \ \ \ \ \ \ \ \ \ \ \ \ \ \ \ \ \ \ \ \ \ \ \ \ \ \ \ \ \ \ \ \ \ \ \ \ \ \ \ \ \ \ \ \ \ \ \ \ \ \ \ \ \ \ \ \ \ \ \ \ \ \ \ \ \ \ \ \ \ \ \ \ \ \ \ \ \ \ \ \ \ \ \ \ \ \ \ \ \ \ \ \ \ \ \ \ \ \ \ \ \ \ \ \ \ \ \ \ \ \ \ \ \ \ \ \ \ \ \ \ \ \ \ \ \ \ \ \ \ \ \ \ \ \ \ \ \ \ \ \ \ \ \ \ \ \ \ \ \ \ \ \ \ \ \ \ \ \ \ \ \ \ \ \ \ \ \ \ \ \ \ \ \ \ \ \ \ \ \ \ \ \ \ \ \ \ \ \ \ \ \ \ \ \ \ \ \ \ \ \ \ \ \ \ \ \ \ \ \ \ \ \ \ \ \ \ \ \ \ \ \ \ \ \ \ \ \ \ \ \ \ \ \ \ \ \ \ \ \ \ \ \ \ \ \ \ \ \ \ \ \ \ \ \ \ \ \ \ \ \ \ \ \ \ \ \ \ \ \ \ \ \ \ \ \ \ \ \ \ \ \ \ \ \ \ \ \ \ \ \ \ \ \ \ \ \ \ \ \ \ \ \ \ \ \ \ \
\begin{equation}
\xi\doteq N_{\text{L}}-N_{\text{R}} \label{xi-free-expansion}%
\end{equation}
by considering only two layers to describe different numbers $N_{\text{L}%
}=(N+\xi)/2$ and $N_{\text{R}}=(N-\xi)/2$ of particles as a function of time.
At each instant, we imagine a front of the expanding gas shown by the solid
vertical line in Fig. \ref{Fig_Expansion}(b) containing all the particles to
its left. We denote this volume by a time-dependent $V=V(t)$ to the right of
which exists a vacuum. This means that at each instant when there is a vacuum
to the right of this front, the gas is expanding against zero pressure so that
$d_{\text{e}}W=0$. Since we have a NEQ expansion, $dW>0$. As $V(t)$ cannot be
controlled externally, it also represents an internal variable. The two
internal variables $\xi(t)$\ and $V(t)$\ allow us to distinguish between
$\mathcal{P}$ and $\mathcal{\bar{P}}$ as we will see below. We assume that the
expansion is isothermal so there is no additional internal variable associated
with temperature variation. As $dQ=dW\neq0$, the expansion is irreversible so
the entropy continues to change (increase).

At $t=0$, the partition is suddenly removed, shown by the broken partition in
Fig. \ref{Fig_Expansion}(b) and the gas expands freely to the final volume
$V(t^{\prime})=V_{\text{fin}}$ at time $t^{\prime}<\tau_{\text{eq}}$ during
$\mathcal{P}$. At $t^{\prime}$, the free expansion stops but there is no
reason a priori for $\xi=0$ so the gas is still inhomogeneous ($\xi\neq0$).
This is in a NEQ macrostate until $\xi$ achieves its EQ value $\xi=0$ during
$\mathcal{\bar{P}}$, at the end of which at $t=\tau_{\text{eq}}$ the gas
eventually comes into $\mathfrak{M}_{\text{fin,eq}}$ isoenergetically. We
briefly review this expansion in the MNEQT \cite{Gujrati-Heat-Work}.

We work in the state space $\mathfrak{S}$. Using Eq.
(\ref{GeneralizedHeat-General}), we have%
\begin{subequations}
\begin{equation}
dS(t)=dW(t)/T(t). \label{dS-dW-Free Expansion}%
\end{equation}
Setting $P_{0}=0$ in Eq. (\ref{diW}), we have
\begin{equation}
dW(t)=\left\{
\begin{array}
[c]{c}%
P(t)dV(t)+A(t)d\xi(t)\text{ for }t<t^{\prime}<\tau_{\text{eq}},\\
A(t)d\xi(t)\text{ for }t^{\prime}<t\leq\tau_{\text{eq}};
\end{array}
\right.  \label{dW-Free Expansion}%
\end{equation}
here, we have used the fact that $V(t)$ does not change for $\tau^{\prime
}<t\leq\tau_{\text{eq}}$. Thus,
\end{subequations}
\begin{align*}
\Delta S  &  =\int_{\mathcal{P}_{0}}\frac{dW(t)}{T(t)}>0,\\
\Delta Q  &  =\int_{\mathcal{P}_{0}}dW(t)=\Delta W>0;
\end{align*}
the last equation is the fundamental identity in Eq. (\ref{diQ-diW}). The
irreversible entropy change $\Delta S$\ from EQ macrostate from $\mathfrak{M}%
_{\text{in,eq}}$ to $\mathfrak{M}_{\text{fin,eq}}$ during $\mathcal{P}_{0}%
$\ is the EQ entropy change $\Delta_{\text{i}}S$ is%
\begin{equation}
\Delta S\equiv S_{\text{fin}}-S_{\text{in}}, \label{Entropy-Change-Eq}%
\end{equation}
and can be directly obtained if the EQ entropy $S(E,V)$ is known. The above
analysis is also valid for any arbitrary free expansion process $\mathcal{P}$;
we must carry out the integration over $\mathcal{P}$\ above. We can evaluate
$\Delta_{\text{i}}S$ by using $S$ from Eq. (\ref{StatisticalEntropy-Explicit}).

The above exercise allows us to identify $\Delta W$ as the dissipated work
over the entire process even if the process. We have $\Delta W=\Delta
_{\text{i}}W=T_{0}\Delta_{\text{i}}S$ for an isothermal process $T(t)=T_{0}$;
see also \ Eq. (\ref{Iso-Delta_W_i}). However, with the inclusion of the
internal variables above, we are also able to determine $d_{\text{i}%
}S=dS=dW/T$ using Eq. (\ref{dW-Free Expansion}) for any infinitesimal segment
$\delta\mathcal{P}$ of the process in the MNEQT that was one of our goals.
Thus, $d_{\text{i}}W=Td_{\text{i}}S$ is the infinitesimal "dissipated work"
over $\delta\mathcal{P}$ but the relationship contains $T$ and not $T_{0}$ as
it must; see Theorem \ref{Theorem-diW-diS-Isolated}.

Let us consider an ideal gas for which $V_{\text{fin}}=2V_{\text{in}}$ so that
$\Delta_{\text{i}}S=N\ln2$, a well-known result \cite{Prigogine}. Here, we
provide a more general result for the entropy for $t\leq t^{\prime}$, which
can be trivially determined:%
\[
S(t,\xi)=N_{\text{L}}\ln(eV_{\text{in}}/N_{\text{L}})+N_{\text{R}}%
\ln(eV^{\prime}/N_{\text{R}}),
\]
with $\xi>0$; here $V^{\prime}=V-V_{\text{in}}$. Thus, for arbitrary $\xi$, we
have $\Delta_{\text{i}}S(t,\xi)=S(t,\xi)-S_{\text{in}}$. At EQ, not only
$V_{\text{fin}}^{\prime}=V_{\text{fin}}-V_{\text{in}}$, but also $\xi=0$ so
the EQ entropy is given by
\[
S_{\text{fin}}=(N/2)\ln(2eV_{\text{in}}/N)+(N/2)\ln(2eV_{\text{fin}}^{\prime
}/N),
\]
which is consistent with $\Delta_{\text{i}}S=N\ln2$, as expected. We can also
take the initial macrostate to be not an EQ one in $\mathcal{P}$ by using one
or more additional internal variables. Thus, the approach is very general.

\section{The Free Expansion in The $\mu$NEQT\label{Sec-FreeExpansion}}

\subsection{Quantum Free Expansion \label{Sec-QuantumExpansion}}

The expansion/contraction of a one-dimensional quantum \emph{ideal} gas with
moving walls has been treated in many different ways
\cite{Rice,Fojon,Martino,Cooney} but none deal with sudden expansion. The
latter, however, has been studied
\cite{Bender,Gujrati-QuantumHeat,Gujrati-JensenInequality} quantum
mechanically (without any $\xi$) as a particle in an isolated box $\Sigma_{0}%
$\ of length $L_{\text{fin}}$, which we restrict to $2L_{\text{in}}$ here,
with rigid, insulating walls. We briefly revisit this study and expand on it
by introducing a $\xi$ to set the stage for the classical expansion using the
$\mu$NEQT\ in the following section. We will follow Ref.
\cite{Gujrati-JensenInequality} closely.

We make the very simplifying assumptions in the previous section to introduce
$\xi$. At time $t=0$, all the $N$ particles (or their wavefunctions) are
confined in EQ in the left chamber of length $L_{\text{in}}$ so that
$N_{\text{L}}=N$ initially. We can think of an intermediate length
$L_{\text{fin}}\geq L(t)>L_{\text{in}}$, in analogy with $V(t)$ in the
previous section, so that $N_{\text{R}}=N-N_{\text{L}}$ particles are
simultaneously confined in the intermediate chamber of size $L(t)$, while
$N_{\text{L}}$ particles are still confined in the left chamber for all $t>0$
as we did in the previous section. Eventually, at $t=\tau_{\text{eq}}$, all
the $N_{\text{R}}=N$ particles are confined in the larger chamber of size
$L_{\text{fin}}$ so that there are no particles in the initial chamber. We let
$\xi=N_{\text{L}}$, which gradually decreases from $\xi=N$ to $\xi=0$. Note
that this definition is different from the previous section but we make this
choice for the sake of simplicity.\ At some intermediate time $\tau^{\prime
}<\tau_{\text{eq}}$ that identifies $\mathcal{P}$, $L(t)=L_{\text{fin}}$, but
$N_{\text{R}}$ is still not equal to $N$ ($\xi\neq0$). We then follow its
equilibration during $\mathcal{\bar{P}}$ as the gas come to EQ in the larger
chamber at the end of $\mathcal{P}_{0}$ when $\xi=0$. Again, there are two
internal variables $L$ and $\xi$. The expansion is isoenergetic at each
instant. As we will see below, this means that it is also isothermal. However,
$dQ=dW\neq0$ ensuring a irreversible process so the microstate probabilities
continue to change.

Since we are dealing with an ideal gas, we can focus on a single particle
whose energy levels are in appropriate units $E_{k}=k^{2}/l^{2}$, where $l$ is
the length of the chamber confining it. The single-particle partition function
for arbitrary $l$ and inverse temperature $\beta=1/T$\ is given by
\[
Z(\beta,l)=%
{\textstyle\sum\nolimits_{k}}
e^{-\beta E_{k}(l)},
\]
from which we find that the single particle free energy is $\overline
{F}=-(T/2)\ln(\pi Tl^{2}/4)$ and the average single particle energy is
$E=1/2\beta$, which depends only on $\beta$ but not on $l$. Assuming that the
gas is in IEQ so that the particles in each of the two chambers are in EQ (see
the second example in Sec. \ref{Sec-InternalVariables}) at inverse
temperatures $\beta_{\text{L}}$ and $\beta$, we find that the $N$-particle
partition function is given by%
\[
Z_{N}(\beta_{\text{L}},\beta)=\left[  Z(\beta_{\text{L}},L_{\text{in}%
})\right]  ^{\xi}\left[  Z(\beta,L)\right]  ^{N-\xi}%
\]
so that the average energy is $E_{N}(\beta_{\text{L}},\beta,L_{\text{in}%
},L,\xi)=\xi/2\beta_{\text{L}}+(N-\xi)/2\beta$. As this must equal
$N/2\beta_{0}$ for all values of $L$ and $\xi$, it is clear that
$\beta_{\text{L}}=\beta=\beta_{0}$, which proves the above assertion of an
isothermal free expansion at $T_{0}$.

To determine $\Delta W_{k}$, we merely have to determine the microenergy
change $\Delta E_{k}=E_{k\text{,fin}}-E_{k\text{,in}}$. It is trivially seen
that Eq. (\ref{GujratiRelation0}) is satisfied as was reported earlier
\cite{Gujrati-GeneralizedWork,Gujrati-JensenInequality}.

Below we will show that the quantum calculation here deals with an
irreversible $\mathcal{P}_{0}$. The single-particle energy change $\Delta
E_{k}$ is
\[
\Delta E_{k}=k^{2}(1/L^{2}-1/L_{\text{in}}^{2})<0,L>L_{\text{in}}.
\]
The micropressure%
\begin{equation}
P_{k}=-\partial E_{k}/\partial L=2E_{k}/L\neq0 \label{P_k-FreeExpansion}%
\end{equation}
determines the microwork%
\begin{equation}
\Delta W_{k}=%
{\displaystyle\int\nolimits_{L_{\text{in}}}^{L_{\text{fin}}}}
P_{k}dL>0. \label{DW-FreeExpansion}%
\end{equation}
It is easy to see that this microwork is precisely equal to \ $(-\Delta
E_{k})$ as expected; see Eq. (\ref{micro-works}). It is also evident from Eq.
(\ref{P_k-FreeExpansion}) that for each $L$ between $L_{\text{in}}$ and
$L_{\text{fin}}$,
\[
P=%
{\textstyle\sum\nolimits_{k}}
p_{k}P_{k}=2E/L\neq0,
\]
We can use this average pressure to calculate the thermodynamic macrowork
\[
\Delta W=%
{\displaystyle\int\nolimits_{L_{\text{in}}}^{L_{\text{fin}}}}
PdL=2%
{\textstyle\sum\nolimits_{k}}
{\displaystyle\int\nolimits_{L_{\text{in}}}^{L_{\text{fin}}}}
p_{k}E_{k}dL/L\neq0.
\]
as expected. As $\Delta E=0$, this means that the irreversible macroheat and
macrowork are $\Delta Q=\Delta W>0$. This establishes that the expansion we
are studying is \emph{irreversible}.

We now turn to the entire system in which the work is done by $N_{\text{R}}$
particles. We need to think of the microstate index $k$ as an $N$-component
vector $\mathbf{k}=\left\{  k_{i}\right\}  $ denoting the indices for the
single-particle microstates. For a given $\xi$, we have $\Delta W_{\mathbf{k}%
}(L,\xi)=-%
{\textstyle\sum\nolimits_{i}}
\Delta E_{k_{i}}$, where $i$ runs over the $N_{\text{R}}$\ particles. We can
compute the macrowork, which turns out to be $\Delta W_{N}(\xi)=(N-\xi)\Delta
W>0$. The corresponding change in the free energy is
\begin{align*}
\Delta\overline{F}_{N}(L,\xi)  &  =(N-\xi)[\overline{F}(\beta_{0}%
,L)-\overline{F}(\beta_{0},L_{\text{in}})]\\
&  =-\Delta W_{N}(\xi),
\end{align*}
which is consistent with Eq. (\ref{Iso-Delta_W0}) for an isolated system for
any $\xi$.

At the end of $\mathcal{P}_{0}$, $\Delta W_{N}(0)=N\Delta W>0$, and
$\Delta\overline{F}_{N}(0)=N[\overline{F}(\beta_{0},L_{\text{fin}}%
)-\overline{F}(\beta_{0},L_{\text{in}})]$. We can now set up the MGF
$\mathcal{W}(\left.  \beta\right\vert \left\{  \Delta\mathsf{W}\right\}  )$
for any $L$ and $\xi$ so that we can compute all the moments. However, we will
only consider the entire process $\mathcal{P}_{0}$ so that $\xi=0$ at the end.
For the first moment in Eq. (\ref{MGF-FirstMoment}) we find that for the
isothermal expansion
\begin{equation}
\Delta W_{N}=-\Delta\overline{F}_{N}=T_{0}\Delta_{\text{i}}S_{N}>0,
\label{DW-DF-DiS}%
\end{equation}
after using Eq. (\ref{Iso-Delta_W_i}). The same result is also obtained from
the classical isothermal expansion; see Eq. (\ref{dS-dW-Free Expansion}). All
this is in accordance with Theorem \ref{Theorem-diW-diS-Isolated} in the
MNEQT, as expected.

For the discussion below, we suppress $N$ as the subscript for simplicity. The
benefit of using the $\mu$NEQT is that we get a much better perspective of the
dissipation in the free expansion. In terms of the SI microwork,
$\Delta_{\text{i}}W_{k}\neq0,\forall k$ even though the SI microwork
$\Delta_{\text{e}}W_{k}\equiv0,\forall k$. However, what is more revealing
about the free expansion is that $\Delta W_{k}>0$ for each Hamiltonian
trajectory $\gamma_{k}$ as seen from Eq. (\ref{DW-FreeExpansion}), which is in
accordance with Corollary \ref{Corollary-SpontaneousEnergyDecrease}. This is
not true in a general process. For example, $\Delta_{\text{i}}W=0$ in a
reversible process even though $\Delta_{\text{i}}W_{k}\neq0,\forall k$. In
fact, $\Delta_{\text{i}}W_{k}$ must be of either sign to ensure a vanishing
average. \ 

\subsection{Classical Free Expansion\label{Sec-ClassicalExpansion}}

The change in the phase space volume in classical statistical mechanics
destroys the required unique mapping in Eq. (\ref{Temporal mapping}) as
discussed in Sec. \ref{Sec-VoluemChange}, and causes a problem with the use of
the Hamiltonian trajectories that were used in deriving the $\mu$NEQT and
introducing the MGF $\mathcal{W}$. Their use for classical expansion will
requires some modification, which we describe below. To introduce the required
modification as simply as possible, we will first consider $V_{\text{fin}%
}=2V_{\text{in}}$ that results in doubling the volume after expansion; see
Fig. \ref{Fig_Expansion}(a). Later, we will generalize to any arbitrary
expansion/contraction. We will use the notation of Sec.
\ref{Sec-FreeExpansion-MNEQT}, and restrict ourselves to $\mathbf{Z}%
^{E}=\left(  V_{\text{in}},V_{\text{fin}},V,\xi\right)  $; see Eq.
(\ref{xi-free-expansion}) for $\xi$ and Fig. \ref{Fig_PhasePoint-Evolution}.

Let $\delta\mathbf{z}(t)\doteq\delta\mathbf{z}\left(  V,\xi\right)  $ denote a
microstate at some time $t$ with work variables $V\doteq V(t)$ and $\xi
\doteq\xi(t)$. Let $E_{k}(t)\doteq E(\delta\mathbf{z}(t))$ denote the
instantaneous microenergy of $\delta\mathbf{z}(t)$. Let $\delta\mathbf{z}$ be
some initial microstate at $t=0$ with $N_{\text{L}}=N,N_{\text{R}%
}=0,V=V_{\text{in}}$, and $\xi=\xi_{\text{in}}=N$. We denote the number of
microstates in the initial phase space that is denoted by the interior of the
solid red ellipse $\boldsymbol{\Gamma}_{\text{in}}=\boldsymbol{\Gamma
}_{\text{in}}(\mathbf{Z}_{\text{in}}^{E})$ on the left by $\mathcal{N}$. The
final phase space is shown by the solid red ellipse $\boldsymbol{\Gamma
}_{\text{fin}}^{\prime}=\boldsymbol{\Gamma}_{\text{fin}}^{\prime}%
(\mathbf{Z}_{\text{fin}}^{E})$ on the right, which contains twice as many
($2\mathcal{N}$) microstates as are in $\boldsymbol{\Gamma}_{\text{in}}$. We
will assume that both the initial and the final macrostates ($\mathcal{M}%
_{\text{in,eq}}$ and $\mathcal{M}_{\text{fin,eq}}$) are in EQ for which $V$
and $\xi$ are no longer independent state variables. Therefore, we will not
use $V$ and $\xi$\ in $\mathbf{Z}^{E}$ for the two phase spaces at $t=0$ and
$t=\tau_{\text{eq}}$. We set $P_{0}=0$ in Eq. (\ref{IrreversibleWork}) to
obtain the microwork for $t>0$
\[
dW_{k}=P_{k}dV+A_{k}d\xi,
\]
which is nonzero, while $d_{\text{e}}W_{k}=0$. The EQ gas at $t=0$ has
microstate probabilities
\ \ \ \ \ \ \ \ \ \ \ \ \ \ \ \ \ \ \ \ \ \ \ \ \ \ \ \ \ \ \ \ \ \ \ \ \ \ \ \ \ \ \
\begin{subequations}
\begin{align}
p_{\text{in}}(\delta\mathbf{z})  &  =e^{-\beta_{0}E(\delta\mathbf{z}%
)}/Z_{\text{in}}(\beta_{0},V_{\text{in}})\geq
0,\label{probability-dist-initial}\\
Z_{\text{in}}(\beta_{0},V_{\text{in}})  &  \doteq%
{\textstyle\sum\limits_{\delta\mathbf{z}\in\boldsymbol{\Gamma}_{\text{in}}}}
e^{-\beta_{0}E(\delta\mathbf{z})}. \label{PF-initial}%
\end{align}

Under $\mathcal{T}$, $\delta\mathbf{z}(V_{\text{in}})\in\boldsymbol{\Gamma
}_{\text{in}}$ maps onto its image (see the green double arrow $\gamma
_{\text{in}}$) $\delta\mathbf{z}^{\prime}(V_{\text{fin}})\in\boldsymbol{\Gamma
}_{\text{in}}^{\prime}=\boldsymbol{\Gamma}_{\text{in}}^{\prime}(V_{\text{fin}%
})$, where $\boldsymbol{\Gamma}_{\text{in}}^{\prime}$ is shown schematically
by the broken red ellipse on the right. Thus, $\boldsymbol{\Gamma}_{\text{in}%
}^{\prime}$ contains $\mathcal{N}$ microstates. We then pick a microstate
$\delta\mathbf{\zeta}^{\prime}(V_{\text{fin}})\in\boldsymbol{\Gamma
}_{\text{diff}}^{\prime}(V_{\text{fin}})\boldsymbol{\doteq\Gamma}_{\text{fin}%
}^{\prime}\backslash\boldsymbol{\Gamma}_{\text{in}}^{\prime}$; here,
$\boldsymbol{\Gamma}_{\text{diff}}^{\prime}$ denotes $\boldsymbol{\Gamma
}_{\text{fin}}^{\prime}$ from which $\boldsymbol{\Gamma}_{\text{in}}^{\prime}%
$\ has been removed so it also contains $\mathcal{N}$ microstates. Because of
the uniqueness of $\gamma_{\text{fin}}$, $\delta\mathbf{\zeta}^{\prime
}(V_{\text{fin}})\mathbf{\notin}\boldsymbol{\Gamma}_{\text{in}}^{\prime}$ is
the image of a microstate $\delta\mathbf{\zeta}(V_{\text{in}})\in
\boldsymbol{\Gamma}_{\text{diff}}\boldsymbol{\doteq\Gamma}_{\text{fin}%
}\backslash\boldsymbol{\Gamma}_{\text{in}}$, where $\boldsymbol{\Gamma
}_{\text{fin}}$ is shown by the broken ellipse on the left from which
$\boldsymbol{\Gamma}_{\text{in}}$ has been taken out to obtain
$\boldsymbol{\Gamma}_{\text{diff}}$. Again, $\boldsymbol{\Gamma}_{\text{diff}%
}$ contains $\mathcal{N}$ microstates. To find $\delta\mathbf{\zeta}$, we
follow the $inverse$ of $\gamma_{\text{fin}}$ along which $\mathbf{Z}%
_{\text{fin}}^{E}\rightarrow\mathbf{Z}_{\text{in}}^{E}$ ($V_{\text{fin}%
}\rightarrow V_{\text{in}}$); this is shown by the left arrow on
$\gamma_{\text{fin}}$ in accordance with the reversibility of $\mathcal{T}$.
The same number of microstates in $\boldsymbol{\Gamma}_{\text{diff}}$ and
$\boldsymbol{\Gamma}_{\text{diff}}^{\prime}$ ensures the uniqueness of
$\gamma_{\text{fin}}$, a prerequisite for Hamiltonian trajectories.
Physically, the $\mathcal{N}$ microstates in$\ \boldsymbol{\Gamma
}_{\text{diff}}$ corresponds to as if there are $N$ particles in the right
chamber in Fig. \ref{Fig_Expansion}(a), with the partition intact so that each
particle is confined in the volume $V$ of the right chamber only. But note
that there are no particles in the right chamber at $t=0$.

The situation was very simple for the uniqueness of $\gamma_{\text{fin}}$
because of the choice $V_{\text{fin}}=2V_{\text{in}}$: $\boldsymbol{\Gamma
}_{\text{diff}}^{\prime}$ and $\boldsymbol{\Gamma}_{\text{diff}}$ have the
same number of $\mathcal{N}$ microstates. This will not be the case if
$V_{\text{in}}<V_{\text{fin}}<2V_{\text{in}}$, since the number of microstates
$\delta\mathbf{\zeta}^{\prime}(V_{\text{fin}})$ in $\boldsymbol{\Gamma
}_{\text{diff}}^{\prime}$ is now $\mathcal{N}^{\prime}<\mathcal{N}$, while
$\boldsymbol{\Gamma}_{\text{diff}}$ has $\mathcal{N}$ microstates
$\delta\mathbf{\zeta}(V_{\text{in}})$. To identify the set $\left\{
\delta\mathbf{\zeta}(V_{\text{in}})\right\}  \subset\boldsymbol{\Gamma
}_{\text{diff}}$ of theses $\mathcal{N}^{\prime}$\ microstates, we follow the
inverse of $\gamma_{\text{fin}}$ for each $\delta\mathbf{\zeta}^{\prime
}(V_{\text{fin}})\in\boldsymbol{\Gamma}_{\text{diff}}^{\prime}(V_{\text{fin}%
})$ to identify the image $\delta\mathbf{\zeta}(V_{\text{in}})$ as described
above. From now on, we will use $\boldsymbol{\Gamma}_{\text{diff}}(\left.
V_{\text{in}}\right\vert V_{\text{fin}})$ to denote the required set of
$\mathcal{N}^{\prime}$\ microstates.\ The $\mathcal{N}-\mathcal{N}^{\prime}$
remaining microstates in $\boldsymbol{\Gamma}_{\text{diff}}$ are superfluous.
We can similarly extend the above discussion to $V_{\text{fin}}=3V_{\text{in}%
}$, and then to $2V_{\text{in}}<V_{\text{fin}}<2V_{\text{in}}$, and so on.

With the above understanding of $\boldsymbol{\Gamma}_{\text{diff}}(\left.
V_{\text{in}}\right\vert V_{\text{fin}})$, there will be no confusion to
simplify the notation and use $\boldsymbol{\Gamma}_{\text{diff}}\ $for it.
With this understanding, $\boldsymbol{\Gamma}_{\text{fin}}$ as the union of
$\boldsymbol{\Gamma}_{\text{in}}$\ and $\boldsymbol{\Gamma}_{\text{diff}}$
only contains the required $\mathcal{N}+\mathcal{N}^{\prime}$ microstates.
This will be understood below.

Let us pick two microstates $\delta\mathbf{z}(V_{\text{in}})\in
\boldsymbol{\Gamma}_{\text{in}}$ and $\delta\mathbf{\zeta}(V_{\text{in}}%
)\in\boldsymbol{\Gamma}_{\text{diff}}$. Their image microstates $\delta
\mathbf{z}^{\prime}(V_{\text{fin}})\in\boldsymbol{\Gamma}_{\text{in}}^{\prime
}$ and $\delta\mathbf{\zeta}^{\prime}(V_{\text{fin}})\in\boldsymbol{\Gamma
}_{\text{diff}}^{\prime}$ are obtained by the deterministic evolution along
$\gamma_{\text{in}}=\gamma(\delta\mathbf{z})$ and $\gamma_{\text{fin}}%
=\gamma(\delta\mathbf{\zeta})$, respectively. The corresponding microworks are
given by%
\end{subequations}
\begin{equation}%
\begin{array}
[c]{c}%
\Delta W(\delta\mathbf{z})=-(E(\delta\mathbf{z}^{\prime})-E(\delta
\mathbf{z})),\\
\Delta W(\delta\mathbf{\zeta})=-(E(\delta\mathbf{\zeta}^{\prime}%
)-E(\delta\mathbf{\zeta})).
\end{array}
\label{MicroWork-PhaseSpace}%
\end{equation}
However, the situation is still not fully resolved as $\delta\mathbf{\zeta
(Z}_{\text{in}}^{E})\in\boldsymbol{\Gamma}_{\text{diff}}$ does not represent a
physical initial microstate in the left chamber; instead, it refers to a
microstate in the right chamber so its probability vanishes:
\[
p_{\text{in}}(\delta\mathbf{\zeta})=0,\delta\mathbf{\zeta}\in
\boldsymbol{\Gamma}_{\text{diff}}.
\]

We recall that we do not need $p_{k}$\ to determine $\Delta W_{k}$\ so it does
not matter if $p_{\text{in}}(\delta\mathbf{\zeta})=0$. During expansion,
$p(\delta\mathbf{\zeta})$ at $t>0$ is not going to remain zero. Therefore, we
\emph{formally assume} that the initial probability distribution
$p_{\text{in}}(\delta\mathbf{\zeta})$ is \emph{infinitesimally small} for
$\delta\mathbf{\zeta}$\ by shifting its initial energy by a very large
positive contribution
\begin{equation}
E(\delta\mathbf{\zeta})\rightarrow E(\delta\mathbf{\zeta})+e(\delta
\mathbf{\zeta})/\varepsilon>0,\delta\mathbf{\zeta}\in\boldsymbol{\Gamma
}_{\text{diff}}\text{ at }t=0\label{Energy-Infinite}%
\end{equation}
using an infinitesimal quantity $\varepsilon>0$. At the end of the
calculation, the limit $\varepsilon\rightarrow0^{+}$ will be taken to ensure
$p_{\text{in}}(\delta\mathbf{\zeta})\overset{\varepsilon\rightarrow0^{+}%
}{\rightarrow}0$. This trick of energy shift transforms $\mathcal{W}$ as
\begin{align}
\mathcal{W}(\left.  \dot{\beta},\varepsilon\right\vert \Delta\mathsf{W}) &
\rightarrow%
{\textstyle\sum\limits_{\delta\mathbf{z}\in\boldsymbol{\Gamma}_{\text{in}}}}
p(\gamma_{\text{in}})e^{\dot{\beta}\Delta W(\delta\mathbf{z})}\nonumber\\
&  +%
{\textstyle\sum\limits_{\delta\mathbf{\zeta}\in\boldsymbol{\Gamma
}_{\text{diff}}}}
p(\gamma_{\text{fin}})e^{\dot{\beta}\Delta W(\delta\mathbf{\zeta}%
)};\label{Shifted-W}%
\end{align}
however, the energy shift leaves $\Delta W(\delta\mathbf{\zeta})$\ invariant.
Thus, we see that letting $\varepsilon\rightarrow0^{+}$ makes the second term
vanish so $\mathcal{W}(\left.  \dot{\beta},\varepsilon\right\vert
\Delta\mathsf{W})\overset{\varepsilon\rightarrow0^{+}}{\rightarrow}%
\mathcal{W}(\left.  \dot{\beta}\right\vert \Delta\mathsf{W})$. As
$\mathcal{W}$ and all the moments remain invariant, the \emph{trick} (energy
shift followed by $\varepsilon\rightarrow0^{+}$) does not affect
thermodynamics in anyway. In particular, it leaves the first law unaffected.
We will verify this below by direct manipulation.

It is clear that the two sums in Eq. (\ref{Shifted-W}) can be expressed as a
single sum over \emph{all} microstates $\delta\mathbf{\bar{z}}\in
\boldsymbol{\Gamma}_{\text{fin}}$, which refers to microstates in
$\boldsymbol{\Gamma}_{\text{fin}}$. For example, the shifted initial partition
function also remains invariant under the trick:%
\begin{equation}
Z_{\text{in}}(\beta_{0},V_{\text{in}},\varepsilon)\doteq%
{\textstyle\sum\limits_{\delta\mathbf{\bar{z}}\in\boldsymbol{\Gamma
}_{\text{fin}}}}
e^{-\beta_{0}E(\delta\mathbf{\bar{z}})}\overset{\varepsilon\rightarrow0^{+}%
}{\rightarrow}Z_{\text{in}}(\beta_{0},V_{\text{in}}). \label{Limit-PF}%
\end{equation}
This is consistent with Theorem \ref{Theorem-p_k-in-independence}. Thus, we
can focus on $\boldsymbol{\Gamma}_{\text{fin}}$ from the start instead of
$\boldsymbol{\Gamma}_{\text{in}}$. This allows us to basically use the
identity mapping $\mathcal{T}$ between initial microstates $\delta
\mathbf{\bar{z}}\in\boldsymbol{\Gamma}_{\text{fin}}$ and final microstates
$\delta\mathbf{\bar{z}}^{\prime}\in\boldsymbol{\Gamma}_{\text{fin}}^{\prime
}\ $using Hamiltonian trajectories $\gamma$ as $(V_{\text{in}},\xi_{\text{in}%
})\rightarrow(V_{\text{fin}},\xi_{\text{fin}}=0)$. In the following, we will
use $\left\{  \delta\mathbf{\bar{z}}\right\}  $ and $\left\{  \delta
\mathbf{\bar{z}}^{\prime}\right\}  $ to show these sets. We will use $\left\{
\overline{\gamma}(t)\right\}  $ to denote the ensemble of Hamiltonian
trajectories at any time $t$.

We can combine the two equations in Eq. (\ref{MicroWork-PhaseSpace}) in a
single equation $\Delta W(\delta\mathbf{\bar{z}})=-(E(\delta\mathbf{\bar{z}%
}^{\prime})-E(\delta\mathbf{\bar{z}}))$, with $E(\delta\overline{\mathbf{z}})$
playing the role of $E_{k}$ along $\overline{\gamma}$. Let us consider
$\overline{\mathcal{W}}$ for $\mathcal{P}_{0}$
\[
\overline{\mathcal{W}}(\left.  \beta_{0}\right\vert \left\{  \Delta
\mathsf{W}\right\}  )\doteq\lim_{\varepsilon\rightarrow0^{+}}%
{\textstyle\sum\nolimits_{\delta\mathbf{\bar{z}}\in\boldsymbol{\Gamma
}_{\text{fin}}}}
p_{\text{eq}}(\overline{\gamma})e^{\beta_{0}\Delta W(\delta\mathbf{\bar{z}}%
)}.
\]
Recalling Eqs. (\ref{trajectory prob-cond}) and
(\ref{probability-dist-initial}), the summand becomes $p_{\text{eq}}(\left.
\delta\mathbf{\bar{z}}^{\prime}\right\vert \delta\mathbf{\bar{z}}%
)e^{-\beta_{0}E(\delta\mathbf{\bar{z}}^{\prime})}/Z_{\text{in}}(\beta
_{0},V_{\text{fin}},\varepsilon)$, in which $e^{\beta_{0}E(\delta
\mathbf{\bar{z}})}$ from $p_{\text{in,eq}}(\delta\mathbf{\bar{z}})$ exactly
cancels with $e^{\beta_{0}E(\delta\mathbf{\bar{z}})}$ coming from $\Delta
W(\delta\mathbf{\bar{z}})$. As the conditional probability $p(\left.
\delta\mathbf{\bar{z}}^{\prime}\right\vert \delta\mathbf{\bar{z}})$ is not
affected by $p_{\text{in}}(\delta\mathbf{\bar{z}})$, taking the limit is
trivial as it affects only $Z_{\text{in}}(\beta_{0},V_{\text{fin}}%
,\varepsilon)$ in accordance with Eq. (\ref{Limit-PF}) so the above summand
converges to $p(\left.  \delta\mathbf{\bar{z}}^{\prime}\right\vert
\delta\mathbf{\bar{z}})e^{-\beta_{0}E(\delta\mathbf{\bar{z}}^{\prime}%
)}/Z_{\text{in}}(\beta_{0},V_{\text{in}})$. Thus,%
\begin{equation}
\overline{\mathcal{W}}(\left.  \beta_{0}\right\vert \Delta\mathsf{W})=%
{\textstyle\sum\limits_{\delta\mathbf{\bar{z}}\in\boldsymbol{\Gamma
}_{\text{fin}}}}
\frac{p_{\text{eq}}(\left.  \delta\mathbf{\bar{z}}^{\prime}\right\vert
\delta\mathbf{\bar{z}})e^{-\beta_{0}E(\delta\mathbf{\bar{z}}^{\prime})}%
}{Z_{\text{in}}(\beta_{0},V_{\text{in}})}. \label{MFF-ClassicalGas}%
\end{equation}

Before proceeding further, we wish to confirm that the inclusion of
microstates in $\Gamma_{\text{diff}}$ does not violate the first law in MNEQT
by focusing on $\left\langle \Delta\mathsf{W}\right\rangle $. We consider a
microstate $\delta\mathbf{\bar{z}}(t)$ along an infinitesimal segment
$d\gamma$ on $\gamma$ between $t>0$ and $t+dt$. Over this segment,
$\mathbf{Z}^{E}(t)=(V,\xi)\rightarrow\mathbf{Z}^{E}(t+dt)=(V+dV,\xi+d\xi)$.
Thus,%
\[
dW(t)=-%
{\textstyle\sum\limits_{\left\{  \delta\mathbf{\bar{z}}(t)\right\}  }}
p(\delta\mathbf{\bar{z}}(t))[E(\mathbf{Z}^{E}(t+dt))-E(\mathbf{Z}^{E}(t))].
\]
Introducing $dp(t)\doteq$ $p(\delta\mathbf{\bar{z}}(t+dt))-p(\delta
\mathbf{\bar{z}}(t))$, and $dE(t)=E(t+dt)-E(t)$, we have
\begin{align*}
dW(t)  &  =-dE(t)+%
{\textstyle\sum\limits_{\left\{  \delta\mathbf{\bar{z}}(t)\right\}  }}
dp(t)E(\mathbf{Z}^{E}(t+dt))\\
&  \simeq-dE(t)+dQ(t),
\end{align*}
where we have neglected the second-order term $dp(t)dE(\mathbf{Z}^{E}(t))$ as
is a common practice and used Eq. (\ref{Macro-Heats}) to identify $dQ$. Thus,
for any $t>0$, we have satisfied Eq. (\ref{FirstLaw-SI}). To consider $t=0$,
we need to take the limit $\varepsilon\rightarrow0^{+}$, which limits the sum
in $dW$ above to $\delta\mathbf{\bar{z}}\in\boldsymbol{\Gamma}_{\text{in}}$.
This ensures that Eq. (\ref{FirstLaw-SI}) remains satisfied at $t=0$ with our
modification in Eq. (\ref{Energy-Infinite}). Thus, we have established that
our trick of using $\forall\delta\mathbf{\bar{z}}\in\boldsymbol{\Gamma
}_{\text{fin}}$ is consistent with the MNEQT. It follows then that the
discussion here in the $\mu$NEQT will finally result in Eq.
(\ref{Entropy-Change-Eq}) once we realize that $dE=0$ so that $dW=TdS$ as in
Eq. (\ref{dS-dW-Free Expansion}).

We now consider $\widehat{\mathcal{W}}(\left.  \beta_{0}\right\vert
\Delta\mathsf{W})$, which can be exactly evaluated. For this, we set
$p_{\text{eq}}(\left.  \delta\mathbf{\bar{z}}^{\prime}\right\vert
\delta\mathbf{\bar{z}})=1$ above and use the $1$-to-$1$ mapping $\mathcal{T}%
:\delta\mathbf{\bar{z}\rightarrow}\delta\mathbf{\bar{z}}^{\prime}$ to replace
the sum over $\delta\mathbf{\bar{z}}$ to a sum over $\delta\mathbf{\bar{z}%
}^{\prime}$ to obtain
\begin{equation}
\widehat{\mathcal{W}}(\left.  \beta_{0}\right\vert \Delta\mathsf{W})=%
{\textstyle\sum\limits_{\delta\mathbf{\bar{z}}^{\prime}\in\boldsymbol{\Gamma
}_{\text{fin}}^{\prime}}}
\frac{e^{-\beta_{0}E(\delta\mathbf{\bar{z}}^{\prime})}}{Z_{\text{in}}%
(\beta_{0},V_{\text{in}})}=\frac{Z_{\text{fin}}(\beta_{0},V_{\text{fin}}%
)}{Z_{\text{in}}(\beta_{0},V_{\text{in}})}, \label{AverageExpWork3-1}%
\end{equation}
which is precisely Eq. (\ref{GujratiRelation0}). Notice that the initial EQ
macrostate in proving the above relation corresponds to the one with all the
particles confined in the left chamber; see, however, \cite{KestinNote}.

Above, we have considered the case of free expansion. The same trick will also
work if the expansion is gradual and not abrupt. The only difference will be
that $V(t)$ will not be an internal variable as it is controlled externally.
We still will need the trick of inserting "missing" microstates as above. By
interchanging the role of the initial and final phase spaces above, we can
also use the trick to investigate the case of contraction. The only difference
is that in the last two cases, the system is not isolated so we must make a
distinction between $dW$ (or $\Delta W$) and $d_{\text{i}}W$ (or
$\Delta_{\text{i}}W$).

If it happens that the phase space volume $\left\vert \Gamma\right\vert $
continues to change during a process but $\left\vert \Gamma_{\text{in}%
}\right\vert =\left\vert \Gamma_{\text{fin}}\right\vert $, such as a cyclic
process, we can treat it as a combination of expansion and contractions
processes.\ To see this, we look for the time $t_{\text{m}}$ when $\left\vert
\Gamma\right\vert $ is maximum (minimum). Then, we are dealing with expansion
(contraction) over $t>t_{\text{m}}$, and contraction (expansion) over
$t>t_{\text{m}}$. The same approach of a combination of expansion/contraction
can be taken when $\left\vert \Gamma\right\vert $ does not change
monotonically during a process, whether $\left\vert \Gamma_{\text{in}%
}\right\vert =\left\vert \Gamma_{\text{fin}}\right\vert $ or not.

\section{Discussion and Conclusion\label{Sec-Conclusion}}

The current investigation was motivated by a desire to understand the
following two very important aspects of NEQ thermodynamics at the microstate level

\begin{enumerate}
\item how to use Hamiltonian trajectories to describe phase space volume
changes in a process to construct a microstate NEQ thermodynamics of a system,
\emph{interacting or not}, and

\item the nature of microworks that give rise to the dissipated work in a
\emph{noninteracting}, \textit{i.e.,} an isolated system for which no exchange
of macroheat and macrowork is allowed ($\Delta_{\text{e}}Q=0,\Delta_{\text{e}%
}W=0$),
\end{enumerate}

as these issues have not been addressed in the literature but lie at the heart
of the many common NEQ processes including free expansion taught to
undergraduates in macroscopic thermodynamics. Recently, we have developed a
macroscopic and microscopic SI NEQ\ thermodynamics (the MNEQT and the $\mu
$NEQT) that directly include the macroforce imbalance $\mathbf{F}_{\text{t}}$
or the microforce imbalance $\mathbf{F}_{\text{t,}k}^{\text{w}}$, an important
concept introduced recently by us, to ensure describing an interacting and a
noninteracting system within the same framework. As discussed in Sec.
\ref{Sec-Prolog}, no other current NEQ thermodynamics directly captures the
microforce imbalance. In its absence ($\mathbf{F}_{\text{t,}k}^{\text{w}%
}=0,\forall k$), a situation common in various MFTs noted in Sec.
\ref{Sec-Prolog} and discussed in Sec. \ref{Sec-WFT}, there cannot be any work
irreversibility so we cannot overemphasize its importance for any NEQ processes.

The phase space $\Gamma$ of a classical system with finite parameter
$\mathbf{Z}$ has a finite volume $\left\vert \Gamma\right\vert $. In
particular, this requires $\left\{  E_{k}\right\}  $ to have only
\emph{finite} number of values. This is the case for all numerical simulations
so our approach here provides a useful approach to carry out numerical
simulation of a finite system.

Let us consider the following two cases that can arise.

1. If the volume $\left\vert \Gamma\right\vert $ does not change at all during
a process ($\left\vert \Gamma_{\text{in}}\right\vert =\left\vert
\Gamma_{\text{fin}}\right\vert $), the finite number of microstates
$\mathcal{N\doteq}\left\vert \mathfrak{m}\right\vert $ also does not change,
and one can always use the identity property of the Hamiltonian trajectory in
Eq. (\ref{Temporal mapping}) to follow the temporal evolution of each
microstate $\mathfrak{m}_{k}$ along $\gamma_{k}$.

2. When the volume $\left\vert \Gamma\right\vert $ changes monotonically
during a process, which is a common situation such as during expansion or
contraction of a classical system, then $\mathcal{N}_{\text{in}}%
\mathcal{\doteq}\left\vert \mathfrak{m}_{\text{in}}\right\vert \neq
\mathcal{N}_{\text{fin}}\mathcal{\doteq}\left\vert \mathfrak{m}_{\text{fin}%
}\right\vert $. Then these microstates cannot be connected $1$-to-$1$ by
Hamiltonian trajectories. The excess microstates $\mathcal{N}_{\text{diff}%
}\doteq\left\vert \mathfrak{m}_{\text{diff}}\right\vert \ $between $\left\{
\mathfrak{m}_{\text{in}}\right\}  $ and $\left\{  \mathfrak{m}_{\text{fin}%
}\right\}  $ will have no Hamiltonian trajectories\ associated with them as
seen in Fig. \ref{Fig_PhasePoint-Evolution} for the case of expansion. This is
when our trick in Sec. \ref{Sec-ClassicalExpansion} will be useful. We
introduce excess microstates in the phase space with the smaller number of
microstates but with extremely high positive microenergy shift to ensure that
$\left\vert \mathfrak{m}\right\vert $ is the same in the initial and final
macrostates. This trick is crucial and enables us to use only Hamiltonian
trajectories to connect the microstates in $\Gamma_{\text{in}}$ and
$\Gamma_{\text{fin}}$ in a $1$-to-$1$ manner. Eventually, the microenergy of
the missing microstates is taken to diverge to $+\infty$ to ensure their
probabilities vanish in the phase space of smaller volume. However, if
$\left\vert \Gamma\right\vert $ changes nonmonotonically during the process
even if $\left\vert \Gamma_{\text{in}}\right\vert =\left\vert \Gamma
_{\text{fin}}\right\vert $, then, as discussed in Sec.
\ref{Sec-ClassicalExpansion}, we are dealing with a combination of expansion
and contraction so it belongs to this case.

The above argument suggests that when phase space volume $\left\vert
\Gamma\right\vert $ is not finite, the Hamiltonian trajectories will not work,
since there is no way to argue that $\mathcal{N}_{\text{in}}\neq
\mathcal{N}_{\text{fin}}$ when both are infinitely large. This is not correct.
Let us assume that $\left\vert \mathfrak{m}\right\vert $ is infinitely large
but the set $\left\{  \mathfrak{m}_{k}\right\}  $ is denumerable as is the
case with the quantum expansion in Sec. \ref{Sec-QuantumExpansion}. We still
have $1$-to-$1$ Hamiltonian trajectories connecting $\mathfrak{m}%
_{k\text{,in}}\ $with $\mathfrak{m}_{k\text{,fin}}$,$\forall k\in\mathbb{N}$,
where $\mathbb{N}$ is the set of natural numbers, \textit{i.e.} \{1,2,3,\ldots
\}. It is this property that allows us to introduce the $\mu$NEQT in Sec.
\ref{Sec-muNEQT} and the MGF in Sec. \ref{Sec-MGF0}. This is true despite the
sets $\left\{  \mathfrak{m}_{k\text{,in}}\right\}  $ and $\left\{
\mathfrak{m}_{k\text{,fin}}\right\}  $\ having "equal" but infinite large
number of microstates so we can say that $\left\vert \mathfrak{m}_{\text{in}%
}\right\vert \equiv\left\vert \mathfrak{m}_{\text{fin}}\right\vert $.

In the classical expansion/contraction, the equality of the numbers of
microstates is clearly not valid despite the fact that they both have the same
cardinality as that of $\mathbb{N}$. What is important for the cardinality
consideration is the idea of association and not the unique $1$-to-$1$
mapping. This is the same when we compare $\mathbb{N}$ with the set
$\mathbb{N}_{\text{o}}$ of odd natural numbers, \textit{i.e.} \{1,3,5,\ldots\}
or the set $\mathbb{N}_{\text{e}}$ of even natural numbers, \textit{i.e}.
\{2,4,6,\ldots\}. All these sets have the same cardinality, but $\mathbb{N}$
has twice as many members as either of the sets $\mathbb{N}_{\text{o}}$ or
$\mathbb{N}_{\text{e}}$. It is the number of members that is important in the
mapping in Eq. (\ref{Temporal mapping}).

To understand it, we proceed as follows. Think of $\mathbb{N}$ as an example
of the set of microstates in the interior of the dotted ellipse $\Gamma
_{\text{fin}}$, $\mathbb{N}_{\text{o}}$ as an example of the set of
microstates in the interior of the solid ellipse $\Gamma_{\text{in}}$, and
$\mathbb{N}_{\text{e}}$ as an example of the set of microstates in the
difference $\boldsymbol{\Gamma}_{\text{diff}}$ of the two ellipses
$\Gamma_{\text{fin}}$ and $\Gamma_{\text{in}}$. We can similarly introduce
$\mathbb{N}^{\prime}$ for the interior of the solid ellipse $\Gamma
_{\text{fin}}^{\prime}$, $\mathbb{N}_{\text{o}}^{\prime}$ for the interior of
the broken ellipse $\Gamma_{\text{in}}^{\prime}$, and $\mathbb{N}_{\text{e}%
}^{\prime}$ for the difference $\boldsymbol{\Gamma}_{\text{diff}}^{\prime}$.
Under the $1$-to-$1$ mapping, $\mathbb{N}_{\text{o}}$ goes to $\mathbb{N}%
_{\text{o}}^{\prime}$, with no mapping leading to $\mathbb{N}_{\text{e}%
}^{\prime}$. This clearly shows that the concept of cardinality is irrelevant
as noted in Sec. \ref{Sec-VoluemChange}. Now we can use our trick introducing
excess microstates to ensure $\mathcal{N}_{\text{in}}=\mathcal{N}_{\text{fin}%
}$. Using the MGF $\mathcal{W}$, we have also established that the trick does
not destroy the consistency of the $\mu$NEQT with the MNEQT so the trick is
thermodynamically consistent. It is clear that the trick is very general and
will work in all cases of \emph{missing} microstates.

The most important signature of a NEQ process is the existence of
(generalized) thermodynamic force $\mathbf{F}_{\text{t}}$, see Eq.
(\ref{Generalized-Thermodynamic-Force}), which drives the system towards EQ.
It is common to study a NEQ isothermal process ($T=T_{0}\Rightarrow
F_{\text{t}}^{\text{h}}=0$) so we must look for a nonvanishing $\mathbf{F}%
_{\text{t}}^{\text{w}}$ in this case, regardless of whether the system is
isolated or interacting. In the former case, $\mathbf{F}_{\text{t}}^{\text{w}%
}$ reduces to $\mathbf{A}$ as seen in Eq. (\ref{EntropyDiff-Isolated}), and
contains $\mathbf{A}$ in the latter case. The presence of $\mathbf{F}%
_{\text{t}}^{\text{w}}$ in a NEQ system, including the free expansion that we
are interested in, is necessary along with the presence of internal variables.
Both of these features are absent in MFTs so they provide no guide in our
investigation as discussed in Sec. \ref{Sec-Prolog}. We are therefore left to
use our recently developed NEQ thermodynamics, which contains both features.
For the sake of continuity, we have provided a brief review of the MNEQT and
the $\mu$NEQT in Secs. \ref{Sec-MNEQT} and \ref{Sec-muNEQT}. As $\mathbf{F}%
_{\text{t,}k}^{\text{w}}$ is ubiquitous (it is present even in EQ), it is
clear that its absence in the MFTs and their inability to treat free expansion
make them very different from the $\mu$NEQT.

We now discuss some of the important results of our analysis .

The SI microwork $\Delta W_{k}$ along $\gamma_{k}$ is shown to be given by the
negative of the microenergy change $\Delta E_{k}$ so it is unaffected by a
constant shift proportional to $\varepsilon^{-1}$ in $E_{k}$. Moreover, the
addition of missing microstates does not change the MGF $\mathcal{W}(\left.
\beta\right\vert \Delta\mathsf{W})$. This function is important to justify
that the trick leaves our approach consistent with thermodynamics. The
presence of internal variables makes the MNEQT and $\mu$NEQT perfectly suited
to study any system, interacting or isolated. The success of our theory to
study the behavior of a Brownian particle \cite{Gujrati-LangevinEq} also shows
that they can also be used to study small systems.

We have shown that the best way to understand the origin of dissipated work is
to focus on $dW$ and $dQ$, with $d_{\text{i}}W=d_{\text{i}}Q\geq0$, as is done
in the MNEQT and $\mu$NEQT. In an isolated system, $dW=d_{\text{i}%
}W=dQ=d_{\text{i}}Q$. Therefore, only $d_{\text{i}}W_{k}$ resulting from
$\mathbf{F}_{\text{t,}k}^{\text{w}}$\ needs to be determined in the $\mu$NEQT,
which is easier to do with the use of the Hamiltonian trajectories. The
presence of $\mathbf{F}_{\text{t,}k}^{\text{w}}$ and internal variables for
isolated systems finally explains the source of irreversibility ($d_{\text{i}%
}S\geq0$). In their absence, the isolated system will be in EQ and will show
no dissipation.

As we have already applied the $\mu$NEQT to small-scale systems (Brownian
particles) in Ref. \cite{Gujrati-LangevinEq}, it is possible to extend the
expansion/contraction of the gas investigated here to study Maxwell's demon
and the problem of Landauer's eraser. We hope to come to these problems in future.

Finally, we mention some of the important predictions of our approach. We
restrict ourselves to free expansion here. While in general, $dW_{k}%
=d_{\text{i}}W_{k}$ can have any sign even though $d_{\text{i}}W\geq0$, we
must have $dW_{k}\geq0$ in free expansion in accordance with Corollary
\ref{Corollary-SpontaneousEnergyDecrease}, which is a general result for any
number of internal variables. Therefore, this general prediction can always be
used to validate an experiment or computation that is performed to obtain the
microwork distribution $\left\{  \Delta W_{k}\right\}  $. One should not see
any negative $\Delta W_{k}$. By determining $\Delta W$ and comparing it with
$\Delta W_{\text{isoth}}=-\Delta\overline{F}=T_{0}\Delta S$, see Eqs.
(\ref{Iso-Delta_W0}, \ref{Iso-Delta_W_i}), we can determine whether the
process is isothermal or not; see Eq. (\ref{Isolated-Dissipated-Macrowork}).

\begin{quote}
Declarations of interest: none
\end{quote}

\end{document}